\newtheorem{stw}[thm]{Proposition}
\theoremstyle{definition}
\newtheorem{defin}[thm]{Definition}
\newcommand{\subf}{\trianglelefteq}
\newcommand{\diam}{\diamondsuit}
\newcommand{\pair}[2]{\langle #1, #2 \rangle}
\newcommand{\set}[2]{\{#1 \ \ | \ \ #2 \}}
\newcommand{\supp}{\textrm{supp}}
\newcommand{\trip}[3]{\langle #1, #2, #3 \rangle}
\newcommand{\quadr}[4]{\langle #1, #2, #3, #4 \rangle}
\newcommand{\form}{{\mathcal{L}}_\mu}
\newcommand{\was}[1]{\{#1\}}
\newcommand{\nat}{\mathbb{N}}
\newcommand{\pargame}{\quadr{V}{V_{\exists}}{R}{\Omega}}
\newcommand{\tpaths}[1]{t(\phi)^*}
\newcommand{\conc}{^\frown}
\newcommand{\evalg}[1]{\mathcal{G}_{#1,\mathcal{K}}}
\newcommand{\PM}{\textrm{PM}}
\newcommand{\Win}{\textnormal{Win}}
\newcommand{\CTL}{\textnormal{CTL}}
\newcommand{\LTL}{\textnormal{LTL}}
\newcommand{\Atoms}{\mathbb{A}}
\newcommand{\aut}{{\textrm{Aut}}}
\newcommand{\freshpath}{\textsc{\#Path}}
\newcommand{\infsucc}{\textsc{InfSucc}}
\newcommand{\chain}{\textsc{Chain}}
\newcommand{\pred}{{\textrm{pred}}}
\newcommand{\aform}{\mathcal{L}_{\mu}^{\Atoms}}
\newcommand{\eqaform}{\mathcal{L}_{\mu}^{=}}
\newcommand{\ordaform}{\mathcal{L}_{\mu}^{<}}
\newcommand{\K}{{\mathcal{K}}}
\newcommand{\step}{\longrightarrow}
\newcommand{\sembr}[1]{[\![#1]\!]}
\newcommand{\pop}{{\textrm{pop}}}
\newcommand{\vform}{{\vec{\mathcal{L}}}_\mu}
\newcommand{\vaform}{{\vec{\mathcal{L}}}_{\mu}^{\Atoms}}
\newcommand{\veqaform}{{\vec{\mathcal{L}}}_{\mu}^{=}}
\newcommand{\vordaform}{{\vec{\mathcal{L}}}_{\mu}^{<}}
\newcommand{\evensucc}{\textsc{EvenSucc}}
\newcommand{\power}{\mathcal{P}}
\newcommand{\gsp}{\blacktriangleright}
\begin{document}

\title[Scalar and Vectorial $\mu$-calculus with Atoms]{Scalar and Vectorial $\mu$-calculus with Atoms${}^{\dagger}$}
\thanks{${}^{\dagger}$\ This is a revised and extended version of the conference paper~\cite{csl17}.}

\author[Bartek Klin]{Bartek Klin${}^*$}
\address{University of Warsaw, Dept.~of Mathematics, Informatics and Mechanics}	
\email{klin@mimuw.edu.pl}  
\thanks{${}^*$ Supported by the Polish NCN grant 2012/07/E/ST6/03026.}

\author[Mateusz \L{}e\l{}yk]{Mateusz \L{}e\l{}yk${}^{**}$}
\address{University of Warsaw, Dept. of Philosophy and Sociology}	
\email{mlelyk@uw.edu.pl}  
\thanks{${}^{**}$ Partially supported by the Polish NCN grant 2016/21/B/ST6/01505.}

\subjclass{F.4.1 Mathematical Logic--Temporal logic, D.2.4 Software/Program Verification--Model checking}
\keywords{modal $\mu$-calculus, sets with atoms}
	
\begin{abstract}
We study an extension of modal $\mu$-calculus to sets with atoms and we study its basic properties. Model checking is decidable on orbit-finite structures, and a correspondence to parity games holds. On the other hand, satisfiability becomes undecidable. We also show expressive limitations of atom-enriched $\mu$-calculi, and explain how their expressive power depends on the structure of atoms used, and on the choice between basic or vectorial syntax. 
\end{abstract}	
	
\maketitle

\section{Introduction}\label{sect_intro}

Modal $\mu$-calculus~\cite{kozen-mu,niw-mu,BS01,BS07} is perhaps the best known formalism for describing properties of labeled transition systems or Kripke models.
It combines a simple syntax with a mathematically elegant semantics and it is expressive enough to specify many interesting properties of systems. For example, the property ``[in the current state] the predicate $p$ holds, and there exists a transition path where it holds again some time in the future'' is defined by a $\mu$-calculus formula:
\[
	p \land \diam\mu X.(p\lor\diam X).
\]
Other similar formalisms, such as the logic CTL$^*$~\cite{CTLstar}, can be encoded in the modal $\mu$-calculus.
On the other hand, decision problems such as model checking (``does a given $\mu$-calculus formula hold in a given (finite) model?'') or satisfiability (``does a given formula hold in some model?'') are decidable.

Formulas of the $\mu$-calculus are built over some fixed vocabulary of basic predicates, such as $p$ above, whose semantics is provided in every model. In principle this vocabulary may be infinite, but this generality is hardly useful: since a formula of the $\mu$-calculus is a finite object, it may only refer to finitely many basic predicates.

In modelling systems, this finiteness may sometimes seem restrictive. Real systems routinely operate on data coming from potentially infinite domains, such as numbers or character strings. Basic predicates observed about a system may reasonably include ones like ``a number $n$ was input'', denoted here $p_n$, for every number $n$. If one considers properties such as ``there exists a transition path where the currently input number is input again some time in the future'', one is in trouble writing finite formulas to define them. Were infinitary connectives allowed in the formalism, the formula
\[
	\bigvee_{n\in\mathbb{N}}\left(p_n \land \diam\mu X.(p_n\lor\diam X)\right)
\]
would do the job; however, few good properties of the $\mu$-calculus transport to a naively construed infinitary setting. In particular, an obvious obstruction to any kind of decidability results is that a general infinitary formula is not readily presented as input to an algorithm.

In this paper we introduce $\mu$-calculus with atoms, where infinitary propositional connectives are allowed in a restricted form that includes formulas such as the one above. Roughly speaking, basic properties in formulas, and indeed whole Kripke models, are assumed to be built of {\em atoms} that come from a fixed infinite structure. Out of many possible structures of atoms, we concentrate on two: {\em equality atoms}, which come from a pure infinite set without any relations, and {\em ordered atoms}, where a total ordering on the atoms can be used to build Kripke models and formulas.

Boolean connectives in $\mu$-formulas are indexed by sets that are potentially infinite but are finitely definable in terms of atoms. This makes formulas finitely presentable. 

Several basic properties of the standard $\mu$-calculus hold in the atom-based setting. Syntax and semantics of the calculus is defined in a standard way. The model checking problem remains decidable, although this is not trivial, as formulas and models are now, strictly speaking, infinite. A correspondence between $\mu$-formulas and parity games with atoms holds. 

Some other properties of the classical calculus fail. The satisfiability problem becomes undecidable, and the atom-based generalization of the finite model property fails. A vectorial extension of the $\mu$-calculus, which in the classical setting is expressively equivalent to the basic ``scalar'' variant, here becomes strictly more expressive. 

Atomic $\mu$-calculi also turn out to be less expressive than might be expected. In particular, the property \freshpath{}, which says ``there exists a path where no basic predicate holds more than once'', although decidable, is not definable. This means that, unlike in the classical setting, an atomic extension of CTL${}^*$ (where explicit quantification over paths makes such properties easy to define) is not a fragment of the atomic $\mu$-calculus. As it turns out, for atomic CTL${}^*$ even the model-checking problem is undecidable.

Our approach is a part of a wider programme of extending various computational models to sets with atoms, which are a generalization of nominal sets~\cite{pitts-book}. For example, in~\cite{lmcs14} the classical notion of finite automaton was reinterpreted in the universe of sets with atoms, with a result related to register automata~\cite{FK94,NSV01}, an established model of automata over infinite alphabets. 

Temporal logics over structures extended with data from infinite domains, and their connections to various types of automata, have been extensively studied in the literature. For example, the linear time logic LTL has been extended with a {\em freeze quantifier}~\cite{DL09,DLN07,FS09,Seg06} which, for structures where every position is associated with a single data item, can store the current item for future reference. This can serve as a mechanism for detecting repeated data values (see also~\cite{DDG12,DFP16}). Another known idea is to extend temporal logics with local constraints over data from a fixed infinite domain, see e.g.~\cite{CL15,DDS07}. In~\cite{Fig12,JL11}, alternating register automata on data words and trees were studied, closely connected to $\mu$-calculi.

In all these works, the main goal is to study the decidability border for satisfiability (or nonemptiness, in the case of automata). To that end, the authors impose various, sometimes complex restrictions on their logics or automata, and inevitably limit their expressiveness to some extent. In contrast to that, our aim is to lift the classical modal $\mu$-calculus to data-equipped structures in the most syntactically economic way, and to achieve a relatively expressive formalism. As a price for this, the satisfiability problem quickly becomes undecidable. However, we believe that this does not disqualify the atomic $\mu$-calculus as a practical formalism: most applications of temporal logics in system verification rely only on solving the model checking problem, and that remains decidable here on a wide class of structures.

It would be easy to give up even the decidability of model checking. This was done in~\cite{PBEGW15}, in a setting very similar to ours, by extending a basic multimodal logic with infinitary boolean connectives subject only to a finite support condition. The resulting logic has few good properties except its huge expressive power (indeed, it can immediately encode our atomic $\mu$-calculus, and much more): its set of formulas is not even countable. Instead, our boolean connectives are subject to a more restrictive condition of orbit finiteness, an idea first used in~\cite{BP12} in the context of first-order logic with atoms.

Another branch of related work is centered around algebras of name-passing processes such as the $\pi$-calculus, and variants of the $\mu$-calculus aimed at specific transition systems induced by those algebras. This line of work was started in~\cite{Dam96}, where a version of the $\mu$-calculus for model checking properties of $\pi$-calculus processes was proposed, with a sound-and-complete proof system. Other efforts in this direction, resulting in logics fine-tuned to specific infinite models, include~\cite{FGMP03,Lin05}, and a more abstract calculus was proposed in~\cite{DNL08}. 

The closest related work in the literature is the {\em first-order $\mu$-calculus} of Groote et al.~\cite{GM99,GW05}, where the classical $\mu$-calculus is extended with quantification over data values from arbitrary infinite data domains, and with fixpoints parametrized by data. Our infinitary boolean connectives can be seen as a syntactic variant of quantification over data values, and indeed our atomic $\mu$-calculi in their more expressive, vectorial form can be seen as variants of the first-order $\mu$-calculus specialized to very particular, well-behaved infinite data domains. The main difference is in focus: while~\cite{GM99,GW05} aim for an expressive formalism over rich data domains, with little hope for general decidability results and with a focus on pragmatic usability, we insist on keeping model-checking decidable and delineate expressiveness limitations that arise as a result.

This work is a revised and extended version of the conference paper~\cite{csl17}. There, the basic (``scalar'') atomic $\mu$-calculus over equality atoms $\eqaform$ was introduced. Here we extend that definition to ordered atoms, and consider a vectorial extension of atomic $\mu$-calculus over both structures of atoms, resulting in four calculi altogether. As it turns out, each of the four has a different expressive power, but (in addition to the obviously similar definitions of syntax and semantics) they share several basic properties, including the decidability of model checking.

The structure of this paper is as follows. In Sect.~\ref{sect_modal_mu_and_related} we briefly recall the modal $\mu$-calculus and related logics in the classical setting. In Sect.~\ref{sect_atoms_basic} we recall the basics of sets with atoms, including the notions of finite support and orbit finiteness, focusing on two important cases: equality atoms and ordered atoms. In Sect.~\ref{sect_mu_atoms} we introduce the syntax and semantics of the atomic $\mu$-calculus both in a basic ``scalar'' and in a vectorial version, for both these structures of atoms. In Sects.~\ref{sect_model_checking}--\ref{sect_bisimulations} we show various basic properties of atomic $\mu$-calculi, including the decidability of model checking, and a correspondence to parity games and atomic bisimulation games. In Sect.~\ref{sect_undecidability} we focus on undecidability results. In Sect.~\ref{sect_separation} we define properties of Kripke models that separate the four calculi we study in this paper, and in Sect.~\ref{sect_limitations} we formulate a decidable and natural property that is undefinable in all of them. A brief list of future work directions is in Sect.~\ref{sect_conclusions}.

\noindent
{\bf Acknowledgments.} We are very grateful to A.~Facchini for collaboration in initial stages of this work,  to M.~Boja\'nczyk, W.~Czerwi\'nski, P.~Hofman, S.~Lasota, S.~Toru\'nczyk and B.~Wcis\l{}o for valuable discussions, and to anonymous reviewers for their thorough work and numerous insightful comments.


\section{$\mu$-calculus and related logics}\label{sect_modal_mu_and_related}
To fix the notation and terminology, we begin by recalling basic definitions and properties of the $\mu$-calculus in the classical setting. For a more detailed exposition see e.g.~\cite{niw-mu,BS01,BS07,yde,Thomas2004-THOALA-3,BraWal15}.

\begin{defin}[Syntax]\label{def_syntax}
		Let $\mathbb{P}$ be an infinite set of {\em basic predicates} and $\mathbb{X}$ an infinite set of \emph{variables}. The set $\form$ of $\mu$-calculus formulas is generated by the grammar:
		\[\phi::= p \mid X \mid \phi\vee\phi \mid \neg \phi\mid \diam\phi\mid \mu X.\phi \]
		where $p$ ranges over $\mathbb{P}$ and $X$ over $\mathbb{X}$. We only allow formulas $\mu X.\phi$ where $X$ occurs only positively in $\phi$ (i.e. under an even number of negations).
	\end{defin}

 We use the following standard abbreviations: 
 \begin{multicols}{2}
\begin{itemize}
\item $ \top := p\lor\neg p$,
\item $\phi\wedge \psi := \neg(\neg\phi\vee\neg\psi)$,
\item $\Box\phi := \neg\diam\neg\phi$,
\item $\nu X. \phi := \neg\mu X.\neg\phi[X:=\neg X]$.
\end{itemize}
\end{multicols}

 \begin{defin}[Kripke model]\label{def_krip_mod}
A \textit{Kripke model} is a triple 
\[
	\K = \trip{K}{\step^{\K}}{\models^{\K}}
\]
 such that
\begin{itemize}
\item $K$ is a set of {\em states},
\item ${\step^{\K}}\subseteq K\times K$ is a {\em transition relation},
\item ${\models^{\K}}\subseteq K\times \mathbb{P}$ is the {\em satisfaction relation} for basic predicates.
\end{itemize}
\end{defin}

Superscripts in $\step^{\K}$ and $\models^{\K}$ will be often omitted when no risk of confusion arises. As a convention, for a model denoted by a calligraphic letter such as $\K$, its set of states will be denoted by the corresponding italic letter.

Sometimes it will be notationally convenient to speak of the set of basic predicates that hold in a state of a model:
\[
	\pred^{\K}(x) = \{p\in\mathbb{P} \mid x\models^{\K}p\}.
\]
Again, the superscript will usually be omitted.

\begin{defin}[Semantics]\label{def_speln_stand}
Formulas of $\mu$-calculus are interpreted in the context of a Kripke model $\K$ and an environment, i.e., a partial function $\rho:\mathbb{X}\rightharpoonup {\mathcal{P}}(K)$. For any formula $\phi$, the interpretation $\sembr{\phi}_\rho\subseteq K$ is defined by induction:
\begin{itemize}
\item $\sembr{p}_\rho =  \{x\in K \mid x\models p\}$,
\item $\sembr{X}_\rho = \rho(X)$,
\item $\sembr{\neg{\phi}}_\rho = K\setminus \sembr{\phi}_\rho$,
\item $\sembr{\phi\vee\psi}_\rho = \sembr{\phi}_\rho\cup\sembr{\psi}_\rho$,
\item $\sembr{\diam \phi}_\rho =  \set{k\in K}{\exists s\in\sembr{\phi}_\rho.\ k\step s}$,
\item $\sembr{\mu X.\phi}_\rho = \mathrm{lfp}(F), \text{where } F(A)=\sembr{\phi}_{\rho[X\mapsto A]}$.
\end{itemize}
In the last clause, the least fixpoint is taken for a function $F:{\mathcal{P}}(K)\to{\mathcal{P}}(K)$. Subsets of $K$ form a complete lattice and (thanks to the positivity assumption in Definition~\ref{def_syntax}) the function $F$ is monotone, so by Tarski's theorem the least fixpoint exists and it arises as the union of an increasing chain of approximants:
\[
    A^0\subseteq A^1\subseteq A^2 \subseteq \cdots A^\omega\subseteq A^{\omega+1}\subseteq \cdots
\]
indexed by ordinal numbers, where
\begin{itemize}
\item $A^{\alpha+1}=F(\alpha)$ and
\item $A^\beta = \bigcup_{\alpha<\beta}A^\alpha$ for a limit ordinal $\beta$.
\end{itemize}
In particular, $A^0=\emptyset$.

We write $\sembr{\phi}$ (or $\sembr{\phi}^{\K}$ if $\K$ is less clear from context) instead of $\sembr{\phi}_\rho$ if $\rho$ is empty (i.e., nowhere defined).
We say that $\phi$ {\em holds} in a state $x\in K$ if $x\in\sembr{\phi}$ and denote it $x\models \phi$; this clearly agrees with the satisfaction relation $\models$ for basic predicates.
\end{defin}

\begin{exa}
The formula $\mu X. (p\lor \diam X)$ holds for a state $x$ in a Kripke model $\mathcal{K}$ if and only if there is a finite path from $x$ to a state where $p$ holds. The formula $\nu X.(\Box X \wedge \mu Y. (p \lor \Box Y))$ holds in those states $x$ where $p$ holds infinitely often on every path from $x$.
\end{exa}

For applications in system verification, the following two problems are considered:

\noindent
{\bf Model checking:} given a finite Kripke model $\K$, a state $x\in K$ and a formula $\phi\in\form$, decide if $x\models\phi$.

\noindent
{\bf Satisfiability:} given a formula $\phi\in\form$, decide if there exists a Kripke model $\K$ and a state $x\in K$ such that $x\models\phi$.

It is very easy to see that model checking is decidable: in a finite Kripke model, one can compute the semantics of a formula inductively, directly from the definition. A more efficient procedure can be derived from a correspondence of $\mu$-calculus with parity games.

Some well-known logics used in system verification can be translated into fragments of the $\mu$-calculus. We give two important examples:

\begin{defin}[$\CTL^*$]\label{def:ctl}
In the logic $\CTL^*$ we distinguish state formulas $\Phi$ and path formulas $\phi$, formed according to the following grammar:
\[
\Phi ::= p \mid \Phi\vee\Phi \mid \neg\Phi \mid \exists\phi \qquad\qquad
\phi ::= \Phi \mid \phi\vee\phi \mid \neg\phi \mid \phi {\sf U}\phi \mid {\sf X}\phi
\]
where $p$ comes from some fixed set of propositional variables.
\end{defin}
Standard notational conventions include: 
 \begin{multicols}{2}
\begin{itemize}
\item $\forall\phi = \neg\exists\neg\phi$, 
\item $\phi {\sf R}\psi = \neg(\neg\phi {\sf U}\neg\psi)$, 
\item ${\sf F}\phi = \top {\sf U} \phi$, 
\item ${\sf G}\phi = \neg {\sf F}\neg\phi$.
\end{itemize}
\end{multicols}

$\CTL^*$ formulas are interpreted in Kripke models $\K$: state formulas are interpreted over states, and path formulas over paths. In the following definition of a satisfaction relation $\models$, for a path $\pi=\langle x_0,x_1,x_2,\ldots\rangle$, by $\pi[n..]$ we denote the subpath starting at $x_n$.
\begin{itemize}
\item ${x}\models \exists\phi\iff$ for some path $\pi$ starting at $x$, $\pi\models\phi$.
\item $\pi\models \Phi\iff \pi[0]\models \Phi$.
\item $\pi\models {\sf X}\phi\iff \pi[1..]\models \phi$
\item $\pi\models \phi {\sf U} \psi \iff$ there exists $j\geq 0$ s.t.~$\pi[j..]\models \psi$ and for all $i<j$, $\pi[i..]\models \phi$.
\end{itemize}
Boolean connectives are interpreted as expected.

The logic $\LTL$ is the fragment of $\CTL^*$ where the symbol $\exists$ does not occur, with semantics inherited from $\CTL^*$.
Usually LTL (where the distinction between state and path formulas disappears) is interpreted in models that are infinite paths. Slightly more generally (and more conveniently for our purposes), we will interpret them over pointed {\em deterministic} Kripke models, i.e., ones where for each $x\in K$ there is exactly one $y\in K$ such that $x\step y$. This makes little difference, since in such a model every state uniquely determines an infinite path.

Sometimes it is convenient to consider a syntactic extension of the $\mu$-calculus to a {\em vectorial} form. There, the fixpoint construct $\mu X.\phi$ is generalized to one of the form
\begin{align}\label{eq:vectorial}
	\phi = \mu X_i.\left\{\begin{array}{rl}X_1. &\!\!\phi_1 \\ \vdots \\ X_n. &\!\!\phi_n\end{array}\right\} 
\end{align}
where $\{X_1,\ldots,X_n\}\subseteq\mathbb{X}$ (where we assume $X_i\neq X_j$ for $i\neq j$) is any finite set of variables, $X_i$ is some element of that set, and $\phi_1,\ldots,\phi_n$ are formulas where all of $X_1,\ldots,X_n$ are considered bound (and, as before, occur only positively). 

There are a few intuitive notions concerning the body of the formula in~\eqref{eq:vectorial}, and it will be useful to keep them in mind when we extend the vectorial $\mu$-calculus with atoms in Section~\ref{sect_mu_atoms}. An expression 
\[	
	X_j.\phi_j
\]
may be seen as an {\em equation} (where the $X_j$ is the left-hand side, and the $\phi_j$ the right-hand side), and the set
\[
\Phi = \left\{X_j.\phi_j\right\}_{j=1..n}
\]
is a {\em system of equations} where each variable appears on the left-hand side exactly once. Alternatively, it can be seen as a function
from the finite set $\{X_j\}_{j=1..n}$ to the set of formulas. The variable $X_i$ in $\phi$, which we call the {\em entry variable} of the fixpoint formula, must belong to that finite set. Then, striving for syntax that fits in a single line of text, the formula in~\eqref{eq:vectorial} may be rewritten as:
\begin{align}\label{eq:vectorial2}
	\phi = \mu X_i.\Phi = \mu X_i.\left\{X_j.\phi_j\right\}_{j=1..n}.
\end{align}
 
The set of vectorial $\mu$-calculus formulas will be denoted by $\vform$. When we want to distinguish the original calculus $\form$ from the vectorial one, we will call it the {\em scalar $\mu$-calculus}.

For the semantics of a formula $\phi$ as above, in the context of a model $\K$ and an environment $\rho:\mathbb{X}\rightharpoonup {\mathcal{P}}(K)$, consider a function
\[
	F:(\mathcal{P}(K))^n\to(\mathcal{P}(K))^n
\]
defined by:
\[
	\left(F(A_1,\ldots,A_n)\right)_j = \sembr{\phi_j}_{\rho[X_1\mapsto A_1,\ldots,X_n\mapsto A_n]} \quad \text{for } j=1,\ldots,n,
\]
then, for $(B_1,\ldots,B_n)$ the least fixed point of $F$, define
\[
	\sembr{\phi}_{\rho} = B_i.
\]
It is not difficult to see that for $n=1$ this specializes to the original semantics in Definition~\ref{def_speln_stand}.

In the classical setting, vectorial modal $\mu$-calculus is expressively equivalent to the scalar one, i.e., every formula can be rewritten to a semantically equivalent scalar form, using the so-called Beki\'c principle. The vectorial calculus is therefore mostly considered for convenience, as its formulas seem to be more succint, and they correspond more closely to alternating tree automata. As we shall see in Section~\ref{sect_separation}, in the atomic setting the difference is more significant, and adding vectorial formulas will extend the expressivity of atomic $\mu$-calculus.

\section{Sets with atoms}\label{sect_atoms_basic}
We now recall the basic notions and results concerning sets with atoms, beginning with the case of {\em equality atoms}, also known as nominal sets~\cite{pitts-book}. There are several essentially equivalent ways to introduce these; we follow the set-theoretic presentation of~\cite{gabbay-pitts}, culminating in the notion of orbit-finite sets~\cite{lmcs14,pitts-book} and computable operations on them. Then we recall a more general setting, advocated in~\cite{lmcs14}, where atoms are equipped with a nontrivial relational structure, in particular a total order.

\begin{rem}
When developing a theory of "objects with atoms" (like automata with atoms or modal $\mu$-calculus with atoms) there seems to be two natural ways to proceed: one can either (i) stick to the universe of (sufficiently well-behaved) sets with atoms and work exclusively within this universe (which corresponds to working in a formal system such as ZFA), or (ii) work in a broader universe of all sets and within this universe define what does it mean to be a (sufficiently well-behaved) set with atom. We regard the second strategy as more convenient to our immediate goals so will consistently follow it.
\end{rem}

\subsection{Equality atoms}

Fix a countably infinite set $\Atoms$, whose elements we shall call {\em atoms}. A bijection on $\Atoms$ will be called an {\em atom automorphism}, and the group of atom automorphisms is denoted $\aut(\Atoms)$. 

Loosely speaking, a set with atoms is a set that can have atoms, or other sets with atoms, as elements. Formally, the universe $\mathcal{U}^{\Atoms}$ of sets with atoms is defined by a von Neumann-like hierachy, by transfinite induction on ordinal numbers $\alpha$:
\[
\mathcal{U}^{\Atoms}_0 = \emptyset, \qquad\qquad
\mathcal{U}^{\Atoms}_{\alpha+1} = \mathcal{P}(\mathcal{U}^{\Atoms}_{\alpha}) + \Atoms, \qquad\qquad
\mathcal{U}^{\Atoms}_{\beta} = \bigcup_{\alpha<\beta}\mathcal{U}_{\alpha}^{\Atoms}\quad\text{for $\beta$ a limit ordinal},
\]
where $+$ denotes disjoint union of sets.

We are interested in sets that only depend on a finite number of atoms, in the following sense.
Atom automorphisms 
act on the universe $\mathcal{U}^{\Atoms}$ by consistently renaming all atoms in a given set. Formally, this is again defined by transfinite induction. This defines a group action: 
\[
	\_\cdot\_: {\mathcal{U}^{\Atoms}}\times\aut(\Atoms) \to {\mathcal{U}^{\Atoms}}.
\]
For a finite set $S\subset\Atoms$, let $\aut_S(\Atoms)$ be the group of those automorphisms of $\Atoms$ that fix every element of $S$.
We say that $S$ {\em supports} a set $x$ if $x\cdot\pi=x$ for every $\pi\in\aut_S(\Atoms)$. A set is {\em equivariant} if it is supported by the empty set. If $x$ has a finite support then it has the least finite support (see~\cite[Cor.~9.4]{lmcs14} for a proof), denoted $\supp(x)$.

\begin{rem}
In~\cite{gabbay-pitts,pitts-book} a slightly different variant of equality atoms is developed, where $\aut(\Atoms)$ is taken to be the group of {\em finite} bijections, i.e. those that fix almost all atoms. As argued in~\cite[Sec.~6.2]{pitts-book}, this difference is irrelevant as far as finitely supported sets are concerned: exactly the same finite supports exist under both definitions. In the context of finite bijections as atom automorphisms, existence of least supports was first proved in~\cite[Prop.~3.4]{gabbay-pitts}. We allow all atom bijections as automorphisms since this approach generalizes more naturally to richer atom structures: anticipating ordered atoms (see~Section~\ref{sec:ordered-atoms} below), note that there are no nontrivial finite monotone bijections on the total order of rational numbers.
\end{rem}

From now on, we shall only consider sets with atoms that are {\em hereditarily finitely supported}, i.e., ones that have a finite support, whose every element has some finite support and so on.

Relations, functions etc.~are sets in the standard sense, so the notions of support and equivariance applies to them as well. Unfolding the definitions, for equivariant sets $X$ and $Y$, a relation $R\subseteq X\times Y$ is equivariant if 
\[
\pair{x}{y}\in R \text{ implies } \pair{x\cdot\pi}{y\cdot\pi}\in R
\]
and a function $f:X\to Y$ is equivariant if 
\[
   f(x\cdot\pi) = f(x)\cdot\pi
\]
for every $\pi\in\aut(\Atoms)$.

For any $x$ with atoms, the {\em $S$-orbit} of $x$ is the set
$\{x\cdot\pi \mid \pi\in\aut_S(\Atoms)\}$.
Note that if $S$ supports $x$ then the $S$-orbit of $x$ is the singleton $\{x\}$. We shall write $x\sim_S y$ to say that $x$ and $y$ belong to the same $S$-orbit, and we shall omit the subscript and write simply $x\sim y$ for empty $S$. Similarly, an $\emptyset$-orbit will simply be called an orbit.

For any $S$, $S$-orbits form a partition of the universe $\mathcal{U}$. Moreover, for any $S$-supported set $X$, the $S$-orbits of its elements form a partition of $X$. We call such $X$ {\em $S$-orbit-finite} if it is a union of finitely many $S$-orbits. If $S\subseteq T$ are finite, $S$ supports $X$ and $X$ is $S$-orbit-finite, then ($T$ supports $X$ and) $X$ is also $T$-orbit-finite. Thanks to this observation, we may drop the qualifier and simply call $X$ {\em orbit-finite}, meaning ``$S$-orbit-finite for any/every $S$ that supports $X$''. It is not difficult to check that every finitely supported subset of an orbit-finite set is orbit-finite.

\begin{exa}\label{exa:sets-with-atoms}
\begin{itemize}
\item Any classical set (without atoms) is an equivariant set with atoms. Since its every element is also equivariant, it forms its own orbit. Therefore, a classical set is orbit-finite if and only if it is finite.
\item An atom $a\in\Atoms$ has no elements, and it is supported by $\{a\}$. Every finite set of atoms $S\subseteq\Atoms$ is supported by $S$, and every element of it forms a singleton $S$-orbit. Its complement $\Atoms\setminus S$ is also supported by $S$, and it is a single $S$-orbit. A subset of $\Atoms$ that is neither finite nor co-finite is not finitely supported, so we do not consider it a legal set with atoms.
\item The set $\Atoms$ of
atoms, the set ${\Atoms}\choose{2}$ 
of two-element sets of atoms, and the set $\Atoms^2$ of all ordered pairs of atoms, are equivariant sets. The first two have a single orbit each, and the last one is a union of two orbits:
\[
	\Atoms^2 = \{\pair{a}{a}\mid a\in\Atoms\} \cup \{\pair{a}{b}\mid a\neq b\in\Atoms\}.
\]
Similarly, $\Atoms^n$ is orbit-finite for every $n\in\mathbb{N}$. The set $\Atoms^*$ of finite sequences of atoms is hereditarily finitely supported, but not orbit-finite. 
\item The powerset $\mathcal{P}(\Atoms)$ is equivariant itself, but it contains elements that are not finitely supported, and therefore it is not considered a legal set with atoms. However, for every legal set with atoms $X$, the set $\mathcal{P}_{\text{fs}}(X)$ of {\em finitely supported} subsets of $X$ has the same support as $X$ and is a legal set of atoms.
\item There are four equivariant binary relations on $\Atoms$: the empty relation, the equality relation, the inequality relation and the full relation. 
\item 
There is no equivariant function from ${\Atoms}\choose{2}$ to $\Atoms$, but 
$
	\{\pair{\{a,b\}}{a} \mid a\neq b\in\Atoms\}
$
is a legal equivariant relation, and the function constant at an atom $a$ is supported by $\{a\}$. The only equivariant function from $\Atoms$ to $\Atoms$ is identity. The only equivariant functions from $\Atoms^2$ to $\Atoms$ are the two projections, and the only equivariant function from $\Atoms$ to $\Atoms^2$ is the diagonal $a\mapsto\pair{a}{a}$.
\end{itemize}
\end{exa}

Orbit-finite sets, although usually infinite, can be presented by finite means and are therefore amenable to algorithmic manipulation. There are a few ways to do this. In~\cite{lmcs14} (with the idea going back to~\cite{CM10}), it was observed that every single-orbit equivariant set is in an equivariant bijection with a set of $k$-tuples of distinct atoms, suitably quotiented by a subgroup $G\leq \text{Sym}(k)$ of the symmetric group on $k$ elements. Thus such a set can be presented by a number $k$ and the finite group $G$, and an orbit-finite set is a formal union of such presentations. A somewhat more readable and concise scheme was used in~\cite{lics15}, where orbit-finite sets are presented by set-builder expressions of the form
\begin{align}\label{eq:set-repr}
	\{e \mid v_1,\ldots,v_n\in\Atoms, \phi\}
\end{align}
where $e$ is again an expression, $v_i$ are bound atom variables and $\phi$ is a first-order formula 
with equality.
We refer to~\cite{lics15} for a precise formulation (and to~\cite{JO16} for a proof that all orbit-finite sets can be presented this way, up to an equivariant bijection); suffice it to say that the expressions in Example~\ref{exa:sets-with-atoms} are of this form, and other similar expressions are allowed. The set defined by such an expression is supported by the atoms that appear freely in the expression.
\begin{rem}\label{rem:computability}
Using this representation, several basic operations on orbit-finite sets are computable, including:
\begin{itemize}
\item union and intersection of sets, cartesian product, set difference,
\item checking whether two sets are equal, or whether one is an element (or a subset) of the other,
\item applying an orbit-finite function to an argument, composing functions or relations, 
\item finding the direct image of a set along a function, finding the image of a subset along a relation, etc.
\item checking whether a finite set $S$ supports a given set, calculating the least support of a set, 
\item partitioning a given set into $S$-orbits, calculating the $S$-orbit of a given element.
\end{itemize}
\end{rem}
These basic operations have been implemented as components of atomic programming languages~\cite{nlambda,lois}.

The following easy but useful fact says that there are not too many orbit-finite sets or finitely supported relations (or functions) between them.

\begin{prop}\label{prop:finite-equiv}
For any finite $S\subseteq T\subseteq \Atoms$,
\begin{enumerate}
\item For any $n\in\mathbb{N}$, there are only finitely many (up to equivariant bijection) $n$-orbit sets supported by $S$.
\item For any orbit-finite set $X$ supported by $S$, there are only finitely many subsets of $X$ supported by $T$.
\item For any orbit-finite sets $X,Y$ supported by $S$, there are only finitely many relations between $X$ and $Y$ supported by $T$.
\end{enumerate}
\end{prop}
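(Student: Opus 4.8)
The plan is to establish the three claims in the order (2), (3), (1): the combinatorial core is the same throughout, (3) is an immediate instance of (2), and (1) is the only item requiring a genuine idea. Everything rests on one minimality property of orbits: if $O$ is a single $T$-orbit and $A\subseteq O$ is supported by $T$, then $A$ is closed under $\aut_T(\Atoms)$, so as soon as $A$ contains one element $a$ it contains its whole $T$-orbit $\{a\cdot\pi\mid\pi\in\aut_T(\Atoms)\}=O$; hence $A$ is either $\emptyset$ or $O$. A single $T$-orbit therefore has exactly two $T$-supported subsets.

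For (2), recall from the discussion above that since $S\subseteq T$ and $S$ supports the orbit-finite set $X$, the set $X$ is also $T$-orbit-finite, say a disjoint union of $T$-orbits $O_1,\dots,O_m$. A subset $A\subseteq X$ is supported by $T$ exactly when it is closed under $\aut_T(\Atoms)$, i.e.\ when it is a union of some of the $O_j$; by the minimality property this gives a bijection between $T$-supported subsets of $X$ and subsets of $\{1,\dots,m\}$, so there are exactly $2^m$ of them. For (3), the product $X\times Y$ of orbit-finite sets is again orbit-finite (a basic closure property, cf.\ Remark~\ref{rem:computability}) and is supported by $S$; since a relation between $X$ and $Y$ supported by $T$ is nothing but a $T$-supported subset of $X\times Y$, claim (3) is exactly (2) applied to $X\times Y$.

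For (1), I would first reduce to counting single orbits. An $n$-orbit set supported by $S$ is the disjoint union of its $n$ orbits, and any equivariant bijection carries orbits to orbits and preserves least supports; hence two such sets are equivariantly bijective precisely when their orbits coincide, as a multiset, up to equivariant bijection. It therefore suffices to bound the number of single $S$-orbits up to equivariant bijection. As recalled above, each such orbit is an equivariant quotient of the distinct-$k$-tuples $\Atoms^{(k)}$ by a subgroup $G\le\sym(k)$, where $k$ is the least-support size of its elements, together with the pattern recording how those $k$ atoms relate to the fixed atoms of $S$. For fixed $k$ all of this is finite data, and since $S$ is finite it contributes only finitely many further choices; multisets of size $n$ drawn from a finite set of orbit-types are again finite. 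The one non-routine step — and the place I would be most careful — is controlling the dimension $k$: the classification is finite precisely once the support size of elements is bounded, since otherwise the orbits $\Atoms,\Atoms^{(2)},\Atoms^{(3)},\dots$ already give infinitely many equivariant types. I would therefore single out the relevant bound on $k$ and derive (1) from it, the rest being the bookkeeping described above.
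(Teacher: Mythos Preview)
Your arguments for (2) and (3) are correct and essentially identical to the paper's: a $T$-supported subset of $X$ is a union of $T$-orbits, and (3) is (2) applied to $X\times Y$.

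For (1), however, you have put your finger on a genuine problem---and it is not a gap to be filled but a counterexample. You write that you would ``single out the relevant bound on $k$'', but no such bound exists: your own list $\Atoms,\Atoms^{(2)},\Atoms^{(3)},\dots$ already exhibits infinitely many pairwise non-equivariantly-bijective single-orbit equivariant sets (an equivariant bijection preserves least supports, and elements of $\Atoms^{(k)}$ have least support of size~$k$). So statement~(1), read literally, is false for $S=\emptyset$ and $n=1$. The paper's proof has the same defect: it invokes the representation of a single orbit as $\Atoms^{(k)}/G$ with $G\le\sym(k)$, which for fixed $k$ gives finitely many types, but never bounds $k$ in terms of $S$---and indeed it cannot. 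Fortunately, only parts~(2) and~(3) are used later in the paper (in the model-checking argument), so the error in~(1) does not propagate. Your instinct to be suspicious here was exactly right; the conclusion to draw is that~(1) needs an additional hypothesis (e.g.\ a bound on the support size of elements) to be true.
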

\begin{proof}
Part (1) follows from a stronger representation result mentioned above~\cite{lmcs14}, which says that every single-orbit set is a quotient of the set of all distinct tuples of atoms of a fixed length. For $n>1$, every $n$-orbit set is obviously just an $n$-tuple of single-orbit sets.

For part (2), notice that $X$ has only finitely many $T$-orbits, and a subset of $X$ supported by $T$ must arise by selecting a subset of these orbits.

Part (3) follows directly from part (2) as if $X$ and $Y$ are supported by $S$, then $X\times Y$ is supported by $S$.
\end{proof}

\subsection{Ordered atoms}\label{sec:ordered-atoms}

Intuitively, orbit-finite structures with atoms are those that can be built of atoms using equality as the only relation between them. As advocated in~\cite{lmcs14}, it is possible and useful to consider atoms with more structure. Perhaps the most significant case is that of atoms being rational numbers with the ordering relation.

Technically, this amounts to taking $\Atoms=\mathbb{Q}$ and putting $\aut(\Atoms)$ to be the group of automorphism of the total order $\mathbb{Q}$, i.e., the group of monotone bijections on rational numbers. All other definitions follow without further change. The basic results mentioned above for equality atoms hold here as well, although the fact that the least support $\supp(x)$ exists, requires a different proof (see~\cite[Cor.~9.5]{lmcs14}). In particular, operations listed in Remark~\ref{rem:computability} remain computable.

Since $\aut(\Atoms)$ is now a subgroup of the automorphism group considered for equality atoms, it is easy to see that all legal sets with equality atoms remain legal for ordered atoms. However, new legal sets, functions and relations appear.

\begin{exa}
\begin{itemize}
\item For any $a<b\in\Atoms$, the open interval $(a;b)\subseteq\Atoms$ becomes legal, at it is finitely supported by $S=\{a,b\}$. The interval forms a single $S$-orbit. The closed interval $[a;b]$ is also supported by $S$, and it comprises three $S$-orbits: $\{a\}$, $\{b\}$ and the open interval. In general, finitely supported subsets of $\Atoms$ are exactly those that are finite unions of open intervals and single points.
\item $\Atoms$ and ${\Atoms}\choose{2}$ remain equivariant single-orbit sets, but the set of ordered pairs $\Atoms^2$ now decomposes into three orbits:
\[
	\Atoms^2 = \{\pair{a}{a}\mid a\in\Atoms\} \cup \{\pair{a}{b}\mid a< b\in\Atoms\}  \cup \{\pair{a}{b}\mid a> b\in\Atoms\}.
\]
Nevertheless, $\Atoms^n$ remains orbit-finite for every $n$.
\item 
There are now two equivariant functions from ${\Atoms}\choose{2}$ to $\Atoms$: the minimum and maximum. The only equivariant function from $\Atoms$ to $\Atoms$ is still identity. There are four equivariant functions from $\Atoms^2$ to $\Atoms$: the two projections, minimum and maximum, but the only equivariant function from $\Atoms$ to $\Atoms^2$ remains the diagonal $a\mapsto\pair{a}{a}$.
\end{itemize}
\end{exa}

\noindent
Remark~\ref{rem:computability} and Proposition~\ref{prop:finite-equiv} remain true for ordered atoms.

\section{$\mu$-calculus with atoms}\label{sect_mu_atoms}
In this section we present our central definition: an extension of the classical $\mu$-calculus to sets with atoms. We begin by considering the scalar calculus; the vectorial calculus is defined towards the end of this section.

\subsection{Syntax}

Syntactically, the (scalar) $\mu$-calculus with atoms (or {\em atomic $\mu$-calculus}) is simply an extension of the (scalar) classical formalism with orbit-finite propositional connectives.

\begin{defin}
Let $\mathbb{P}$ be an equivariant set with atoms of basic propositions, and let $\mathbb{X}$ be an equivariant set with atoms of variables. The set $\aform$ of formulas of the atomic $\mu$-calculus is generated by the following grammar:
\[
    \phi::= p \mid X \mid \bigvee{\Phi} \mid \neg \phi\mid \diam\phi\mid \mu X.\phi 
\]
where $p$ ranges over $\mathbb{P}$, $X$ ranges over $\mathbb{X}$ and $\Phi$ ranges over \emph{orbit-finite} sets of formulas. As usual we only allow $\mu X.\phi$ where $X$ occurs only positively in $\phi$.  
\end{defin} 

The above definition is parametrised by a basic structure $\Atoms$ of atoms, and in particular it makes sense for both equality atoms and ordered atoms. When we need to specify and distinguish between these two cases, we will denote the equality-atomic $\mu$-calculus by $\eqaform$, and the ordered-atomic $\mu$-calculus by $\ordaform$.

The ``orbit-finite set of formulas'' in the definition refers to a canonical action of $\aut(\Atoms)$ on formulas, extending the action on $\mathbb{P}$ (and on $\mathbb{X}$, if it is nontrivial) inductively. Note that, despite ostensibly infinite disjunctions, every formula in $\aform$ has a finite depth, since two formulas in the same orbit necessarily have the same depth. Thanks to this, no need arises for transfinite induction in reasoning about the syntax of $\aform$ formulas.

We use the same syntactic conventions as in the classical case:
 \begin{multicols}{2}
\begin{itemize}
\item $ \top := p\lor\neg p$,
\item $\bigwedge \Phi := \neg\bigvee\set{\neg\phi}{\phi\in\Phi}$,
\item $\Box\phi := \neg\diam\neg\phi$,
\item $\nu X. \phi := \neg\mu X.\neg\phi[X:=\neg X]$.
\end{itemize}
\end{multicols}
With these conventions, every formula can be written in the negation normal form, where negation occurs only in front of a basic proposition or a variable. 

Often it is notationally convenient to view an orbit-finite set $\Phi$ as a family of formulas indexed by a simpler orbit-finite set. For example, we may write
\[\textstyle{
	\bigvee_{a\in\Atoms}\diam a \qquad\text{to mean}\qquad \bigvee\{\diam a\mid a\in\Atoms\}.
}\]

This quantifier-like notation suggests that orbit-finite disjunctions and conjunctions can be used to quantify over atoms existentially and universally. This intuition will be useful in reading our example formulas in what follows.

\begin{exa}\label{exa:atomic-mu-form}
Put $\mathbb{P} = \Atoms$. For every $a\in \Atoms$ let $\Phi_a = \set{\neg b}{b\in \Atoms\setminus a}$. Then $\Phi_a$ is supported by $\{a\}$ and orbit-finite, hence $\bigwedge \Phi_a:= \bigwedge_{b\neq a}\neg b$ is a formula in $\aform$. The set
		\[\textstyle{
		\Psi = \left\{\diam\left(a\wedge \bigwedge_{b\neq a}\neg b\right)\mid a\in \Atoms\right\}
		}\]
		is equivariant and orbit-finite, hence 
		\begin{equation*}
		\textstyle{\bigwedge\Psi = \bigwedge_{a\in A}\diam\left(a\wedge \bigwedge_{b\neq a}\neg b\right)
		}\end{equation*}
	        is also a legal formula in $\aform$.
\end{exa}

\begin{rem}\label{rem_klin_form}
	In the classical $\mu$-calculus, one often wants a formula to be {\em clean}, or {\em well-named}, meaning that every bound variable is bound only once. This is to ensure that to every variable one can associate a unique binding occurrence of it. In the presence of infinitary connectives, achieving cleanness may seem problematic. For example, in the formula
	\[\textstyle{
	\bigwedge_{a\in\Atoms}\left(\mu X.a\lor\diam X\right)
	}\]
the variable $X$ occurs infinitely many times, and naively replacing each binding occurrence with a completely fresh variable would result in an orbit-infinite conjunction.
This is why we will not insist that our formulas be presented in a clean form; instead, we will only require that no variable is bound more than once on any syntactic path from the root of a formula to its subformula. It is easy to rewrite any formula to an equivalent one of this form (for example by annotating each variable with a natural number that is the depth of its binding occurrence in the syntactic tree of a formula), and it will allow us to speak of ``the binding occurrence'' for any occurrence of a bound variable.
\end{rem}

The following syntactic properties can be easily proved by induction on the depth of formulas:
	\begin{itemize}
		\item Every formula $\phi$ is finitely supported.
		\item For every formula $\phi$, the set of its subformulas is finitely supported and orbit-finite.
		\item The relation ${\subf}\subseteq \aform\times\aform$ of being a subformula is equivariant, i.e. for every $\pi\in \aut(\Atoms)$ and every $\phi,\psi\in\aform$ we have
		\[\phi\subf\psi \textnormal{ if and only if } \phi\cdot\pi\subf\psi\cdot\pi.\]
	\end{itemize}

   \subsection{Semantics}
    
 Semantics of the atomic $\mu$-calculus is a straightforward extension of the classical one: we simply require all sets and relations to be sets with atoms, and replace finite models with orbit-finite ones. The following is a simple rewriting of Definition~\ref{def_krip_mod}:
	
 \begin{defin}[Kripke model]\label{def:atom_krip_mod}
An \textit{atomic Kripke model} is a triple 
\[
	\K = \trip{K}{\step^{\K}}{\models^{\K}}
\]
 such that
\begin{itemize}
\item $K$ is a set with atoms,
\item ${\step^{\K}}\subseteq K\times K$ is a finitely supported {\em transition relation},
\item ${\models^{\K}}\subseteq K\times \mathbb{P}$ is a finitely supported {\em satisfaction relation} for basic predicates.
\end{itemize}
We shall say that an atomic Kripke model $\K$ is \emph{orbit-finite} if the set $K$ is so. A set $S$ supports $\K$ if and only if it supports $K$, $\step^{\K}$ and $\models^{\K}$.
\end{defin}

We use the same notational conventions regarding dropping the superscripts as for the classical case. Also as in the classical case, the satisfaction relation can be equivalently defined by the function:
\[
	\pred(x) = \{p\in\mathbb{P} \mid x\models p\}.
\]
which is now a function from $K$ to $\mathcal{P}_{\text{fs}}(\mathbb{P})$, the set of finitely supported subsets of $\mathbb{P}$.

\begin{exa}\label{exa:krip}
A model $\mathcal{K}$ with:
\begin{itemize}
\item $\was{\star}\cup\Atoms$ as the set of states,
\item transitions $\star\step a$ for all $a\in\Atoms$,
\item satisfaction relation $\models$ such that $\pred(\star)=\emptyset$ and $\pred(a)=\{a\}$  for $a\in\Atoms$,
\end{itemize}			
is an infinite but orbit-finite, equivariant atomic Kripke model. It can be drawn as:
\[
			\xymatrix@R=10pt{ & & \star\ar[dll]\ar[dl]\ar[d]\ar[dr]\ar[drr]\ar[drrr] & & & \\
				a & b & c & d & e &\ldots}
\]
	\end{exa}

\begin{exa}\label{exa:krip2}
A model $\mathcal{K}$ with:
\begin{itemize}
\item $\Atoms^*$, i.e., finite sequences of atoms as states,
\item transitions $w\step wa$ for all $w\in\Atoms^*$ and $a\in\Atoms$,
\item satisfaction relation $\models$ such that $\pred(\epsilon)=\emptyset$ and $\pred(a_1\cdots a_n)=\{a_n\}$,
\end{itemize}			
is an equivariant atomic Kripke model with infinitely many orbits.
\end{exa}

	\begin{defin}\label{def:amu-sem}
	For an atomic Kripke model $\mathcal{K}$, the meaning of a formula $\phi\in\aform$ in a variable environment $\rho:\mathbb{X}\rightharpoonup\mathcal{P}(K)$ is defined exactly as in Definition~\ref{def_speln_stand}, with the obvious modification:
\begin{equation}\label{eq:orbfindis-sem}
	\left[\!\left[\bigvee \Phi\right]\!\right]_{\rho} = \bigcup \{\sembr{\phi}_{\rho}\mid\phi\in\Phi\}.
\end{equation}
	\end{defin}
	
\begin{rem}\label{rem:ZFvsZFA}
There is an apparent lack of consistency in literally transporting Definition~\ref{def_speln_stand} to sets with atoms. In one clause taken from that definition:
\[
    \sembr{\mu X.\phi}_\rho = \mathrm{lfp}(F), \text{where } F(A)=\sembr{\phi}_{\rho[X\mapsto A]},
\]
the least fixpoint of a function $F:{\mathcal P}(K)\to {\mathcal P}(K)$ is considered. However, as mentioned in Example~\ref{exa:sets-with-atoms}, ${\mathcal P}(K)$ is not a legal set with atoms in general. The set $\mathcal{P}_{\text{fs}}(K)$ of finitely supported subsets of $K$ is legal, and one could consider a restriction of $F$ to:
\[
    F: \mathcal{P}_{\text{fs}}(K) \to \mathcal{P}_{\text{fs}}(K).
\]
However, $\mathcal{P}_{\text{fs}}(K)$ is not a complete lattice, so Tarski's theorem does not immediately apply to it. 
	
Our approach is to interpret Definition~\ref{def_speln_stand} and its extending clause in Definition~\ref{def:amu-sem} literally, initially without regard for the legality of sets with atoms, and then to prove Lemma~\ref{lem:semfinsupp} below, which shows by induction that the semantics of each formula, including fixpoint formulas, is in fact a legal, finitely supported set.
 
A more principled approach would be to restrict functions $F$ and environments $\rho$ to finitely supported sets from the beginning. This would amount to reinterpreting Definition~\ref{def_speln_stand} in the set theory with atoms ZFA, an approach pursued e.g.~in~\cite{gabbay-pitts}. One then needs to show a suitable version of Tarski's theorem to infer the existence of least fixpoints. Such a theorem indeed holds in a general topos-theoretic setting (see e.g.~\cite{bauer13}); for sets with atoms, it guarantees the existence of least fixpoints of monotone, finitely supported functions on lattices where least upper bounds exist for all finitely supported families (see also e.g.~\cite{LoschPitts}).

We choose the more elementary approach to avoid a formal reliance on ZFA. We then need to state the following lemma, which implies that the semantics of every formula is finitely supported.
\end{rem} 	
    
\begin{lem}\label{lem:semfinsupp}
For every $\phi\in\aform$, an atomic Kripke model $\mathcal{K}$ and a variable environment $\rho$, the set $\sembr{\phi}_{\rho}\subseteq K$ is supported by
$S=\supp(\phi)\cup\supp({\K})\cup B\subseteq\Atoms$,
where $B$ is any support of $\rho$.
\end{lem}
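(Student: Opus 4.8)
The plan is to prove this by induction on the structure (equivalently, the depth) of the formula $\phi$, tracking the support bound $S = \supp(\phi)\cup\supp(\K)\cup B$ at each step. The key technical device throughout will be the following equivariance-style observation: if $\pi\in\aut(\Atoms)$ fixes $S$ pointwise, then $\pi$ fixes $\phi$, fixes $\K$ (hence commutes with $\step$ and $\models$), and maps $\rho$ to an environment agreeing with $\rho$ modulo the action on the stored subsets. Concretely, I expect to use the compatibility fact that $\sembr{\phi\cdot\pi}_{\rho\cdot\pi} = \sembr{\phi}_{\rho}\cdot\pi$ for any automorphism $\pi$; applied to $\pi$ fixing $S$ (so $\phi\cdot\pi=\phi$, $\rho\cdot\pi=\rho$), this immediately yields $\sembr{\phi}_{\rho}\cdot\pi = \sembr{\phi}_{\rho}$, i.e.\ that $S$ supports the semantics. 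In fact this compatibility lemma is the real engine, and one option is to prove \emph{it} by induction instead, deriving the stated lemma as a corollary.

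For the base and easy inductive cases I would argue as follows. For $\phi=p$ a basic proposition, $\sembr{p}_\rho=\{x\mid x\models p\}$ is the image of $\{p\}$ under $\models$; since $\supp(p)\subseteq\supp(\phi)$ and $\models$ is supported by $\supp(\K)$, any $\pi\in\aut_S(\Atoms)$ fixes both and hence preserves this set. For $\phi=X$ a variable, $\sembr{X}_\rho=\rho(X)$ is supported by $B$ (a support of $\rho$) together with $\supp(X)$. Negation and orbit-finite disjunction pass the property through directly: complements and unions of $S$-supported sets are $S$-supported, and crucially for $\bigvee\Phi$ one uses that $\Phi$ is itself $S$-supported (as $\supp(\Phi)\subseteq\supp(\phi)$) so that $\pi\in\aut_S(\Atoms)$ permutes $\Phi$ among itself, leaving $\bigcup\{\sembr{\psi}_\rho\mid\psi\in\Phi\}$ invariant by the inductive hypothesis applied uniformly across the orbit. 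The $\diam$ case uses that $\step$ is $\supp(\K)$-supported so that the existential-successor operation commutes with every $\pi\in\aut_S(\Atoms)$.

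The main obstacle is the fixpoint case $\phi=\mu X.\psi$, which is delicate for two reasons. First, one must argue that the least fixpoint $\sembr{\mu X.\psi}_\rho$, computed as the union of the ordinal-indexed chain of approximants $A^0\subseteq A^1\subseteq\cdots$, is $S$-supported. The natural route is to show by transfinite induction on $\alpha$ that each approximant $A^\alpha$ is $S$-supported: at successor stages $A^{\alpha+1}=F(A^\alpha)=\sembr{\psi}_{\rho[X\mapsto A^\alpha]}$, and since the extended environment $\rho[X\mapsto A^\alpha]$ is supported by $S$ (using that $A^\alpha$ is $S$-supported by the transfinite hypothesis and $\supp(X)\subseteq\supp(\phi)$), the structural inductive hypothesis on $\psi$ gives that $A^{\alpha+1}$ is $S$-supported; limit stages are unions and pose no problem. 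Second, and more subtly, one must ensure the approximation actually reaches the least fixpoint and that the whole apparatus is not disturbed by the fact that $\mathcal{P}(K)$ is not a legal set with atoms, as flagged in Remark~\ref{rem:ZFvsZFA}. Here I would emphasize that the definition is read \emph{literally} in the ambient ZF universe, so the Tarski chain and its union are well-defined classical objects; the lemma's content is precisely that the outcome turns out to be finitely supported, which the transfinite argument establishes. A clean way to package this is to prove the uniform statement $\sembr{\psi}_{\rho\cdot\pi}=\sembr{\psi}_\rho\cdot\pi$ for all $\pi$ simultaneously with the support claim, since then each operator $F$ is automatically equivariant relative to the relevant subgroup and the whole approximant chain is permuted coherently, forcing its union to be invariant.
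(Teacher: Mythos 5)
Your proposal is correct and follows essentially the same route as the paper's own proof: structural induction on $\phi$, with the fixpoint case handled by transfinite induction showing each approximant $A^\alpha$ is $S$-supported (via the observation that $S$ supports the extended environment $\rho[X\mapsto A^\alpha]$), and the orbit-finite disjunction case resting on equivariance of the semantic assignment across the orbits of $\Phi$. Your additional suggestion of proving the compatibility statement $\sembr{\phi\cdot\pi}_{\rho\cdot\pi}=\sembr{\phi}_\rho\cdot\pi$ simultaneously is a reasonable strengthening of the induction hypothesis, but it is a packaging choice rather than a different argument --- the paper's sketch implicitly relies on the same equivariance when it asserts that the family $\{\sembr{\psi}_\rho\mid\psi\in\Phi\}$ is $S$-supported.
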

\begin{proof}[Proof (sketch)]
By structural induction on $\phi$. For example, consider the case of orbit-finite disjunction above, i.e., $\phi=\bigvee\Phi$. By the inductive assumption, the set $\{\sembr{\phi}_{\rho}\mid\phi\in\Phi\}$ is supported by $S$. Since $\bigcup$ is an equivariant operation on finitely supported families of sets with atoms, the lemma follows. 

The most interesting case is that of the fixpoint operator. Recall that
\[
\sembr{\mu X.\phi}_\rho = \mathrm{lfp}(F), \text{where } F(A)=\sembr{\phi}_{\rho[X\mapsto A]}
\]
is obtained as the union of an increasing chain of approximants:
\[
    A^0\subseteq A^1\subseteq A^2 \subseteq \cdots A^\omega\subseteq A^{\omega+1}\subseteq \cdots
\]
defined as in Definition~\ref{def_speln_stand}. By ordinal induction, and using the inductive assumption about $\phi$, 
it is easy to show that $A^\alpha$ is supported by $S$ for every ordinal $\alpha$. Indeed, if $A^\alpha$ is supported by $S$ then for any $\pi\in\aut_S(\Atoms)$:
\[
    A^{\alpha+1}\cdot\pi = (\sembr{\phi}_{\rho[X\mapsto A^{\alpha}]}) \cdot\pi =  \sembr{\phi}_{\rho[X\mapsto A^{\alpha}]} = A^{\alpha+1}
\]
where the middle equality holds by the inductive assumption about $\phi$, using the simple observation that, since $S$ supports $\mu X.\phi$ (hence also $\phi$ and $X$), $\rho$ and $A^\alpha$, then it also supports $\rho[X\mapsto A^{\alpha}]$.

As a result, the set $\sembr{\mu X.\phi}_\rho$, being the union of an increasing chain of $S$-supported sets, it itself $S$-supported.
\end{proof}


Atomic CTL$^*$ and atomic LTL are defined by analogy to atomic $\mu$-calculus, extending Definition~\ref{def:ctl} with orbit-finite disjunctions, with semantics extended by analogy to~\eqref{eq:orbfindis-sem} in Definition~\ref{def:amu-sem}. With CTL$^*$ a design decision is to be made: in the semantic clause
\begin{itemize}
\item ${x}\models \exists\phi\iff$ for some path $\pi$ starting at $x, \pi\models\phi$,
\end{itemize}
do we require the path $\pi$ to be finitely supported or not? Note that even an equivariant Kripke model can contain infinite paths that are not finitely supported. As it turns out, this choice matters little: in Theorem~\ref{thm:ctl-mc-undec} we shall prove that model checking for atomic CTL$^*$ is undecidable and the argument there works whether or not we choose paths as being finitely supported.


\subsection{The vectorial calculus}

A beginner's recipe for moving atom-less mathematical notions to sets with atoms is rather straightforward: replace all functions and relations by equivariant or finitely supported ones, all finite structures by orbit-finite ones, and then see what happens. With this in mind, notice that the classical definition of vectorial $\mu$-calculus includes a finiteness condition that is not explicitly present in the scalar $\mu$-calculus: namely, in a vectorial formula as in~\eqref{eq:vectorial}, the set of variables $X_j$ is finite. A natural idea is therefore to replace that set with an orbit-finite one.

Formally, in the {\em vectorial atomic $\mu$-calculus} $\vaform$, the fixpoint construction is generalized to the form
\[
	\phi = \mu \check{X}.\Phi
\]
where $\check{X}\in\mathbb{X}$ is the {\em entry variable} and $\Phi$ is an orbit-finite set of \emph{equations} of the form
$X.\phi$,
$X\in \mathbb{X}$ and $\phi$ is a formula. We assume that in $\Phi$ no variable occurs on the left-hand side of two different equations, and that the entry variable occurs on the left-hand side of some equation. 

As with orbit-finite boolean connectives, it is often useful to see the orbit-finite set $\Phi$ as a family of equations
\[
\left\{X_j.\phi_j\right\}_{j\in J}
\]
indexed by some simpler orbit-finite set $J$, with the entry variable pointed to by some $i\in J$. The formula $\phi$ can be then written as
\[
	\phi = \mu X_i.\left\{X_j.\phi_j\right\}_{j\in J},
\]
quite resembling the classical vectorial formula~\eqref{eq:vectorial2}.

The semantics of such a formula is defined very much like in the classical vectorial calculus, except that here the operator whose fixpoint is calculated runs not on finite tuples $(A_1,\ldots,A_n)$, but tuples $(A_j)_{j\in J}$, indexed by an orbit-finite set $J$, where $A_j\subseteq K$ (considerations in Remark~\ref{rem:ZFvsZFA} apply here as well). The set of such tuples is (in a classical sense) a complete lattice under pointwise intersections and unions, so least fixpoints of monotone operators exist on it. If the operator defined by:
\[
	\left(F(A_j)_{j\in J}\right)_k = \sembr{\phi_k}_{\rho[X_j\mapsto A_j]_{j\in J}} \quad \text{for } k\in J
\]
has a tuple $(B_j)_{j\in J}$ as its least fixpoint, we put
\[
	\sembr{\phi} = B_i
\]
(where $X_i$ is the entry variable for $\phi$) as in the classical setting.

It is easy to check that Lemma~\ref{lem:semfinsupp} still holds for this extended semantics, with essentially the same proof.

This syntax and semantics makes sense for any choice of atoms. When we need to specify and distinguish between the two specific cases considered in this paper, we will denote the vectorial equality-atomic $\mu$-calculus by $\veqaform$, and the vectorial ordered-atomic $\mu$-calculus by $\vordaform$.

\begin{exa}
Working over ordered atoms, consider a vocabulary of basic propositions that coincides with the set $\Atoms$ of atoms. Let us analyse the following formula in $\vordaform$:
\[
	\phi = \bigvee_{a\in\Atoms}\nu X_a.\left\{X_b.(\diam b)\land \bigvee_{c>b}X_c\right\}_{b\in\Atoms}.
\]
Here, the indexing set $J=\Atoms$ is single-orbit, and for each atom $b$, the formula
\[	
	\phi_b = (\diam b)\land \bigvee_{c>b}X_c
\]
says that:
\begin{itemize}
\item some immediate successor of the current state satisfies $b$, and
\item for some $c>b$, the property $X_c$ (which, in the fixpoint construction, will be bound to an analogous formula $\phi_c$) holds in the current state.
\end{itemize}
Altogether, the formula $\phi$ says that there is an infinite increasing chain of atoms such that each atom in that chain is satisfied in some immediate successor of the current state. This, by the way, is equivalent to saying that there is any infinite {\em set} of atoms with this property; this is because the set of immediate successors of a particular state in a Kripke model is always finitely supported, and every infinite but finitely supported subset of $\Atoms$ contains an open interval and therefore it contains an infinite increasing chain.

It is instructive to see how the greatest fixpoint in the semantics of $\phi$ is approximated on an equivariant Kripke model $\K$, by successive applications of a monotone operator on $\Atoms$-indexed tuples of subsets of $K$:
\begin{itemize}
\item at the starting point, every variable $X_b$ is mapped to the full set $K$;
\item after one step, $X_b$ is mapped to the set all states where some successor satisfies $b$;
\item after two steps, $X_b$ is mapped to the set of all states where:
\begin{itemize}
\item some successor satisfies $b$,
\item some successor satisfies some atom greater than $b$;
\end{itemize}  
\item $\ldots$
\item after $k$ steps, $X_b$ is mapped to the set of all states where:
\begin{itemize}
\item some successor satisfies $b$,
\item there are at least $k-1$ atoms greater than $b$ that are satisfied in some successor.
\end{itemize}  
\end{itemize}
Note that, after each step, the approximate interpretation of $X_b$ is a subset of $K$ supported by $\{b\}$; moreover, the function that maps each $X_b$ to its approximate interpretation is equivariant. From this it follows that if the model $\K$ is orbit-finite then the fixpoint is reached after finitely many steps. 
\end{exa}

In the atom-less world one does not gain expressivity by allowing fixpoints over systems of equations. This is a consequence of the following Beki\'c principle:

\begin{thm}[Beki\'c]
Let $D$, $E$ be any complete lattices and suppose $F:D\times E\rightarrow D$ and $G: D\times E\rightarrow E$ are two monotone functions. Then the least fixpoint of $\pair{F}{G}:D\times E\rightarrow D\times E$ is the pair $\pair{\widehat{f}}{\widehat{g}}$, where
\begin{align*}
\widehat{f} &= \mu f. F(f,\mu g. G(f,g))\\
\widehat{g} &= \mu g. G(\mu f. F(f,g),g)
\end{align*}
\end{thm}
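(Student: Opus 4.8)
The plan is to show that $\langle\widehat{f},\widehat{g}\rangle$ is a fixpoint of $\langle F,G\rangle$ and that it lies below every fixpoint, the latter being the interesting half. Rather than attacking both components simultaneously, I would first introduce auxiliary notation: for each $f\in D$ put $g_f=\mu g.G(f,g)$, so that $\widehat{f}=\mu f.F(f,g_f)$, and symmetrically $f_g=\mu f.F(f,g)$, so that $\widehat{g}=\mu g.G(f_g,g)$. (Since $D\times E$ is again a complete lattice and $\langle F,G\rangle$ is monotone, its least fixpoint exists by Knaster--Tarski.) The central claim I would prove is the asymmetric statement: the least fixpoint $\langle d^*,e^*\rangle$ of $\langle F,G\rangle$ satisfies $d^*=\widehat{f}$ and $e^*=g_{\widehat{f}}$. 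Granting this, the symmetric argument (exchanging the roles of $D$ and $E$, and of $F$ and $G$) yields $e^*=\widehat{g}$ as well, and uniqueness of the least fixpoint then forces $\langle\widehat{f},\widehat{g}\rangle=\langle d^*,e^*\rangle$, which is exactly the assertion.

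To prove the central claim I would proceed in three steps. First, I would record the relevant monotonicities: since $G$ is monotone in its first argument, $f\leq f'$ gives $G(f,-)\leq G(f',-)$ pointwise, so $g_f\leq g_{f'}$ by monotonicity of the least-fixpoint operator; consequently $\Phi(f):=F(f,g_f)$ is monotone and $\widehat{f}=\mu\Phi$ exists. Second, I would verify that $\langle\widehat{f},g_{\widehat{f}}\rangle$ is a fixpoint: unfolding the defining equations gives $\widehat{f}=\Phi(\widehat{f})=F(\widehat{f},g_{\widehat{f}})$ and $g_{\widehat{f}}=G(\widehat{f},g_{\widehat{f}})$ directly. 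Because $\langle d^*,e^*\rangle$ is the \emph{least} fixpoint, this already yields $d^*\leq\widehat{f}$ and $e^*\leq g_{\widehat{f}}$.

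The third step, the reverse inequality $\widehat{f}\leq d^*$, is the one I expect to be the main obstacle. Here I would exploit that $\langle d^*,e^*\rangle$ is itself a fixpoint: from $e^*=G(d^*,e^*)$ we see that $e^*$ is a fixpoint of $G(d^*,-)$, so minimality gives $g_{d^*}\leq e^*$; monotonicity of $F$ in its second argument then yields $\Phi(d^*)=F(d^*,g_{d^*})\leq F(d^*,e^*)=d^*$, so $d^*$ is a pre-fixpoint of $\Phi$. Since the least fixpoint of a monotone map on a complete lattice coincides with its least pre-fixpoint (Knaster--Tarski), this gives $\widehat{f}=\mu\Phi\leq d^*$, hence $d^*=\widehat{f}$. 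The present step also supplies $g_{\widehat{f}}=g_{d^*}\leq e^*$, while the second step gave $e^*\leq g_{\widehat{f}}$, so $e^*=g_{\widehat{f}}$, completing the central claim. The delicate point throughout is bookkeeping: keeping straight which variable is held fixed in each nested fixpoint and invoking the correct one-sided monotonicity; once the pre-fixpoint characterisation is in place the inequalities chain together without further difficulty.
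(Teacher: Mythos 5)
Your proof is correct. Note, however, that the paper itself offers no proof of this statement: Beki\'c's principle is quoted there as a classical result (and then used to derive the single-orbit normal form for vectorial fixpoints), so there is no in-paper argument to compare yours against. Your argument is the standard one and is complete: the monotonicity of $f\mapsto g_f$ justifies that $\Phi(f)=F(f,g_f)$ has a least fixpoint; the computation $\widehat{f}=F(\widehat{f},g_{\widehat{f}})$, $g_{\widehat{f}}=G(\widehat{f},g_{\widehat{f}})$ shows $\pair{\widehat{f}}{g_{\widehat{f}}}$ is a fixpoint of $\pair{F}{G}$, giving $d^*\leq\widehat{f}$ and $e^*\leq g_{\widehat{f}}$; and the converse direction correctly combines minimality of $g_{d^*}$ among fixpoints of $G(d^*,-)$ with the least-pre-fixpoint characterisation from Knaster--Tarski to get $\widehat{f}\leq d^*$ and $g_{\widehat{f}}\leq e^*$. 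The appeal to symmetry to identify $e^*$ with $\widehat{g}$ is legitimate, since the coordinate swap $D\times E\cong E\times D$ carries the least fixpoint of $\pair{F}{G}$ to that of the swapped system. The one place where wording could be tightened is the final sentence of your first paragraph: you do not need ``uniqueness of the least fixpoint'' as a separate principle; having shown $d^*=\widehat{f}$ and $e^*=\widehat{g}$, the conclusion $\pair{d^*}{e^*}=\pair{\widehat{f}}{\widehat{g}}$ is immediate.
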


As we shall see in Section~\ref{sect_separation}, contrary to the classical setting, vectorial \emph{atomic} $\mu$-calculus is strictly more expressive than its scalar counterpart. However, Beki\'c principle can still be used some extent, to obtain a more regular normal form for vectorial atomic formulas. Recall that when introducing syntax we allowed the fixpoint to be taken over arbitrary orbit-finite sets of equations. Next proposition shows that in fact we can restrict ourselves to just {\em single-orbit} systems of equations. More precisely: we can demand that in a well-built formula 
\[\mu X_i.\was{X_j.\phi_j}_{j\in J}\]
then the set $\was{X_j.\phi_j}_{j\in J}$ is single-orbit.

\begin{stw}
	Every formula of vectorial atomic $\mu$-calculus is semantically equivalent to one with fixpoints over single orbit sets of equations.
\end{stw}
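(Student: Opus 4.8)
The plan is to combine a structural induction on formulas with an induction on the number of orbits of the index set of a single fixpoint, using the Beki\'c principle to peel off one orbit at a time. Using the structural induction I would first reduce to the following situation: we are given one fixpoint formula $\mu X_i.\{X_j.\phi_j\}_{j\in J}$ in which $J$ has $n$ orbits and every fixpoint occurring inside the bodies $\phi_j$ is already over a single orbit. For $n=1$ there is nothing to prove, so I assume $n\geq 2$ and proceed by induction on $n$. Let $J_0$ be the orbit of the entry index $i$ (taken with respect to a finite support of the whole formula enlarged by $\supp(i)$), and let $J'=J\setminus J_0$; then $J'$ is finitely supported and has exactly $n-1$ orbits. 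Split the system into $\Phi_0=\{X_j.\phi_j\}_{j\in J_0}$ and $\Phi'=\{X_k.\phi_k\}_{k\in J'}$.

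The key step is the decomposition. The lattice of tuples factors as $(\mathcal{P}(K))^{J}\cong(\mathcal{P}(K))^{J_0}\times(\mathcal{P}(K))^{J'}$, and the monotone operator whose least fixpoint defines the semantics splits accordingly into maps $F$ (computing the $J_0$-coordinates from the equations $\Phi_0$) and $G$ (computing the $J'$-coordinates from $\Phi'$). Since $i\in J_0$, the Beki\'c expression for $\hat{f}$ yields
\[
	\sembr{\mu X_i.\{X_j.\phi_j\}_{j\in J}} = \sembr{\mu X_i.\Phi_0'},
\]
where $\Phi_0'$ arises from $\Phi_0$ by simultaneously replacing, in each body $\phi_j$ with $j\in J_0$, every free occurrence of a variable $X_k$ with $k\in J'$ by the vectorial formula $\mu X_k.\Phi'$. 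The variables $(X_j)_{j\in J_0}$ occur freely in $\Phi'$ and are bound by the outer fixpoint, which realizes the nesting in Beki\'c's formula; since the substitution is equivariant and preserves orbit-finiteness and positivity, $\Phi_0'$ is a legal \emph{single-orbit} system, so $\mu X_i.\Phi_0'$ is a single-orbit fixpoint.

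It then remains to normalize the inner fixpoints introduced by the substitution. Each $\mu X_k.\Phi'$ (for $k\in J'$) is a fixpoint over the $(n-1)$-orbit set $J'$ whose bodies are among the original $\phi_k$ and hence already contain only single-orbit fixpoints; by the induction hypothesis on $n$ each is equivalent to a formula with single-orbit fixpoints only, and as they form an orbit-finite family this rewriting can be carried out uniformly (equivariantly). After this, $\mu X_i.\Phi_0'$ has single-orbit fixpoints throughout, which closes the induction.

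I expect the main obstacle to be making the Beki\'c decomposition rigorous in the atomic setting rather than the semantic identity itself: one must check that $J'=J\setminus J_0$ is finitely supported with exactly $n-1$ orbits, that the product factorization of the tuple lattice matches the partition of equations into $\Phi_0$ and $\Phi'$, and above all that the simultaneous substitution producing $\Phi_0'$ is a legal operation on atomic formulas (orbit-finiteness preserved, positivity preserved, no capture of the bound variables $(X_j)_{j\in J_0}$). The accompanying bookkeeping — that the structural induction normalizes bodies first and that the substituted fixpoints strictly drop the orbit count — is what guarantees termination.
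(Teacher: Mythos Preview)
Your overall plan---structural induction on formulas, then Beki\'c plus induction on the number of orbits to peel off one orbit at a time---is exactly the paper's approach. There is, however, a genuine gap in the choice of support you use to define $J_0$.

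You take $J_0$ to be the orbit of the entry index $i$ ``with respect to a finite support of the whole formula enlarged by $\supp(i)$''. Any support containing $\supp(i)$ fixes $i$, so with that choice $J_0=\{i\}$ is always a singleton and your outer fixpoint is in fact scalar. That alone is harmless, but it breaks your termination measure: when you recurse on an inner fixpoint $\mu X_k.\Phi'$ with $k\in J'$, your own recipe tells you to use a support containing $\supp(k)$, and with respect to \emph{that} support $J'=J\setminus\{i\}$ need not have fewer orbits than $J$ did. Concretely, take $\Phi$ equivariant with $J=\Atoms$ (equality atoms). With support $\{i\}$ you count $n=2$, peel off $\{i\}$, and are left with $J'=\Atoms\setminus\{i\}$; recursing at entry $k\neq i$ with support $\{i,k\}$ you again see two orbits, $\{k\}$ and $\Atoms\setminus\{i,k\}$. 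The count never drops, so the induction on $n$ does not terminate. (If it did, you would have reduced every vectorial formula to nested \emph{scalar} fixpoints, which Section~\ref{sect_separation} shows is impossible.)

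The fix, which is what the paper does, is to count orbits of the system $\{X_j.\phi_j\}_{j\in J}$ with respect to a support $S$ of $\Phi$ itself---\emph{not} enlarged by $\supp(i)$---and to let $J_0$ be the full $S$-orbit containing $i$ (generally not a singleton). Then $J'$ has exactly one fewer $S$-orbit, and one can recurse on the inner fixpoints with the \emph{same} $S$, so the induction goes through. With that single correction your argument coincides with the paper's.
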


\begin{proof}
It is sufficient to show that if 
\[\mu X_i.\was{X_j.\phi_j}_{j\in J}\]
is a well-built formula and $\was{X_j.\phi_j}_{j\in J}$ has two orbits, then we can split the fixpoint into two nested fixpoints over single-orbit sets of equations. Let 
\[\was{X_j.\phi_j}_{j\in J_1} \ \ \textnormal{ and } \ \ \was{X_j.\phi_j}_{j\in J_2}\]
be the two orbits of $\was{X_j.\phi_j}_{j\in J}$. Given an arbitrary Kripke model $\K$ we can see them as monotone functions
\begin{align*}
F_1: \power(K)^{J_1}\times \power(K)^{J_2} & \rightarrow \power(K)^{J_1}\\
F_2: \power(K)^{J_1}\times \power(K)^{J_2} & \rightarrow \power(K)^{J_2}
\end{align*}
so that the function defined by the whole system $\was{X_j.\phi_j}_{j\in J}$ can be presented as $\pair{F_1}{F_2}$ (note that $\power(K)^{J_1}\times \power(K)^{J_2}$ is canonically isomorphic to $\power(K)^{J}$). By the Beki\'c principle the least fixpoint of $\pair{F_1}{F_2}$ is 
\begin{align*}
\mu f. F_1(f,\mu g. F_2(f,g)),\\
\mu g. F_2(\mu f. F_1(f,g),g).
\end{align*}
Without loss of generality suppose that $X_i \in J_1$. Then it follows that $\mu X_i.\was{X_j.\phi_j}_{j\in J}$ is equivalent to 
\[\mu X_i.\was{X_j.\phi'_j}_{j\in J_1},\]
where 
\[\phi'_j := \phi_j[X_k\mapsto \mu X_k.\was{X_l.\phi_l}_{l\in J_2}]_{k\in J_2}.
\]
Note that $\mu X_k.\was{X_l.\phi_l}_{l\in J_2}$ is by definition the $k$-th projection of the least fixpoint of $\was{X_j.\phi_j}_{j\in J_2}$; $\phi_j[X_k\mapsto \theta_k]_{k\in J_2}$ denotes the simultaneous substitution of all formulas $\theta_k$ for the respective variables in $\phi_j$.
\end{proof}


\section{Model checking}\label{sect_model_checking}
As we shall see in the following sections, the landscape of modal $\mu$-calculi with atoms is considerably more complex than the classical setting. Some results that hold classically fail here, some others behave differently depending on the chosen structure of atoms. However, some important results from the classical calculus transport to the atomic setting. We begin by proving that this is the case for the decidability of model checking. We state the theorem for the vectorial calculus, and the scalar counterpart follows trivially.

\begin{thm}\label{thm:mc-decid}
	Model-checking problem for the vectorial atomic $\mu$-calculus $\vaform$ is decidable over orbit-finite atomic Kripke models.
\end{thm}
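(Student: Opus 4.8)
The plan is to compute the semantics $\sembr{\phi}_\rho$ directly by structural recursion on $\phi$, mirroring the classical model-checking algorithm, and to argue that every step is effective on finitely-represented orbit-finite data. First I would fix a finite set $S\subseteq\Atoms$ supporting both the input model $\K$ and the input formula $\phi$, enlarging it if necessary so that it supports the given environment as well. Throughout the computation every environment will map its relevant variables to $S$-supported subsets of $K$, and --- crucially --- by Lemma~\ref{lem:semfinsupp} (in its vectorial form) every intermediate value $\sembr{\psi}_\rho$ is again an $S$-supported subset of the orbit-finite set $K$. Since orbit-finite sets, together with finitely-supported subsets, functions and relations between them, admit finite presentations (via the set-builder representation and the operations listed in Remark~\ref{rem:computability}), all data manipulated by the algorithm is finitely representable.

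The non-fixpoint cases reduce to the basic computable operations of Remark~\ref{rem:computability}: $\sembr{p}_\rho$ is the $S$-supported preimage of $p$ under $\models$; $\sembr{\neg\psi}_\rho$ is a set difference; $\sembr{\diam\psi}_\rho$ is the image of $\sembr{\psi}_\rho$ along the converse of $\step$; and for an orbit-finite disjunction $\sembr{\bigvee\Phi}_\rho$ I would first compute the image of the orbit-finite set $\Phi$ under the $S$-equivariant map $\psi\mapsto\sembr{\psi}_\rho$ --- itself an orbit-finite family of $S$-supported subsets of $K$ --- and then take its union, both being available operations. Each case makes recursive calls on strictly smaller formulas (recall that formulas have finite depth and orbit-finite subformula sets), so the recursion is well-founded.

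The main obstacle, and the only place where the argument departs from the classical one, is the fixpoint case $\phi=\mu X_i.\{X_j.\phi_j\}_{j\in J}$. Here one iterates the monotone operator $F$ from the least tuple $(\emptyset)_{j\in J}$, obtaining an increasing chain of $S$-supported tuples $(A_j^{m})_{j\in J}$ for $m=0,1,2,\dots$; the key point is that this chain stabilizes after finitely many steps. Indeed, identifying a finitely-supported tuple $(A_j)_{j\in J}$ with the $S$-supported subset $\{\pair{j}{k}\mid k\in A_j\}$ of the orbit-finite set $J\times K$, the iteration becomes an increasing chain of $S$-supported subsets of $J\times K$; by Proposition~\ref{prop:finite-equiv}(2) there are only finitely many such subsets, so a strictly increasing chain can have only finitely many terms. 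Hence some finite stage $N$ satisfies $F((A_j^N)_{j\in J})=(A_j^N)_{j\in J}$, this tuple is the least fixpoint, and $\sembr{\phi}_\rho=A_i^N$. Algorithmically one computes $(A_j^{m+1})_{j\in J}$ from $(A_j^{m})_{j\in J}$ by evaluating each $\phi_k$ under the updated (still $S$-supported, still finitely representable) environment --- a recursive call on the smaller formulas $\phi_k$ --- and halts as soon as two consecutive tuples coincide, equality of finitely-supported sets being decidable. The scalar claim is the special case where $J$ is a singleton, and greatest fixpoints (hence $\nu$, $\Box$ and $\bigwedge$) are obtained dually through the negation-normal-form abbreviations.
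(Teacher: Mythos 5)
Your proposal is correct and takes essentially the same route as the paper: structural recursion computing $\sembr{\phi}_\rho$ via the operations of Remark~\ref{rem:computability}, with Lemma~\ref{lem:semfinsupp} ensuring all intermediate sets are $S$-supported, equivariance/orbit-representatives making the orbit-finite disjunction and equation-indexed cases effective, and termination of the fixpoint iteration forced by the finiteness of supported data over an orbit-finite state space. The only cosmetic difference is that you invoke Proposition~\ref{prop:finite-equiv}(2) on subsets of $J\times K$ where the paper invokes Proposition~\ref{prop:finite-equiv}(3) on relations between $J$ and $K$; these are the same argument, as the paper itself derives part (3) from part (2) by exactly this product construction.
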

\begin{proof}
Let us fix an orbit-finite atomic Kripke model $\mathcal{K}$. We shall show that the meaning of any formula $\phi\in\vaform$ in $\mathcal{K}$ (under a variable environment $\rho$) can be computed from $\phi$, by structural induction on $\phi$ and using basic operations listed in Remark~\ref{rem:computability}.

In the inductive computation, we shall only ever consider environments $\rho$ that are defined only on an orbit-finite set of variables, and are supported by (the support of $\K$ together with) the support of some subformula of $\phi$. Every such environment is presentable by finite means and is amenable to algorithmic manipulation.

The cases of basic propositions, variables, negation and the modality $\diam$ are straightforward.
For the case of orbit-finite disjunction $\phi=\bigvee\Phi$, first calculate 
\[
S=\supp(\Phi)\cup\supp(\mathcal{K})\cup\supp(\rho).
\]
Then partition $\Phi$ into $S$-orbits (there are finitely many of them), and select a system of representatives $\phi_1,\ldots,\phi_n$, one from each orbit. Using the inductive assumption, calculate $P_i=\sembr{\phi_i}_{\rho}$ for each $i$. Then compute the $S$-orbit $\mathcal{O}_i$ of  each $P_i$; each $\mathcal{O}_i$ is an $S$-supported family of subsets of $K$. The union of all $\bigcup\mathcal{O}_i$ is the desired set $\sembr{\phi}_\rho$.
        
The most interesting case is computing 
\[
	\sembr{\mu X_i.\left\{X_j.\phi_j\right\}_{j\in J}}_\rho.
\] 
This is done by approximating the least fixpoint by the following standard procedure:
\begin{itemize}
\item[(1)] Put $A_j=\emptyset$ for each $j\in J$;
\item[(2)] Extend $\rho$ by mapping each variable $X_j$ to $A_j$;
\item[(3)] Using the inductive assumption, calculate $\sembr{\phi_j}_{\rho[X_k\mapsto A_k]_{k\in J}}$ for each $j\in J$, and assign the resulting tuple as a new value of $A$;
\item[(4)] Repeat steps (2)-(3) until $A$ stabilizes;
\item[(5)] Return $A_i$.
\end{itemize}
Step (3) above is achieved by choosing an arbitrary representative $j$ from each $S$-orbit of $J$, computing the subset $\sembr{\phi_j}_{\rho[X_k\mapsto A_k]_{k\in J}}\subseteq K$ for each representative, and (uniquely) extending the result to an $S$-supported relation between $J$ and subsets of $K$. 

The set of all $J$-indexed tuples of subsets of $K$ is a complete lattice, so by the Knaster-Tarski 
theorem (see \cite{niw-mu}) all we need to show is that this procedure terminates. Notice that, at every stage of computation, the current value of $(A_j)_{j\in J}$ defines a relation between $J$ and $K$. Moreover, by Lemma~\ref{lem:semfinsupp} applied to the vectorial calculus, each such relation is finitely supported by 
\[
S=\supp(\phi)\cup\supp(\mathcal{K})\cup\supp(\rho).
\] 
Since $K$ is orbit-finite, by Proposition~\ref{prop:finite-equiv}(3) $A$ can take on only finitely many values, therefore the above procedure terminates after finitely many steps.
\end{proof}


\section{Failure of orbit-finite model property}\label{sect_finite_model_property}
The classical modal $\mu$-calculus enjoys the so-called finite model property: every satisfiable formula has a finite model. (In fact a stronger {\em small model property} holds, useful for complexity upper bounds.) There is no chance for this property to hold in the atomic setting, but since orbit-finite sets play the role of finite sets in the universe of sets with atoms, one might hope that an {\em orbit-finite model property} holds, i.e., that every satisfiable formula in $\aform$ has an orbit-finite model.

However, even that weaker property fails even for the scalar $\mu$-calculus $\aform$ both for equality and ordered atoms, as we shall now prove. We apply different arguments for the two atom structures.

\subsection{The case of equality atoms}

Over a vocabulary of basic predicates that includes a predicate for each atom $a$, consider the following two properties:
\begin{enumerate}[label=\bf P\arabic*:]
\item every reachable state has at least one successor for which some basic predicate holds;
\item on every path that starts from the current state, no basic predicate holds more than once.
\end{enumerate}

These are definable in the atomic $\mu$-calculus $\eqaform$:
\begin{align*}
 \text{{\bf P1}\quad is \quad} &\nu X.\left(\left(\diam\bigvee_{a\in\Atoms}a\right)\land\Box X\right), \\
\text{{\bf P2}\quad is \quad} &\neg(\mu X.(\psi\lor \diam X)) \qquad\text{where}\qquad \psi = \bigvee_{a\in\Atoms}(a\land\diam\mu Y.(a\lor\diam Y)).
\end{align*}

The conjunction of {\bf P1} and {\bf P2}  is satisfiable. Indeed, it is satisfied in (every state) of the Kripke model from Example~\ref{exa:krip2} restricted to those states $w\in\Atoms^*$ where $w$ does not contain any letter more than once. It is easy to see that this is a well-defined equivariant model, and that {\bf P1} and {\bf P2} hold in it. Note that this model still has infinitely many orbits of states, with exactly one orbit for each length of $w$.

On the other hand, the conjunction of {\bf P1} and {\bf P2} has no orbit-finite models. Indeed, assume that some state $x_0$ in such a model  satisfies both properties. By {\bf P1}, there exists an infinite path in the model:
\[
	x_0 \step x_1 \step x_2 \step  x_3 \step x_4 \step \cdots.
\]
where each state (perhaps except $x_0$) satisfies some basic predicate. By {\bf P2} no such predicate is satisfied more than once on this path.

Since the model is orbit-finite, there exists a global upper bound on the size of the least supports $\supp(x_i)$. This implies that there exists a number $j$ and distinct atoms $a,b$ such that:
\begin{itemize} 
\item $a$ holds in some $x_i$ where $i<j$,
\item $b$ holds in some $x_k$ where $j<k$,  and
\item $a,b\not\in\supp(x_j)$.
\end{itemize}
Let $\pi\in\aut(\Atoms)$ be the atom automorphism that swaps $a$ and $b$ and leaves all other atoms untouched; then $x_j\cdot\pi=x_j$, therefore ${x_{j-1}}\step {x_j\cdot\pi}$ is a valid transition. As a result:
\[
	x_0\step x_1\step \cdots\step x_i\step\cdots\step x_{j-1}\step x_j\cdot\pi\step \cdots\step x_k\cdot\pi\step \cdots
\]
is a legal path in the model. But $a$ holds both in $x_i$ and in $x_k\cdot\pi$, so {\bf P2} is violated on this path.

\subsection{The case of ordered atoms}

Properties {\bf P1} and {\bf P2} above make sense also for ordered atoms, and they are defined in $\ordaform$ by the same formulas as before; the ordering relation on atoms is simply not used in those formulas. The orbit-infinite model from before still satisfies both properties.

However, over ordered atoms, the above proof of the lack of an orbit-finite model fails: the bijection $\pi$ that swaps $a$ and $b$ is not a valid atom automorphism. In fact, over ordered atoms the conjunction of  {\bf P1} and {\bf P2} has an orbit-finite model. Indeed, consider a Kripke model with $\Atoms$ as the set of states, the satisfaction relation defined by $a\models a$, and the transition relation by:
\[
	a\step b \text{ if and only if } a<b.
\]
It is easy to see that both {\bf P1} and {\bf P2} hold for (every state of) this model.

Consider, however, a stronger version of {\bf P1}, defined with an essential use of the ordering relation between atoms:
\begin{enumerate}[label=\bf P\arabic*':]
\item for every open interval $(a;b)\subseteq\Atoms$, every reachable state has at least one successor for which some basic proposition $c\in(a;b)$ holds.
\end{enumerate}
This is definable in the atomic $\mu$-calculus $\ordaform$, by the formula
\[
	\bigwedge_{a<b\in\Atoms}\nu X.\left(\left(\diam\bigvee_{c\in(a;b)}c\right)\land\Box X\right).
\]
The conjunction of {\bf P1'} and {\bf P2} is satisfiable; indeed, it is satisfied in (every state of) the orbit-infinite model that we used to model {\bf P1} and {\bf P2} for equality atoms.

On the other hand, the conjunction of {\bf P1'} and {\bf P2} has no orbit-finite models over ordered atoms. Indeed, assume that some state $x_0$ in such a model  satisfies both properties. By {\bf P1'}, there exists an infinite path in the model:
\[
	x_0 \step x_1 \step x_2 \step  x_3 \step x_4 \step \cdots.
\]
where each state (perhaps except $x_0$) satisfies some basic predicate. By {\bf P2} no such predicate is satisfied more than once on this path.

Since the model is orbit-finite, there exists a global upper bound on the size of the least supports $\supp(x_i)$. This implies that there exists a number $j$ and an atom $d$ such that:
\begin{itemize} 
\item $d$ holds in some $x_i$ where $i<j$,
\item $d\not\in\supp(x_j)$.
\end{itemize}
Pick some open interval $(a;b)$ that contains $d$ but does not contain any atom from $\supp(x_j)$. By {\bf P1'}, $x_j$ has some successor (call it $y$) that satisfies some basic proposition $c\in(a;b)$. 

Let $\pi\in\aut(\Atoms)$ be an atom automorphism that maps $c$ to $d$ and leaves all atoms from $\supp(x_j)$ untouched; then $x_j\cdot\pi=x_j$, therefore ${x_{j-1}}\step {x_j\cdot\pi}$ is a valid transition. As a result:
\[
	x_0\step x_1\step \cdots\step x_i\step\cdots\step x_{j-1}\step x_j\cdot\pi\step y\cdot\pi
\]
is a legal path in the model. But $a$ holds both in $x_i$ and in $y\cdot\pi$, so {\bf P2} is violated on this path.


\section{Parity games}\label{sect_parity_games}
An important result in the theory of classical $\mu$-calculus is its correspondence to {\em parity games}. In this section we show that this correspondence extends, without significant conceptual changes, to the setting with atoms. This has a double purpose. First, since (as we prove) orbit-finite games are decidable, the correspondence gives an alternative route to deciding the model-checking problem for atomic Kripke models. Second, in Section~\ref{sect_bisimulations} we will use parity games to show that modal $\mu$-formulas are invariant under {\em atomic bisimulations}, which will be used in Sections~\ref{sect_separation}-\ref{sect_limitations} for proving expressive limitations of atomic $\mu$-calculi.

The following development closely follows analogous definitions and results for the classical, atom-less $\mu$-calculus; see e.g.~\cite{BraWal15} or~\cite{yde} for a detailed exposition of that theory.

\begin{defin}
		An \emph{atomic parity game} $\mathcal{G}$ is a quadruple $\pargame$ such that
		\begin{enumerate}
			\item $V$ is a set (with atoms) of {\em nodes}, $V_{\exists}$ is a finitely supported subset and we define $V_{\forall}= V\setminus V_{\exists}$;
			\item $R\subseteq V^2$ is a finitely supported {\em move} relation;
			\item $\Omega: V\rightarrow \nat$ is a bounded, finitely supported {\em rank} function.
		\end{enumerate}
		The game is called \emph{orbit-finite} if $V$ is orbit-finite.
	\end{defin}
In a context of a parity game, a {\em match} (starting at a node $v_0$) is a sequence, finite or infinite, of nodes:
\[
	\vec{v} = v_0,v_1,v_2,\ldots
\]
such that $\pair{v_n}{v_{n+1}}\in R$ for each $n\in\mathbb{N}$. A match is {\em complete} if it cannot be extended to a longer match, i.e., if it is infinite or if it ends with a node that has no successors in $R$. Otherwise the match is {\em partial}. A complete match $\vec{v}$ is {\em won} by $\exists$ if it is infinite and $\limsup_n \Omega(v_n)$ is an even number, or if it is finite and its last node is in $V_{\forall}$. Otherwise $\vec{v}$ is won by $\forall$. 

For a node $v_0\in V$, let $\PM_{\exists}(v_0)$ be the set of partial matches that begin in $v_0$ and end in a node from $V_{\exists}$. A {\em strategy} for $\exists$ is a function $\rho:\PM_{\exists}(v_0)\to V$ such that for each partial match $\vec{v}=\langle v_0,\ldots,v_n\rangle\in\PM_{\exists}(v_0)$ it holds that $\pair{v_n}{\rho(\vec{v})}\in R$. A strategy is {\em positional} if $\rho(v_0,\ldots,v_n)$ only depends on $v_n$;
positional strategies can be seen as functions from $V_{\exists}$ to $V$.
A match $\vec{v}=v_0,v_1,v_2,\ldots$ {\em conforms} to a strategy $\rho$ if for every $n$ such that $v_n\in V_{\exists}$ there is $v_{n+1}=\rho(v_0,\ldots,v_n)$ ($v_{n+1} = \rho(v_n)$ if $\rho$ is positional). A strategy $\rho$ is {\em winning} for $\exists$ if every match that conforms to $\rho$ is won by $\exists$. If a positional winning strategy exists for $v_0$, we say that $v_0$ is a {\em winning node} for $\exists$. The set of such nodes is denoted $\textrm{Win}_{\exists}(\mathcal{G})$.
Strategies and winning strategies for $\forall$, and the set $\textrm{Win}_{\forall}(\mathcal{G})$ of nodes winning for $\forall$, are defined analogously. 

It is a standard result that every parity game $\mathcal{G}$ is {\em positionally determined}, i.e., for every node in it either there is a positional winning strategy for $\exists$ or for $\forall$.
Atomic parity games are obviously (forgetting about the action of atom automorphisms) parity games in the classical sense, so they are positionally determined. However, it might happen that in an atomic parity game no winning strategy is finitely supported. 

\begin{exa}
Consider an atomic parity game (over equality atoms) where:
\begin{align*}
V = \textstyle{\Atoms\choose{2}}\cup\Atoms \qquad V_\exists = \textstyle{\Atoms\choose{2}} 
\qquad R = \{\pair{\{a,b\}}{a}\mid a,b\in\Atoms\} \cup \{\pair{a}{\{b,c\}}\mid a,b,c\in\Atoms\} \\
\Omega(v) =0 \text{ for all $v\in V$.}
\end{align*}
Since $\exists$ wins every infinite play and every state has a successor with respect to $R$, it is clear that every state is winning for $\exists$. However, no winning strategy for $\exists$ is finitely supported. Indeed, such a strategy would determine a finitely supported function from $\Atoms\choose{2}$ to $\Atoms$ such as $f(C)\in C$ for all $C\in{\Atoms\choose{2}}$, and it is easy to see that over equality atoms no such function exists. 
\end{exa}	

In spite of this, winning regions in orbit-finite parity games are computable. Indeed, every orbit-finite game can be effectively transformed into a finite game in the following way. For an orbit-finite parity game $\mathcal{G} = \pargame$, let $S$ be any finite set of atoms that supports $\mathcal{G}$. Let $V/S$, $V_{\exists}/S$ be the sets of $S$-orbits of $V$ and $V_{\exists}$, respectively; let $[v]$ denote the $S$-orbit of $v\in V$. Obviously $V_{\exists}/S\subseteq V/S$. Define $R/S\subseteq V/S\times V/S$ and $\Omega/S:V/S\to\mathbb{N}$ by:
		\begin{eqnarray}
		\pair{[v]}{[w]}\in R/S &\textnormal{ if }& \pair{x}{y}\in R \textnormal{ for some } x\in[v], y\in[w]\nonumber\\
		\Omega/S([v])=n &\textnormal{ if }& \Omega(x) = n \textnormal{ for some } x\in [v]\nonumber
		\end{eqnarray}
This is well defined since $S$ supports both $R$ and $\Omega$. In particular, $\Omega/S$ is a function. We call $\mathcal{G}/S=\langle V/S,V_{\exists}/S,R/S,\Omega/S\rangle$ the \emph{orbit game} of $\mathcal{G}$.

Games $\mathcal{G}$ and $\mathcal{G}/S$ are very similar. Indeed, forgetting about the winning condition, the game $\mathcal{G}$ can be seen as a Kripke model over $\mathbb{N}$ as the set of basic predicates: put $V$ as the set of states, $R$ as the transition relation and the satisfaction relation defined by
\[
	v\models n \text{ if and only if } \Omega(v)=n.
\]
The same can be said for $\mathcal{G}/S$. As it turns out, $\mathcal{G}$ and $\mathcal{G}/S$ are bisimilar in the classical sense:

\begin{lem}
The quotient function $\Pi$ defined for every $v\in V$ by $\Pi(v) = [v]$ is a bisimulation between $\mathcal{G}$ and $\mathcal{G}/S$ understood as Kripke models.
\end{lem}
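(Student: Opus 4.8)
The plan is to verify directly that the graph of $\Pi$, namely $Z=\{\pair{v}{[v]}\mid v\in V\}$, satisfies the three defining conditions of a bisimulation between Kripke models: that related states satisfy the same basic predicates (here the predicates from $\nat$ that encode the rank), and the \emph{forth} and \emph{back} transfer conditions for the transition relation.

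For the predicate-matching condition I would simply observe that $v\models n$ iff $\Omega(v)=n$, while $[v]\models n$ iff $\Omega/S([v])=n$. Since $S$ supports $\Omega$, every element of the $S$-orbit $[v]$ carries the same rank, so $\Omega/S([v])=\Omega(v)$ is well defined (this is precisely why $\Omega/S$ could be declared a function), and hence $v$ and $[v]$ satisfy exactly the same basic predicates. The \emph{forth} condition is then immediate: if $v\step v'$, i.e.\ $\pair{v}{v'}\in R$, then by the definition of $R/S$ we have $\pair{[v]}{[v']}\in R/S$, witnessed by $v\in[v]$ and $v'\in[v']$; and since $\Pi(v')=[v']$, the pair $\pair{v'}{[v']}$ lies in $Z$.

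The only step requiring real work -- and the one I expect to be the main obstacle -- is the \emph{back} condition. Suppose $\pair{[v]}{[w]}\in R/S$. By the definition of $R/S$ there are $x\in[v]$ and $y\in[w]$ with $\pair{x}{y}\in R$; the difficulty is that this move need not start at our chosen representative $v$. To repair this I would use that $x$ and $v$ lie in the same $S$-orbit, so there is some $\pi\in\aut_S(\Atoms)$ with $x\cdot\pi=v$. Applying $\pi$ to the move and invoking that $R$ is $S$-supported yields $\pair{v}{y\cdot\pi}\in R$. Finally, since $\pi\in\aut_S(\Atoms)$ preserves $S$-orbits and $y\in[w]$, we obtain $y\cdot\pi\in[w]$; thus $v':=y\cdot\pi$ is a genuine successor of $v$ in $\mathcal{G}$ with $\Pi(v')=[w]$, i.e.\ $\pair{v'}{[w]}\in Z$, as required.

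The single recurring idea, then, is that the $S$-supportedness of $R$ and $\Omega$ lets one transport an arbitrary orbit-level witness back to any named representative by a suitable automorphism in $\aut_S(\Atoms)$; once this transport is in place, all three bisimulation clauses fall out by unwinding the definitions of $R/S$ and $\Omega/S$.
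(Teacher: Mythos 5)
Your proposal is correct and follows essentially the same route as the paper's own proof: the predicate clause via well-definedness of $\Omega/S$ from $S$-supportedness of $\Omega$, the easy direction directly from the definition of $R/S$, and the back condition by picking $\pi\in\aut_S(\Atoms)$ carrying the orbit-level witness $x$ to the chosen representative and using $S$-supportedness of $R$ to transport the transition. No gaps; the argument matches the paper's in all three clauses.
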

		\begin{proof}
First, if $\Pi(w)=[v]$, then $w\in [v]$ and consequently all the elements of $[v]$ have label $\Omega(w)$. So, by definition $\Omega/S([v]) = \Omega(w)$. 

Now suppose $\Pi(w) = [v]$ and $\pair{[v]}{[z]}\in R/S$. It means that there are $x\in [v]$, $y\in [z]$ such that $\pair{x}{y}\in R$. Since $w$ and $x$ are in the same $S$-orbit, pick a $\pi\in\aut_S(\Atoms)$ such that $x\cdot\pi = w$. Since $S$ supports $R$, we get $\pair{w}{y\cdot \pi}\in R$. But $y\cdot \pi\in[z]$, so $\Pi(y\cdot \pi) = [z]$. 

For the opposite direction, let $\Pi(w) = [v]$ and for some $x$, $\pair{w}{x}\in R$. Then by the definition of $R/S$, $\pair{[v]}{[x]}\in R/S$ and obviously $\Pi(x) = [x]$.
		\end{proof}

Moreover, for every $v\in V$, $v\in V_{\exists}$ if and only if $[v]\in V_{\exists}/S$. As a result, $\exists$ has a (positional) winning strategy from $v$ in $\mathcal{G}$ if and only if she has a (positional) winning strategy from $[v]$ in $\mathcal{G}/S$. This implies that one can effectively decide whether a player has a winning strategy in an orbit-finite atomic parity game $\mathcal{G}$ by calculating first $S=\supp(\mathcal{G})$, then $\mathcal{G}/S$, and finally solving the analogous problem in the finite parity game obtained, using standard methods.

A correspondence of the atomic $\mu$-calculus with atomic parity games relies on defining,
for an atomic modal $\mu$-formula and a  Kripke model $\K$, a parity game with atoms $\evalg{\phi}$
whose nodes are pairs consisting of (occurrences of) subformulas of $\phi$ and states of $\K$, such that $x\models^{\K} \phi$ if and only if $\pair{\phi}{k}$ is a winning node for $\exists$ in $\evalg{\phi}.$
We shall show how to do this for the
\emph{vectorial} $\mu$-calculus $\vaform$; the analogous result for the scalar calculus $\aform$ is a special case. Both results are uncomplicated translations of the classical theorem (see~e.g.~\cite{BraWal15,yde}); the main difference is that the classical proof deals only with the scalar calculus (which in the atom-less world is sufficient, because the vectorial calculus can be translated into the scalar one).

First, notice that alternation depth of a formula can be defined exactly as in the classical case (see e.g. \cite{BraWal15}). In particular, if 
\[\was{X_j.\psi_j}_{j\in J}\]
is an orbite-finite system of equations, then $\psi_j$ are of finitely many syntactical shapes, hence $\mu X_i. \was{X_j.\psi_j}_{j\in J}$ can be assigned a depth as in the atom-less case. The alternation depth of $\phi$ will be denoted by $\alpha(\phi)$. 

We work with a fixed model $\K$, context $\rho$ and a fixed formula $\phi$ in negation normal form, possibly containing free fixpoint variables. 
The nodes of $\evalg{\phi}$ are pairs of the form $\pair{\psi}{x}$
where $x$ is a state in $\K$ and $\psi$ is an \emph{occurrence} of a subformula of $\phi$ (we do not count equations and systems of equations as subformulas; for the sake of simplicity we shall be talking about formulas and not their occurrences, but it must be noted that distinguishing between the two is important for the parity game to be well-defined). This set of nodes is orbit-finite and finitely supported by $\supp(\phi)\cup\supp(\K)$.

Nodes {\em controlled by $\exists$} are of the following types:
\begin{itemize}
\item $\pair{X}{x}$, where $X$ is a free variable in $\phi$ and $x\notin \rho(X)$;
\item $\pair{p}{x}$, where $p$ is a basic predicate, and $x\notin \sembr{p}_{\rho}$\ ;
\item $\pair{\neg p}{x}$, where $p$ is a basic predicate and $x\in \sembr{p}_{\rho}$\ ;
\item $\pair{\bigvee\Phi}{x}$;
\item $\pair{\diam \psi}{x}$.
\end{itemize}

Dually, nodes {\em controlled by $\forall$} are of the following form
\begin{itemize}
\item $\pair{X}{x}$, where $X$ is a free variable in $\phi$ and $x\in \rho(X)$;
\item $\pair{p}{x}$, where $p$ is a basic predicate, and $x\in \sembr{p}_{\rho}$\ ;
\item $\pair{\neg p}{x}$, where $p$ is a basic predicate and $x\notin \sembr{p}_{\rho}$\ ;
\item $\pair{\bigwedge\Phi}{x}$;
\item $\pair{\Box \psi}{x}$.
\end{itemize}
Nodes of the form $\pair{\mu X.\Phi}{x}$, $\pair{\nu X.\Phi}{x}$ and $\pair{X}{x}$ where $X$ is a bound variable, are not controlled by either player. This is not a problem since from each such node there will be only one move available. 

Depending on a type of a node, the following {\em moves} are available:
\begin{description}
\item[\textbf{No move}] from a node of type $\pair{X}{x}$, where $X$ is a free variable, or nodes of type $\pair{p}{x}$ or $\pair{\neg p}{x}$ where $p$ is a basic predicate, no moves are available.
\item[\textbf{Modal move}] from a node of type $\pair{\diam\phi}{x}$ or $\pair{\Box\phi}{x}$ the player that controls the node can move to any node $\pair{\phi}{y}$ such that $x\longrightarrow^{\K} y$;
\item[\textbf{Boolean move}] from a node of type $\pair{\bigwedge \Phi}{x}$ or $\pair{\bigvee \Phi}{x}$, the controlling player can move to any node $\pair{\phi}{x}$ such that $\phi\in \Phi$;
\item[\textbf{Automatic move}] there are two types of automatic moves:
\begin{itemize}
\item from a node of type $\pair{\mu X.\Phi}{x}$ or $\pair{\nu X.\Phi}{x}$ the game moves to the node $\pair{\psi}{x}$ where $\psi$ is the unique formula such that $X.\psi\in\Phi$;
\item from a node of type $\pair{X}{x}$ where $X$ is a variable bound by a (unique) subformula $\mu X'.\Phi$ (or $\nu X'.\Phi$) of $\phi$, the game moves to the node $\pair{\psi}{x}$ where $\psi$ is the unique formula such that the equation $X.\psi$ is in $\Phi$.
\end{itemize}
\end{description}

Modal and boolean moves above are essentially as in the classical setting. The only substantial difference is for automatic moves: the idea is that players are allowed to (and indeed are forced to) travel across various equations from a given system. When a node with a bound variable $X$ is reached, the game automatically transfers to the $X$-th equation of the system that binds it.

Nonzero {\em ranks} are assigned only to nodes of bound variables, and they depend on the alternation depth of the subformulas of $\phi$ which bind those variables. More precisely, consider a node $\pair{X}{x}$, where $X$ is bound in $\phi$ by $\eta X'.\Phi$ (i.e. it occurs on the left-hand side of one of the equations from $\Phi$), for $\eta\in\{\mu,\nu\}$. Then
\begin{itemize}
\item if $\eta = \mu$, then the rank of $\pair{X}{x}$ is $2\cdot\lfloor \frac{\alpha(\mu X'.\Phi)}{2}\rfloor+1$ and
\item if $\eta = \nu$, then the rank of $\pair{X}{x}$ is $2\cdot\lfloor \frac{\alpha(\nu X'.\Phi)}{2}\rfloor$.
\end{itemize}
Now we are ready to prove
\begin{thm}[Adequacy theorem]\label{thm:adequacy}
	For every state $x$ in a Kripke model $\K$ and every formula $\phi$ in $\vaform$,
    \[x\models^{\K} \phi \iff \pair{\phi}{x}\in\Win_{\exists}(\evalg{\phi}).\]
\end{thm}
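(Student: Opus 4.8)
The plan is to treat $\evalg{\phi}$, for the moment, as an ordinary parity game by forgetting the action of atom automorphisms; all the classical machinery---most importantly positional determinacy---then applies verbatim. The only caveat is that the winning strategies we shall construct need not be finitely supported (cf.\ the example above), but this is harmless: the statement to be proved is an equality between the two finitely-supported sets $\sembr{\phi}^{\K}$ and $\{x \mid \pair{\phi}{x}\in\Win_{\exists}(\evalg{\phi})\}$, and nothing forbids using arbitrary (classical) strategies in the argument. Incorporating the environment $\rho$ through the ownership of free-variable nodes, it suffices to prove the two implications
\[
x\in\sembr{\phi}_{\rho} \implies \pair{\phi}{x}\in\Win_{\exists}(\evalg{\phi}), \qquad x\notin\sembr{\phi}_{\rho} \implies \pair{\phi}{x}\in\Win_{\forall}(\evalg{\phi}),
\]
after which positional determinacy forces the equivalence.

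For the first implication I would equip $\exists$ with a \emph{truth-preserving} strategy. Starting from a node $\pair{\phi}{x}$ with $x\in\sembr{\phi}_{\rho}$, at each node $\pair{\psi}{y}$ that she controls she chooses a move that keeps truth: at $\pair{\bigvee\Phi}{y}$ the clause $y\in\bigcup_{\chi\in\Phi}\sembr{\chi}_{\rho}$ yields some $\chi\in\Phi$ with $y\in\sembr{\chi}_{\rho}$, and at $\pair{\diam\psi}{y}$ the semantics of $\diam$ yields a successor $z$ of $y$ with $z\in\sembr{\psi}_{\rho}$. Dually, every $\forall$-move out of a true node leads again to a true node, and the automatic moves through fixpoint formulas and bound variables preserve truth by the unfolding identity for fixpoints. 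Hence $\exists$ can maintain the invariant that every visited node $\pair{\psi}{y}$ satisfies $y\in\sembr{\psi}_{\rho'}$ for the running environment $\rho'$; note that selecting the witnesses above uses a choice function which need not be finitely supported.

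It remains to verify the parity condition, and this is the crux of the proof. Using the approximants of least fixpoints---each of them a finitely-supported set by Lemma~\ref{lem:semfinsupp}---I would attach to every true node an ordinal \emph{signature}, one ordinal per $\mu$-variable in scope, recording the least approximation stage at which the formula becomes true. The unfolding identity $\sembr{\mu X.\Phi} = \bigcup_{\alpha} A^{\alpha}$ shows that each automatic move unfolding a $\mu$-variable strictly decreases the corresponding ordinal, whereas $\nu$-variables impose no such constraint. Ordering signatures lexicographically by alternation depth, one concludes in the standard way that along any infinite play conforming to the strategy the largest rank occurring infinitely often is \emph{even}, i.e.\ belongs to a $\nu$-variable; hence every such play is won by $\exists$. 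The passage from the classical finite systems of equations to orbit-finite ones does not disturb this bookkeeping: although the index set $J$ of $\was{X_j.\phi_j}_{j\in J}$ is only orbit-finite, the bodies $\phi_j$ fall into finitely many syntactic shapes, so the alternation depth $\alpha(\cdot)$ and hence the bounded rank function are well defined, and the fixpoint is computed over the complete lattice $\power(K)^{J}$ exactly as before.

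Finally, the second implication follows by duality. The De Morgan dual $\neg\phi$ in negation normal form (interchanging $\mu$ with $\nu$, $\bigvee$ with $\bigwedge$, and $\diam$ with $\Box$) has evaluation game $\evalg{\phi}$ with the roles of $\exists$ and $\forall$ exchanged and all ranks shifted by one---precisely the parity flip recorded by the $\lfloor\alpha(\cdot)/2\rfloor$ rank assignment above. Applying the first implication to $\neg\phi$ therefore gives $x\notin\sembr{\phi}_{\rho}\Rightarrow\pair{\phi}{x}\in\Win_{\forall}(\evalg{\phi})$, and positional determinacy completes the proof. The main obstacle throughout is the signature/parity argument of the third paragraph; everything else is a routine, atom-insensitive transcription of the classical correspondence.
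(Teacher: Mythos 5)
Your proposal is correct in outline, but it takes a genuinely different route from the paper's. The paper follows Venema's proof of the classical adequacy theorem: it proceeds by structural induction on $\phi$, carrying an orbit-finite set of free variables and their interpretations, and concentrates all fixpoint reasoning in an auxiliary device, the \emph{unfolding game} of a monotone operator $F\colon \power(A)^B\to\power(A)^B$ (viewed as an operator on $\power(B\times A)$); Proposition~\ref{prop:unfg} characterizes least and greatest fixpoints by these games, and the fixpoint case of the induction is settled by translating winning strategies between the unfolding game and the evaluation game. You instead run the signature (Streett--Emerson style) argument directly on $\evalg{\phi}$: a truth-preserving strategy for $\exists$ whose correctness against the parity condition is certified by ordinals extracted from the fixpoint approximants, with the direction $x\notin\sembr{\phi}_\rho$ obtained by dualization rather than carried through an induction. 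Both routes hinge on the same two atom-specific observations, which you identify correctly: winning strategies need not be finitely supported, but the statement only concerns winning regions, so classical (positional) determinacy may be used freely; and the bodies of an orbit-finite system have finitely many syntactic shapes, so alternation depth and the bounded rank function are well defined, while fixpoints are computed in the genuinely complete lattice $\power(K)^J$. The paper's route buys modularity (the unfolding-game proposition is a reusable lemma, and the induction localizes exactly what must change for the vectorial calculus); yours avoids the auxiliary game altogether and gets the $\forall$-direction for free by duality, at the price of concentrating all the difficulty in the signature bookkeeping.

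One point of your sketch needs repair in the vectorial setting: signatures cannot literally have ``one ordinal per $\mu$-variable in scope,'' because a single system $\mu X_i.\was{X_j.\phi_j}_{j\in J}$ binds an orbit-finite, typically infinite, family of variables, so per-variable bookkeeping would produce infinite tuples. The correct accounting is one coordinate per enclosing $\mu$-\emph{system} (finitely many, since formulas have finite depth), where the ordinal recorded at a node $\pair{X_j}{y}$ is the least $\alpha$ with $y\in (A^{\alpha})_j$ for the approximants $A^{\alpha}\in\power(K)^J$ of that system's least fixpoint. With this fix, the strict-decrease claim for moves unfolding a $\mu$-bound variable and the lexicographic well-foundedness argument go through unchanged, and the rest of your proof stands.
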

\begin{proof}
We closely follow the lines of~\cite[Thm.~3.27]{yde}. That proof relies on an auxiliary notion of an \emph{unfolding game}, which needs to be  generalized a little to adapt it to the present context, where fixpoints are taken over families of sets indexed by atoms. This requires only a very small adjustment of the original conctruction.
Given a monotone operator
\[F: \power(A)^B\rightarrow \power(A)^B,\]
we can see it as a monotone operator $F'$ of type
\[\power(B\times A) \rightarrow \power(B\times A).\]
With the above definition one quickly checks that
\begin{align*}
\pair{b}{a}\in \text{lfp}(F') &\iff a\in (\text{lfp}(F))_b.\\
\pair{b}{a}\in \text{gfp}(F') &\iff a\in (\text{gfp}(F))_b.
\end{align*}
Now, the definition the unfolding game of $F$ is as in \cite{yde} (we sketch it for the reader's convenience):  it is a simple parity game with the set of nodes 
\[(B\times A) \cup \power(B\times A).\]
Nodes in $B\times A$ are controlled by $\exists$, the remaining nodes are controlled by $\forall$. From a node $\pair{b}{a}\in B\times A$, the player $\exists$ is allowed to move to any set $C\subseteq B\times A$ such that 
\begin{equation*}\label{equat:move_unfg}
\pair{b}{a}\in F(C).
\end{equation*}
Conversely, given a set $C\subseteq B\times A$, the player $\forall$ chooses any $\pair{b}{a}$ such that
\[\pair{b}{a}\in C.\]

The winning condition depends on whether we want to characterize the greatest or the least fixpoint of $F$. In the former case, every infinite game is won by $\exists$ (i.e. all positions are assigned rank $0$), in the latter, such games are won by $\forall$ (i.e. all positions are assigned rank~$1$). In both cases the player who got stuck, loses.

The unfolding game for the least fixpoint of $F$ will be denoted by $\mathcal{U}^{\mu}_{F}$; the game for the greatest fixpoint by $\mathcal{U}^{\nu}_{F}$.

The following proposition is proved by a mix of routine fixpoint arguments; see the proof of~\cite[Thm.~3.14]{yde}.


\begin{stw}\label{prop:unfg}
For a monotone $F:\power(A)^B\rightarrow \power(A)^B$, let $\text{lfp}(F)$ and $\text{gfp}(F)$ denote the least and the greatest fixpoints of $F$. For any $a\in A$ and $b\in B$,
\begin{itemize}
\item $\pair{a}{b}\in\Win_{\exists}(\mathcal{U}^{\mu}_{F})$ if and only if  $a\in (\text{lfp}(F))_b$, and
\item $\pair{a}{b}\in\Win_{\exists}(\mathcal{U}^{\nu}_{F})$ if and only if  $a\in (\text{gfp}(F))_b$.
\qed
\end{itemize}
\end{stw}

As in~\cite[Thm.~3.27]{yde}, a proof of Theorem~\ref{thm:adequacy} follows by induction on the structure of the formula $\phi$. At each step, allowing a nonempty but orbit-finite set $\was{X_j}_{j\in J}$ of free variables and assuming their interpretations 
$\was{A_j}_{j\in J}$ as subsets $A_j\subseteq K$, one proves that:
\begin{equation}\label{equat:ind_evalg}
x\in \sembr{\phi}_{\was{X_j\mapsto A_j}_{j\in J}} \iff \pair{\phi}{x}\in\Win_{\exists}(\mathcal{G}_{\phi,\mathcal{K}[X_j\mapsto A_j]_{j\in J}}).
\end{equation}

The only case that differs slightly from~\cite{yde} is the fixpoint operator: once again we have to adapt the proof to the case of vectorial calculus. Assume that $\phi = \eta X_i. \Phi$ where $\eta\in\{\mu,\nu\}$ and 
\[\Phi := \was{X_j.\psi_j}_{j\in J}\] 
is an orbit-finite set of equations with $i\in J$.
The monotone operator induced by $\Phi$ on $\power(K)^J$ will be denoted by $\sembr{\Phi}$. For every $j\in J$ we put 
\[\phi^j := \eta X_j.\Phi.\]
(Note that $\phi^i=\phi$.) By our inductive assumption, for every $j$ the condition \eqref{equat:ind_evalg} holds for $\psi_j$. By the definition of $\sembr{\eta X.\Phi}$ and by Proposition~\ref{prop:unfg} it is sufficient to demonstrate that for every $x\in K$ and $j\in J$:
\[
\pair{x}{j}\in \Win_{\exists}(\mathcal{U}^\eta_{\sembr\Phi}) \iff \pair{\phi^j}{x}\in\Win_{\exists}(\evalg{\phi^j}).
\]
The proof of the above fact (for the scalar case) given in ~\cite[Thm.~3.27]{yde} proceeds by translating winning strategies for $\exists$ between the two games. This proof adapts straightforwardly to our case.
\end{proof}
We should remark that in the above proof, the changes that we introduced were driven by the vectorial character of our formalism rather than directly by the use of atoms. One would have to make essentially the same changes in the original proof (i.e. \cite[Thm.~3.27]{yde}) to make it work for the standard (i.e. atom-less) vectorial $\mu$-calculus. As we already mentioned, in the atom-less setting this can be avoided since the atom-less vectorial calculus can be encoded into the scalar one. That encoding fails in the presence of atoms, as we shall show in Section \ref{sect_separation}.


\section{Atomic bisimulation games}\label{sect_bisimulations}
In Sections~\ref{sect_separation}-\ref{sect_limitations} we shall show that atomic $\mu$-calculi cannot define various properties of Kripke models. A general strategy for proving such results is using suitable notions of bisimulation: first prove that bisimilar states cannot be distinguished by any formula, then show a Kripke model and two bisimilar states $p$, $q$ in it such that $p$ has property $P$ but $q$ does not, to conclude that the property $P$ is not definable by any formula. 

In this section, to prepare the ground for such undefinability results, we define two notions of bisimulation (or, more precisely, two hierarchies of bisimulations parametrized by natural numbers). The first, which we call {\em stack bisimulation}, is suitable for scalar $\mu$-calculi with atoms. The second, simply called {\em bisimulation}, is suitable for the vector calculi.

Denote by $\Atoms^{(\leq k)}$ the set of ordered tuples of atoms of length at most $k$. Elements of such sets will be denoted with vector notation: $\vec{a},\vec{b}$ etc. If $|\vec{a}|>0$, let $\pop(\vec{a})$ denote the tuple $\vec{a}$ with the last element removed.
Recall that we write $x\sim y$ to say that $x$ and $y$ are in the same orbit.
\begin{defin}\label{def:k-stack-bisim}
For a number $k\in\mathbb{N}$, a {\em $k$-stack-bisimulation} on a Kripke model ${\mathcal{K}}$ is a symmetric relation $B$ on $K\times\Atoms^{(\leq k)}$ such that, whenever $\pair{x}{\vec{a}}B\pair{y}{\vec{b}}$ then:
\begin{enumerate}[label=(\roman*)]
\item $\pair{\pred(x)}{\vec{a}}\sim\pair{\pred(y)}{\vec{b}}$ (and in particular $|\vec{a}|=|\vec{b}|$),
\item for every $x'$ such that $x\step x'$ there is a $y'$ such that $y\step y'$ and $\pair{x'}{\vec{a}}B\pair{y'}{\vec{b}}$, 
\item if $|\vec{a}|>0$ then $\pair{x}{\pop(\vec{a})}B\pair{y}{\pop(\vec{b})}$,
\item if $|\vec{a}|<k$ then for every $c\in\Atoms$ there exists a $d\in\Atoms$ such that $\pair{x}{\vec{a}c}B\pair{y}{\vec{b}d}$.
\end{enumerate}
Two states are called {\em $k$-stack-bisimilar} if they are related by a $k$-stack-bisimulation.
\end{defin}
Formal definitions of particular bisimulations are often rather dry, and it is convenient to prove their existence using simple bisimulation games. A {\em $k$-stack-bisimulation game} on a model ${\mathcal{K}}$ is played between two players called Spoiler and Duplicator, who make their moves alternately. A legal position in the game is a tuple
\[
	\langle x,\vec{a},y,\vec{b} \rangle \in K\times\Atoms^{(\leq k)}\times K\times\Atoms^{(\leq k)}
\]
such that $\pair{\pred(x)}{\vec{a}}\sim\pair{\pred(y)}{\vec{b}}$ (and in particular $|\vec{a}|=|\vec{b}|$). In such a position, Spoiler can make three kinds of moves. He can either
\begin{itemize}
\item make a {\em model move}:
\begin{itemize}
\item choose a state $x'\in K$ such that $x\step x'$, to which Duplicator must respond by choosing a state $y'\in K$ such that $y\step y'$; or
\item choose a state $y'\in K$ such that $y\step y'$, to which Duplicator must respond by choosing a state $x'\in K$ such that $x\step x'$;
\end{itemize}
the game then proceeds from $\langle x',\vec{a},y',\vec{b} \rangle$ provided that it is a legal position; or
\item make a {\em pop move}, assuming that $|\vec{a}|>0$:
\begin{itemize}
\item remove the last element from $\vec{a}$, to which Duplicator must respond by removing the last element of $\vec{b}$; or
\item remove the last element from $\vec{b}$, to which Duplicator must respond by removing the last element of $\vec{a}$;
\end{itemize}
the game then proceeds from $\langle x,\pop(\vec{a}),y,\pop(\vec{b}) \rangle$, which is always a legal position; or
\item make a {\em push move}, assuming that $|\vec{a}|<k$:
\begin{itemize}
\item extend $\vec{a}$ with some atom $c$, to which Duplicator must respond by extending $\vec{b}$ with some atom $d$; or
\item extend $\vec{b}$ with some atom $d$, to which Duplicator must respond by extending $\vec{a}$ with some atom $c$;
\end{itemize}
the game then proceeds from $\langle x,\vec{a}c,y,\vec{b}d \rangle$, provided that it is a legal position.
\end{itemize}
Duplicator loses if she fails to move the game to a legal position in response to a Spoiler move. Since Spoiler can always make a pop move or a push move, the only way for Duplicator to win is to achieve an infinite play.

By a standard argument, $\pair{x}{\vec{a}}$ and $\pair{y}{\vec{b}}$ are $k$-stack-bisimilar if and only if Duplicator has a winning strategy in the $k$-stack-game from the position $\langle x,\vec{a},y,\vec{b}\rangle$.

We will now show that $k$-stack-bisimilar states cannot be distinguished by formulas from a certain fragment of the atomic $\mu$-calculus $\aform$. To describe this fragment, it is convenient to introduce a notion of an ``ordered'' global support of a subformula.

Given a fixed equivariant formula $\phi$, we define a relation between finite sequences of atoms and (occurrences of) subformulas of $\phi$. For $\vec{a}\in\Atoms^*$ and $\psi\subf\phi$, we write $\vec{a}\gsp_\phi\psi$ to read ``$\vec{a}$ globally supports $\psi$ within $\phi$''. This relation is defined by induction on the depth of the occurrence of $\psi$ within $\phi$:
\begin{itemize}
\item $\vec{a}\gsp_\phi\phi$ if and only if $\vec{a}=\epsilon$,
\item if $\psi\subf\phi$ such that $\psi=\diam\psi'$, $\psi=\Box\psi'$, $\psi=\mu X.\psi'$ or $\psi=\nu X.\psi'$ then $\vec{a}\gsp_\phi\psi'$ if and only if $\vec{a}\gsp_\phi\psi$,
\item if $\psi\subf\phi$ such that $\psi = \bigvee\Psi$ or $\psi=\bigwedge\Psi$ and $\psi'\in\Psi$, then $\vec{e}\gsp_\phi\psi'$ if and only if there is a (necessarily unique, for a given $\vec{e}$) decomposition $\vec{e}=\vec{a}\conc \vec{c}$ such that:
\begin{itemize}
\item $\vec{a}\gsp_\phi\psi$ and
\item $\vec{c}$ is some ordering of the set $\supp(\psi')$.
\end{itemize}
\end{itemize}
Some basic properties of global supports follow directly from the definition. Every (occurrence of) subformula $\psi\subf\phi$ is globally supported by some sequence, possibly more than one,
since the least supports of each formula on the path trom $\psi$ to the root of $\phi$ may be ordered in many ways. However, all global supports of any given $\psi$ within $\phi$ have the same length. Moreover, for subformulas $\psi\subf\theta\subf\phi$, 
\begin{itemize}
\item every global support of $\psi$ within $\phi$ has a (unique) prefix that globally supports $\theta$ within $\phi$, and
\item every global support of $\theta$ within $\phi$ extends (not necessarily uniquely) to a global support of $\psi$ within $\phi$.
\end{itemize}
It is also easy to see that if $\vec{a}\gsp_\phi\psi$ then the set of all atoms in $\vec{a}$ supports $\psi$, not only as a formula but also as its occurrence within $\phi$.

Call a formula $\phi$ {\em globally $k$-supported} if every subformula $\psi\subf\phi$ has a global support of length at most $k$. It is easy to see that every formula $\phi$ in $\aform$, or even in $\vaform$, is globally $k$-supported for some number $k$, bounded from above by the height of $\phi$ multiplied by the maximal size of the least support of a subformula of $\phi$.

The following theorem is formulated for equivariant models and formulas, but it is not difficult to generalize it to ones supported by some fixed set $S$.

\begin{thm}\label{thm:k-stack-bisim}
Assume that $\pair{x}{\epsilon}$ and $\pair{y}{\epsilon}$ are $k$-stack-bisimilar in an equivariant model $\K$. For every equivariant, globally $k$-supported formula $\phi$ in $\aform$, $x\models\phi$ if and only if $y\models\phi$.
\end{thm}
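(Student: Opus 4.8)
The plan is to route everything through the Adequacy theorem (Theorem~\ref{thm:adequacy}) and transfer winning strategies in the evaluation game along the stack-bisimulation. By symmetry of $k$-stack-bisimilarity it suffices to prove one implication, say that $x\models\phi$ implies $y\models\phi$; by Theorem~\ref{thm:adequacy} this amounts to turning a winning strategy for $\exists$ in $\evalg{\phi}$ from the node $\pair{\phi}{x}$ into one from $\pair{\phi}{y}$. I would fix a $k$-stack-bisimulation $B$ witnessing $\pair{x}{\epsilon}B\pair{y}{\epsilon}$, and play the actual game from $\pair{\phi}{y}$ while maintaining a shadow play from $\pair{\phi}{x}$ in which $\exists$ follows her given winning strategy.

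The heart of the argument is an invariant coupling the two plays. At corresponding positions I keep: a shadow node $\pair{\psi_x}{x'}$ and an actual node $\pair{\psi_y}{y'}$; an atom automorphism $\pi$ with $\psi_y=\psi_x\cdot\pi$; global supports $\vec a\gsp_\phi\psi_x$ and $\vec b=\vec a\cdot\pi\gsp_\phi\psi_y$ (so $|\vec a|=|\vec b|\le k$ by global $k$-supportedness); and the bisimilarity $\pair{x'}{\vec a}B\pair{y'}{\vec b}$. The key idea is that the stacks $\vec a,\vec b$ track exactly the global supports of the current subformula occurrences, and that the four closure clauses of Definition~\ref{def:k-stack-bisim} realize the four kinds of evaluation-game transitions. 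A modal move preserves the global support and is matched using clause~(ii) (and, via symmetry of $B$, a $\Box$-move of $\forall$ is pulled back to the shadow the same way); the automatic descents $\mu X.\psi\mapsto\psi$ and $\nu X.\psi\mapsto\psi$ also preserve it and need no bisimulation step; unfolding a bound variable $\pair{X}{x'}\mapsto\pair{\psi}{x'}$ returns from a deep occurrence to the binder's body, whose global support is a prefix of the current stack, and is matched by a run of pop moves (clause~(iii)); and a boolean choice $\bigvee\Psi\mapsto\psi'$ (dually $\bigwedge\Psi\mapsto\psi'$) lengthens the support by an ordering of $\supp(\psi')$, matched by a run of push moves (clause~(iv)), legal because the new length is still $\le k$.

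The one genuinely delicate step is the boolean move, where I must produce the $y$-side disjunct and update $\pi$. Running $\exists$'s shadow strategy picks $\psi'_x\in\Psi_x$ with new support $\vec a\conc\vec c$; pushing $\vec c$ one atom at a time (clause~(iv)) yields a response $\vec d$ with $\pair{x'}{\vec a\conc\vec c}B\pair{y'}{\vec b\conc\vec d}$, and clause~(i) then gives $\vec a\conc\vec c\sim\vec b\conc\vec d$, hence an automorphism $\pi'$ with $\pi'(\vec a\conc\vec c)=\vec b\conc\vec d$; in particular $\pi'(\vec a)=\vec b=\pi(\vec a)$, so $\pi$ and $\pi'$ agree on the atoms of $\vec a$. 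Setting $\psi'_y:=\psi'_x\cdot\pi'$ I check $\psi'_y\in\Psi_y$: since $\Psi_x$ is supported by the atoms of $\vec a$ and $\pi,\pi'$ agree there, $\Psi_x\cdot\pi'=\Psi_x\cdot\pi=\Psi_y$, whence $\psi'_y\in\Psi_x\cdot\pi'=\Psi_y$; and $\vec b\conc\vec d$ is a correct global support of $\psi'_y$ because $\supp(\psi'_y)=\pi'(\supp(\psi'_x))$ is exactly the set of atoms of $\vec d$. The conjunction case is symmetric, with $\forall$'s choice $\psi'_y\in\Psi_y$ pulled back to $\psi'_x:=\psi'_y\cdot\pi'^{-1}\in\Psi_x$ on the shadow. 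I expect this support bookkeeping, together with getting the push/pop alignment against the global-support mechanism exactly right, to be the main obstacle.

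Finally I would verify that the coupled play is winning for $\exists$ on the $y$-side. At a predicate leaf, clause~(i) forces $\pred(x')$ and $\pred(y')$ to agree after transport by $\pi$, and since the current predicate is supported by the atoms of $\vec a$ on which $\pi$ realizes this transport, one gets $x'\models p\iff y'\models p\cdot\pi$, so the two finite plays have the same winner. For infinite plays, the coupling is lockstep, so the two plays visit bound-variable nodes at corresponding occurrences related by $\pi$; as the rank of such a node depends only on the alternation depth of its binding fixpoint, a quantity invariant under atom automorphisms, the two rank sequences coincide and hence so do their $\limsup$'s. The shadow play is won by $\exists$, so the actual play is too, completing the winning strategy from $\pair{\phi}{y}$ and thereby the proof.
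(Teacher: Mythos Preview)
Your proposal is correct and follows essentially the same approach as the paper: both route through the Adequacy theorem and maintain the invariant that the two current subformula occurrences are related by an atom automorphism $\pi$, have global supports $\vec a,\vec b$ of length $\le k$, and $\pair{x'}{\vec a}$ and $\pair{y'}{\vec b}$ are $k$-stack-bisimilar, with modal moves handled by clause~(ii), boolean moves by pushes (clause~(iv)) plus updating $\pi$, and bound-variable unfolding by pops (clause~(iii)). The only difference is packaging: the paper encodes this invariant as a classical bisimulation on the graph of $\evalg{\phi}$ (so that equality of winners follows at once from bisimilarity of parity games), whereas you do an explicit strategy transfer with a shadow play; your slightly tighter invariant $\vec b=\vec a\cdot\pi$ makes the boolean-move bookkeeping marginally cleaner than in the paper.
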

\begin{proof}
Fix $\phi$, $x$ and $y$ as in the assumptions. We will define a bisimulation relation $B$ (in the classical sense) on the graph of the game $\evalg{\phi}$ such that
\begin{equation}\label{eq:xBy}
	\pair{\phi}{x} B \pair{\phi}{y}.
\end{equation}
This theorem will then follow by Theorem~\ref{thm:adequacy}. Note that the graph of $\evalg{\phi}$ can be seen as a Kripke model where labels correspond to ranks of nodes together with the information about which player controls which node. Since $\K$ is fixed we will skip the reference to it. 
   \newcommand{\eval}[1]{\mathcal{G}_{#1}}
   
   The bisimulation $B$ is defined by: $\pair{\psi}{x'}B\pair{\theta}{y'}$ if and only if 
   \begin{enumerate}
   \item there exists $\pi\in\aut(\Atoms)$ such that (the occurrence) $\psi$ in $\phi$ is mapped by $\pi$ to (the occurrence) $\theta$ in $\phi$ (note that $\phi\cdot\pi=\phi$ since $\phi$ is equivariant), and
\item there exist $\vec{a},\vec{b}\in\Atoms^{(\leq k)}$ such that:
\begin{itemize}
\item $\vec{a}\gsp_\phi\psi$,
\item $\vec{b}\gsp_\phi\theta$,
\item $\pair{x'}{\vec{a}}$  and $\pair{y'}{\vec{b}}$ are $k$-stack bisimilar.
\end{itemize}	
    \end{enumerate}
Under this definition~\eqref{eq:xBy} indeed holds: it is enough to put the identity automorphism as $\pi$ and $\vec{a}=\vec{b}=\epsilon$.
    
We need to verify that $B$ is a bisimulation on $\eval{\phi}$. Note that if $\pair{\psi}{x'}B\pair{\theta}{y'}$ then both pairs get the same label (i.e.~rank) in $\eval{\phi}$, since 
the label of a formula depends only on its syntactical properties (i.e.~alternation depth), which are preserved by application of atom automorphims.
      
Now we verify the zig-zag bisimulation condition for $B$. Since $B$ is clearly symmetric, it is enough to verify one direction only. Assume that 
	\[
    \pair{\psi}{x'}B\pair{\theta}{y'}
    \qquad\text{and}\qquad
    \pair{\psi}{x'}\longrightarrow^{\eval{\phi}} \pair{\psi'}{x''}.
    \]
	 We are looking for $\pair{\theta'}{y''}$ such that 
	 \[
	 \pair{\psi'}{x''}B\pair{\theta'}{y''} \qquad\text{and}\qquad \pair{\theta}{y'}\longrightarrow^{\eval{\phi}} \pair{\theta'}{y''}.
	 \] 
	 We cover three non-trivial cases:
	\begin{enumerate}
		\item $\psi = \diam \xi$ (the case $\psi=\Box\xi$ is similar),
		\item $\psi = \bigvee\Psi$ (the case $\psi=\bigwedge\Psi$ is similar), or
		\item $\psi = X$ (for a bound $X$).
	\end{enumerate}
\paragraph{Case 1}
 In such a situation $\psi' = \xi$ and $x'\longrightarrow^{\K} x''$. Moreover, $\theta = \diam \theta'$ for some formula $\theta'$. Since $\pair{\psi}{x'}B\pair{\theta}{y'}$, there are some sequences $\vec{a},\vec{b}\in\Atoms^{(\leq k)}$ such that $\vec{a}\gsp_\phi\psi$, $\vec{b}\gsp_\phi\theta$ and $\pair{x'}{\vec{a}}$ and $\pair{y'}{\vec{b}}$ are $k$-stack bisimilar. By condition (ii) in Definition~\ref{def:k-stack-bisim}, there is some state $y''$ in $\K$ such that $y'\longrightarrow^{\K} y''$ (hence $\pair{\theta}{y'}\longrightarrow^{\eval{\phi}} \pair{\theta'}{y''}$ as required) and $\pair{x''}{\vec{a}}$ and $\pair{y''}{\vec{b}}$ are $k$-stack bisimilar. By definition of the global support relation we have $\vec{a}\gsp_\phi\psi'$, $\vec{b}\gsp_\phi\theta'$, hence $\pair{\psi'}{x''}B\pair{\theta'}{y''}$ (the automorphism $\pi$ that was good for the pair $\psi,\theta$ remains good for $\psi'$, $\theta'$). 
 \paragraph{Case 2}
 In such a situation $x'' = x'$ and $\psi'\in\Psi$.  Moreover, $\theta=\bigvee\Theta$ for some orbit-finite set $\Theta$ of formulas. Since $\pair{\psi}{x'}B\pair{\theta}{y'}$, there are some sequences $\vec{a},\vec{b}\in\Atoms^{(\leq k)}$ such that $\vec{a}\gsp_\phi\psi$, $\vec{b}\gsp_\phi\theta$ and $\pair{x'}{\vec{a}}$ and $\pair{y'}{\vec{b}}$ are $k$-stack bisimilar. Let $\vec{c}$ be any ordering of the finite set $\supp(\psi')$. Then $\vec{a}\conc \vec{c}\gsp_\phi\psi'$, and since by our assumption $\phi$ is globally $k$-supported, the length of $\vec{a}\conc \vec{c}$ is at most $k$.
 
 Since $\pair{x'}{\vec{a}}$ and $\pair{y'}{\vec{b}}$ are $k$-stack bisimilar, by repeated application of condition (iv) in Definition~\ref{def:k-stack-bisim}, there is a sequence $\vec{d}\in\Atoms^*$ such that 
 \[
 \pair{x'}{\vec{a}\conc \vec{c}}\textnormal{ and }\pair{y'}{\vec{b}\conc \vec{d}} \textnormal{ are $k$-stack bisimilar}.
 \]
 This means in particular that $\vec{a}\conc \vec{c}\sim \vec{b}\conc \vec{d}$.
This implies that there exists some $\pi'\in\aut(\Atoms)$ such that:
\[
	\psi\cdot\pi'=\theta \quad\text{(hence }\Psi\cdot\pi'=\Theta\text{)}, \qquad \vec{a}\cdot\pi'=\vec{b} \qquad\text{and}\qquad \vec{c}\cdot\pi'=\vec{d}.
\]
Put $\theta'=\psi'\cdot\pi'$ and $y''=y'$. Then
\[
	\theta'\in\Theta \qquad\text{and}\qquad \vec{b}\conc \vec{d}\gsp_\phi\theta'.
\]
 As a result, 
 	 \[
	 \pair{\psi'}{x''}B\pair{\theta'}{y''} \qquad\text{and}\qquad \pair{\theta}{y'}\longrightarrow^{\eval{\phi}} \pair{\theta'}{y''}
	 \] 
as required.
 
 \paragraph{Case 3}
 In this situation $x''=x'$ and $\psi\subf\psi'=\mu X.\zeta$ is the subformula of $\phi$ that binds $X$. Since $\pi$ maps $\psi$ to $\theta$ as an occurrence of a subformula, we get that 
 \[\theta=X\cdot\pi\subf\theta'=\mu (X\cdot\pi).(\zeta\cdot\pi) \subf \phi.\]
 Clearly, $\psi'\cdot\pi=\theta'$, hence $\theta'$ is the subformula of $\phi$ that binds $X\cdot\pi$ and $\pair{\theta}{y'}\longrightarrow^{\eval{\phi}} \pair{\theta'}{y'}$. 
 
Since $\pair{\psi}{x'}B\pair{\theta}{y'}$, there are some sequences $\vec{a},\vec{b}\in\Atoms^{(\leq k)}$ such that $\vec{a}\gsp_\phi\psi$, $\vec{b}\gsp_\phi\theta$ and $\pair{x'}{\vec{a}}$ and $\pair{y'}{\vec{b}}$ are $k$-stack bisimilar. Let $\vec{c}$ be the (unique) prefix of $\vec{a}$ such that $\vec{c}\gsp_\phi\psi'$, and let $\vec{d}$ be the prefix of $\vec{b}$ with the same length as $\vec{c}$. Then $\vec{d}\gsp_\phi\theta'$. Indeed, by equivariance we have $(\vec{c}\cdot\pi)\gsp_\phi\theta'$, so all sequences that globally support $\theta'$ within $\phi$ must have the same length as $\vec{c}$; moreover, since $\vec{b}\gsp_\phi\theta$ we know that some prefix of $\vec{b}$ must globally support $\theta'$, and $\vec{d}$ is the prefix of $\vec{b}$ of the same length as $\vec{c}$.
 
By repeated use of condition (iii) in Definition~\ref{def:k-stack-bisim}, putting $y''=y'$, we get that $\pair{x''}{\vec{c}}$ and $\pair{y''}{\vec{d}}$ are $k$-stack bisimilar. From this we obtain
  	 \[
	 \pair{\psi'}{x''}B\pair{\theta'}{y''} \qquad\text{and}\qquad \pair{\theta}{y'}\longrightarrow^{\eval{\phi}} \pair{\theta'}{y''}
	 \] 
as required.
\end{proof}

Note that every formula in $\aform$ is globally $k$-supported for some number $k$. Therefore, by Theorem~\ref{thm:k-stack-bisim}, to prove that a property $P$ is not definable it is enough to construct, for every number $k$, a Kripke model ${\mathcal{K}}$ over $\Atoms$ as the language of basic predicate symbols, and two states $x,y\in K$ such that $\pair{x}{\epsilon}$ and $\pair{y}{\epsilon}$ are $k$-stack-bisimilar and $P$ fails for $x$ but holds for $y$. 

As we shall see in Section~\ref{sect_separation}, formulas of vectorial atomic $\mu$-calculus $\vaform$ can distinguish between stack-bisimilar states in a Kripke model. Therefore, to show undefinability results for vector calculi, we shall now introduce a finer notion of bisimulation, where in the corresponding bisimulation games Spoiler can replace atoms in the current position at will, not respecting the stack regime of Definition~\ref{def:k-stack-bisim}. 

\begin{defin}\label{def:k-bisim}
For a number $k\in\mathbb{N}$, a {\em $k$-bisimulation} on a Kripke model ${\mathcal{K}}$ is a symmetric relation $B$ on $K\times\Atoms^{(\leq k)}$ such that, whenever $\pair{x}{\vec{a}}B\pair{y}{\vec{b}}$ then:
\begin{enumerate}[label=(\roman*)]
\item $\pair{\pred(x)}{\vec{a}}\sim\pair{\pred(y)}{\vec{b}}$,
\item for every $x'$ such that $x\step x'$ there is a $y'$ such that $y\step y'$ and $\pair{x'}{\vec{a}}B\pair{y'}{\vec{b}}$, and
\item for every $\vec{c}\in\Atoms^{(\leq k)}$ there exists a $\vec{d}\in\Atoms^{(\leq k)}$ such that $\pair{\vec{a}}{\vec{c}}\sim\pair{\vec{b}}{\vec{d}}$ and
$\pair{x}{\vec{c}}B\pair{y}{\vec{d}}$.
\end{enumerate}
Two states are called {\em $k$-bisimilar} if they are related by a $k$-bisimulation.
\end{defin}

As for stack bisimulations above, $k$-bisimilarity on a Kripke model is an equivalence relation. If shorter tuples $\vec{a}'$ and $\vec{b}'$ arise from $\vec{a}$ and $\vec{b}$ respectively by selecting the same subset of positions, then $\pair{x}{\vec{a}'}$ and $\pair{y}{\vec{b}'}$ are $k$-bisimilar as well. Finally, for $l<k$, the restriction of a $k$-bisimulation to the set $K\times\Atoms^{(\leq l)}$ is an $l$-bisimulation. 

As before, a {\em $k$-bisimulation game} on a model ${\mathcal{K}}$ is played between Spoiler and Duplicator, and a legal position in the game is a tuple
\[
	\langle x,\vec{a},y,\vec{b} \rangle \in K\times\Atoms^{(\leq k)}\times K\times\Atoms^{(\leq k)}
\]
such that $\pair{\pred(x)}{\vec{a}}\sim\pair{\pred(y)}{\vec{b}}$. In such a position, Spoiler can make two kinds of moves. He can either
\begin{itemize}
\item make a {\em model move}:
\begin{itemize}
\item choose a state $x'\in K$ such that $x\step x'$, to which Duplicator must respond by choosing a state $y'\in K$ such that $y\step y'$; or
\item choose a state $y'\in K$ such that $y\step y'$, to which Duplicator must respond by choosing a state $x'\in K$ such that $x\step x'$;
\end{itemize}
the game then proceeds from $\langle x',\vec{a},y',\vec{b} \rangle$ provided that it is a legal position; or
\item make an {\em atom replacement move}:
\begin{itemize}
\item choose any tuple $\vec{c}\in\Atoms^{(\leq k)}$, to which Duplicator must respond by choosing a tuple $\vec{d}\in\Atoms^{(\leq k)}$ such that $\pair{\vec{a}}{\vec{c}}\sim\pair{\vec{b}}{\vec{d}}$; or
\item choose any tuple $\vec{d}\in\Atoms^{(\leq k)}$, to which Duplicator must respond by choosing a tuple $\vec{c}\in\Atoms^{(\leq k)}$ such that $\pair{\vec{a}}{\vec{c}}\sim\pair{\vec{b}}{\vec{d}}$;
\end{itemize}
the game then proceeds from $\langle x,\vec{c},y,\vec{d} \rangle$ provided that it is a legal position.
\end{itemize}
Duplicator loses if she fails to move the game to a legal position in response to a Spoiler move. Since Spoiler can always make an atom replacement move, the only way for Duplicator to win is to achieve an infinite play.

By a standard argument, $\pair{x}{\vec{a}}$ and $\pair{y}{\vec{b}}$ are $k$-bisimilar if and only if Duplicator has a winning strategy from the position $\langle x,\vec{a},y,\vec{b}\rangle$.

The following result shows that $k$-bisimilar states cannot be distinguished by formulas from the globally $k$-supported fragment of the vectorial atomic $\mu$-calculus $\vaform$. As for Theorem~\ref{thm:k-stack-bisim}, is not difficult to generalize it to models and formulas supported by some fixed set $S$.

\begin{thm}\label{thm:k-bisim}
Assume that $\pair{x}{\epsilon}$ and $\pair{y}{\epsilon}$ are $k$-bisimilar in an equivariant model $\K$. For every equivariant, globally $k$-supported formula $\phi$, $x\models\phi$ if and only if $y\models\phi$.
\end{thm}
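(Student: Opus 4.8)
The plan is to mimic the proof of Theorem~\ref{thm:k-stack-bisim} almost verbatim, since the statement is the exact analogue for the finer notion of $k$-bisimulation (Definition~\ref{def:k-bisim}) and the vectorial calculus $\vaform$. I would fix an equivariant, globally $k$-supported formula $\phi$ and states $x,y$ with $\pair{x}{\epsilon}$ and $\pair{y}{\epsilon}$ being $k$-bisimilar, and define a bisimulation relation $B$ (in the classical sense) on the graph of the evaluation game $\evalg{\phi}$ such that $\pair{\phi}{x} B \pair{\phi}{y}$. The theorem then follows by the Adequacy Theorem~\ref{thm:adequacy}, exactly as before. The definition of $B$ is the same as in Theorem~\ref{thm:k-stack-bisim}: $\pair{\psi}{x'}B\pair{\theta}{y'}$ iff there is an automorphism $\pi$ mapping the occurrence $\psi$ within $\phi$ to the occurrence $\theta$, and there are sequences $\vec{a},\vec{b}\in\Atoms^{(\leq k)}$ with $\vec{a}\gsp_\phi\psi$, $\vec{b}\gsp_\phi\theta$ such that $\pair{x'}{\vec{a}}$ and $\pair{y'}{\vec{b}}$ are now \emph{$k$-bisimilar}. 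As before, condition~\eqref{eq:xBy} holds by taking $\pi=\id$ and $\vec{a}=\vec{b}=\epsilon$.

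The core work is verifying the zig-zag condition for $B$, again splitting into the three nontrivial cases of a modal move ($\psi=\diam\xi$), a boolean move ($\psi=\bigvee\Psi$), and a bound-variable move ($\psi=X$). The modal case uses condition~(ii) of Definition~\ref{def:k-bisim}, which is identical to condition~(ii) of the stack version, so that case transfers without any change. The crucial observation is that the global support machinery for the boolean and variable cases is driven purely by \emph{push} and \emph{pop} operations on the support tuples — appending an ordering of $\supp(\psi')$ when descending into a disjunct, and truncating to a prefix when following a bound variable back up to its binder. The main point is therefore that a $k$-bisimulation, which by condition~(iii) is closed under \emph{arbitrary} atom-replacement moves $\pair{\vec{a}}{\vec{c}}\sim\pair{\vec{b}}{\vec{d}}$, is in particular closed under the restricted push and pop moves used in those two cases. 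Concretely, a push move $\vec{a}\mapsto\vec{a}\conc\vec{c}$ and a pop move $\vec{a}\mapsto\pop(\vec{a})$ are both special instances of the replacement move in condition~(iii) of Definition~\ref{def:k-bisim}; so I would observe once, at the start of the case analysis, that $k$-bisimilarity satisfies the stack conditions~(iii) and~(iv) of Definition~\ref{def:k-stack-bisim}, and then the boolean and variable cases go through by copying the arguments from the proof of Theorem~\ref{thm:k-stack-bisim}.

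\textbf{The main obstacle}, or rather the only thing requiring genuine care, is checking that the stronger condition~(iii) of Definition~\ref{def:k-bisim} really does subsume both the push condition~(iv) and the pop condition~(iii) of the stack definition, \emph{with the extra orbit constraint} $\pair{\vec{a}}{\vec{c}}\sim\pair{\vec{b}}{\vec{d}}$ that the stack version does not demand. In the boolean case one needs, after extending $\vec{a}$ to $\vec{a}\conc\vec{c}$ with $\vec{c}$ an ordering of $\supp(\psi')$, a response $\vec{d}$ with $\pair{x'}{\vec{a}\conc\vec{c}}$ and $\pair{y'}{\vec{b}\conc\vec{d}}$ $k$-bisimilar \emph{and} $\vec{a}\conc\vec{c}\sim\vec{b}\conc\vec{d}$; here the $k$-bisimulation version actually \emph{gives} this orbit equivalence directly from condition~(iii), so that the existence of the witnessing automorphism $\pi'$ (with $\psi\cdot\pi'=\theta$, $\vec{a}\cdot\pi'=\vec{b}$, $\vec{c}\cdot\pi'=\vec{d}$) is obtained exactly as before. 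In the variable case one passes to a common prefix and applies condition~(iii) instead of iterating condition~(iii) of the stack version. Since the replacement move is strictly more permissive than the stack moves, no new difficulty arises; if anything the proof is marginally simpler, because Duplicator's replacement response already supplies the orbit-equivalence of the support tuples that the stack proof had to recover by hand. I would therefore present the proof as a short remark that the argument of Theorem~\ref{thm:k-stack-bisim} applies \emph{mutatis mutandis}, indicating precisely where condition~(iii) of Definition~\ref{def:k-bisim} replaces the combination of conditions~(iii) and~(iv) of Definition~\ref{def:k-stack-bisim}.
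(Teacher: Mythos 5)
Your overall skeleton --- a classical bisimulation $B$ on the graph of $\evalg{\phi}$ relating $\pair{\phi}{x}$ to $\pair{\phi}{y}$, concluded via Theorem~\ref{thm:adequacy} --- is exactly the paper's, and your treatment of the modal and boolean cases is correct (in the boolean case, condition~(iii) of Definition~\ref{def:k-bisim} instantiated with the extended tuple $\vec{a}\conc\vec{c}$ does yield the required $\vec{b}\conc\vec{d}$ together with the orbit equivalence). The gap is in the bound-variable case, and it is precisely the point of the theorem. The statement concerns the \emph{vectorial} calculus $\vaform$, and in the evaluation game of a vectorial formula the automatic move from a node $\pair{X}{x'}$, with $X$ bound by $\mu X'.\Phi$, does \emph{not} travel ``back up to the binder'': it jumps to $\pair{\psi'}{x'}$, where $X.\psi'$ is $X$'s own equation in the orbit-finite system $\Phi$, while the occurrence of $X$ one is standing at generally lies in the right-hand side of a \emph{different} equation of $\Phi$. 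Hence $X$ is in general not a subformula of $\psi'$, no global support $\vec{c}\gsp_\phi\psi'$ is a prefix of $\vec{a}$, and $\vec{c}$ may contain atoms absent from $\vec{a}$ whose relationship to the atoms of $\vec{a}$ is essential. So ``passing to a common prefix'' is not available; as written, your proof only verifies the zig-zag condition for a move that the vectorial game does not make.

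Nor can this be repaired by your reduction ``replacement subsumes push and pop.'' That observation is true, but it goes the wrong way: what the variable case needs is that one replacement move is strictly \emph{stronger} than a pop-then-push sequence, because after popping, Duplicator's answers to the new pushes need not stand in the same orbit relation to her discarded atoms as Spoiler's new atoms do to his. Condition~(iii) of Definition~\ref{def:k-bisim} demands the \emph{joint} equivalence $\pair{\vec{a}}{\vec{c}}\sim\pair{\vec{b}}{\vec{d}}$, and this is exactly what the paper's Case~3 exploits: it extracts a single automorphism $\tau$ with $\vec{a}\cdot\tau=\vec{b}$ and $\vec{c}\cdot\tau=\vec{d}$, deduces $X\cdot\tau=\theta$ from the fact that $\vec{a}$ supports the occurrence of $X$, and concludes that $\psi'\cdot\tau$ is the target of the automatic move from $\pair{\theta}{y'}$ and that $\vec{d}\gsp_\phi(\psi'\cdot\tau)$. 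Without the joint equivalence there is no way to match the two automatic moves by one automorphism. A sanity check showing your shortcut cannot work: if the variable case really were driven by push and pop alone, the argument of Theorem~\ref{thm:k-stack-bisim} itself would go through for vectorial formulas, and $k$-stack-bisimilar states would agree on all of $\vaform$ --- contradicting Section~\ref{sect_separation}, where the \chain{} property, definable in $\veqaform$, separates states that are $k$-stack-bisimilar.
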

\begin{proof}
The idea is the same as in the proof of Theorem~\ref{thm:k-stack-bisim}: for any globally $k$-supported formula $\phi$ in $\vaform$, we define a bisimulation $B$ on the graph of the game
$\evalg{\phi}$ such that $\pair{\phi}{x}$ and $\pair{\phi}{y}$ are related by $b$. 

The relation $B$ is defined exactly as before, with $k$-stack bisimulation replaced by $k$-bisimulation. In checking the bisimulation condition, the only difference is the case of bound variables (i.e.~Case 3), arising from the fact that the binding fixpoint constructions are now richer than those in the scalar calculus. Assume that 
 \[\pair{X}{x'}B\pair{\theta}{y'}\]
 and $X$ is a variable bound in $\mu Y.\Phi\subf\phi$. By definition of $B$, there is some $\pi\in\aut(\Atoms)$ such that $X\cdot\pi=\theta$, so $\theta$ is a variable bound in some $\mu (Y\cdot\pi).(\Phi\cdot \pi)\subf\phi$. 
 
The game $\evalg{\phi}$ makes an automatic move from the node $\pair{X}{x'}$ to the node $\pair{\psi'}{x'}$, where $\psi'$ is the unique formula such that $X.\psi'\in\Phi$. Note, however, that $X$ is not necessarily a subformula of $\psi'$, so there is no guarantee that the automatic move from $\pair{\theta}{y'}$ goes to the node $\pair{\psi'\cdot\pi}{y'}$.

Since $\pair{X}{x'}B\pair{\theta}{y'}$, there are some sequences $\vec{a},\vec{b}\in\Atoms^{(\leq k)}$ such that $\vec{a}\gsp_\phi X$, $\vec{b}\gsp_\phi\theta$ and $\pair{x'}{\vec{a}}$ and $\pair{y'}{\vec{b}}$ are $k$-bisimilar. Moreover, let $\vec{c}\in A^{(\leq k)}$ be such that $\vec{c}\gsp_\phi\psi'$. Note that it may not be possible to choose $\vec{c}$ which is a prefix of $\vec{a}$; this is the crucial difference from the proof of Theorem~\ref{thm:k-stack-bisim}.

Since $\pair{x'}{\vec{a}}$ and $\pair{y'}{\vec{b}}$ are $k$-bisimilar, thanks to condition (iii) in Definition~\ref{def:k-bisim} we can choose $\vec{d}$ such that $\pair{\vec{a}}{\vec{c}}\sim\pair{\vec{b}}{\vec{d}}$ and such that $\pair{x'}{\vec{c}}$ and $\pair{y'}{\vec{d}}$ are $k$-bisimilar. So pick some $\tau\in\aut(\Atoms)$ such that $\vec{a}\cdot\tau=\vec{b}$ and $\vec{c}\cdot\tau=\vec{d}$. By definition of $\vec{a}$ we have that
\[
	X\cdot\tau = X\cdot\pi = \theta
    \qquad\text{and} \quad
    \Phi\cdot\tau = \Phi\cdot\pi.
\]
As a result, $\psi'\cdot\tau$ is the 
formula such that the equation 
\[\theta.(\psi'\cdot\tau)\]
is in $\Phi\cdot\tau$. Also, since $\vec{c}\gsp_\phi\psi'$, hence $\vec{d}\gsp_\phi(\psi'\cdot\tau)$. Hence putting $\theta' = \psi'\cdot\tau$ and $y'' = y'$ we are done.
\end{proof}

As we already observed, every formula of the vectorial atomic $\mu$-calculus $\vaform$ is globally $k$-supported for some number $k$. Therefore, by Theorem~\ref{thm:k-bisim}, to prove that a property $P$ is not definable it is enough to construct, for every number $k$, an orbit-finite Kripke model ${\mathcal{K}}$ over $\Atoms$ as the language of basic predicate symbols, and two states $x,y\in K$ such that $\pair{x}{\epsilon}$ and $\pair{y}{\epsilon}$ are $k$-bisimilar and $P$ fails for $x$ but holds for $y$.


\section{Undecidability results}\label{sect_undecidability}
The main purpose of this section is to show that the satisfiability of atomic $\mu$-calculus formulas is undecidable, even for the scalar calculus $\aform$. We start by showing that satisfiability of atomic LTL formulas is undecidable and end with the same conclusion about the model checking problem for CTL$^{*}$. All proofs in this section work the same both for equality and ordered atoms.

\begin{thm}\label{thm:ltl-sat-undec}
It is undecidable whether a given atomic LTL formula is satisfiable.
\end{thm}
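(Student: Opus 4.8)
The plan is to reduce an undecidable problem—most naturally the halting problem, or equivalently the non-emptiness/Post correspondence problem—to the satisfiability of atomic LTL. Since LTL is interpreted over deterministic Kripke models, which uniquely determine infinite paths, a satisfiable formula corresponds to an infinite sequence of states, each carrying a finitely supported set of basic predicates $\pred(x_n)\subseteq\Atoms$. The key observation that makes atoms useful is that the orbit-finite disjunctions available in atomic LTL let us existentially and universally quantify over atoms, so that a single position can be tagged with an atom and that atom can be referred to again at a later position. This ``freeze and compare'' capability is exactly what is needed to encode counters or tape contents of a computing device.

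**Key steps.** First I would fix a vocabulary of basic predicates rich enough to encode configurations: for instance, a predicate for each atom (so that a position may ``carry'' one or more atoms as data), together with finitely many control predicates marking the role of a position (e.g.\ cell markers, head position, the current control state of a machine, and separators between consecutive configurations). Second, I would write atomic LTL formulas asserting the structural invariants of a computation: that the path is partitioned into blocks each representing one configuration, that the initial block encodes the initial configuration, and that consecutive blocks are related by the transition function of the machine. The transition constraint is the heart of the encoding: using orbit-finite conjunctions $\bigwedge_{a\in\Atoms}$ and orbit-finite disjunctions $\bigvee_{a\in\Atoms}$ together with the temporal operators $\mathsf{X}$, $\mathsf{U}$, $\mathsf{F}$, $\mathsf{G}$, one expresses that whenever an atom $a$ is stored at some cell in one configuration, the ``same'' atom $a$ reappears at the corresponding cell in the next configuration (except where the head rewrites it). This is where atoms replace an unbounded alphabet: we store unboundedly many distinct data values using a single orbit of predicates. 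Third, I would add an acceptance condition (reach a halting state, expressed with $\mathsf{F}$) and argue the two directions of the reduction: a halting computation yields a deterministic model satisfying the formula, and conversely any model of the formula encodes a legal halting run.

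**Main obstacle.** The hardest part will be the faithful enforcement of the ``copy the tape forward'' constraint purely within LTL's linear, one-step-lookahead temporal operators, since LTL cannot directly count positions or align cell $i$ of one block with cell $i$ of the next. The standard trick is to let atoms do the alignment work: each tape cell is tagged with a fresh atom acting as an address, and the invariant $\bigwedge_{a\in\Atoms}\mathsf{G}\big(a\rightarrow \psi_a\big)$ says that a position carrying address $a$ is eventually followed by the next-configuration position carrying the same address $a$, with matching content. Making this precise—ensuring addresses are reused consistently across blocks and that no spurious models arise—requires care, and the orbit-finiteness of all the disjunctions and conjunctions involved must be checked (each such connective must range over a single orbit such as $\Atoms$ or $\Atoms^2$, so that the resulting formula is a legal element of atomic LTL). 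Once the encoding is shown to be correct and finitely presentable, undecidability of satisfiability follows from undecidability of the halting problem. I would note that the argument uses only equality of atoms and never the order, so it applies uniformly to both equality and ordered atoms.
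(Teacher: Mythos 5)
Your proposal is correct and follows essentially the same route as the paper's proof, which adapts the register-automata argument of Neven--Schwentick--Vianu: configurations of a Turing machine are laid out as separator-delimited blocks along the path, atoms serve as tape-cell addresses enabling the ``freeze and compare'' copy-forward constraints via orbit-finite conjunctions, and acceptance is expressed with ${\sf F}$. The paper realizes your ``consistent reuse of addresses across blocks'' by forcing the model to have the shape $\$w\$w\$w\cdots$ for a fixed repetition-free word $w$ of atom predicates, exactly the care point you flagged.
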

\begin{proof}
The proof follows the lines of the proof from~\cite{NSV01} of the undecidability of the universality problem for register automata. 

Given a Turing machine $\mathcal{M}$ with a (finite) set of states $Q$, a (finite) working alphabet $\Gamma$ (including a blank symbol $\flat$), initial and accepting states $q_{init},q_{acc}\in Q$ and a transition relation $\delta$, we shall build an atomic LTL formula that is satisfiable if and only if $\mathcal{M}$ accepts the empty word. Without loss of generality assume that $\mathcal{M}$, upon reaching an accepting configuration, enters an infinite loop in that configuration.

Consider the following language of basic predicate symbols:
\begin{itemize}
\item a special symbol $\$$,
\item ${\tt atom}_a$ for each $a\in\Atoms$,
\item ${\tt tape}_\gamma$ for each $\gamma\in\Gamma$,
\item ${\tt head}_q$ for each $q\in Q$.
\end{itemize}
Our formula shall be a conjunction of several properties. First, we enforce that every state in a model either satisfies $\$$ (and no other basic predicates) or satisfies exactly one predicate ${\tt atom}_a$, exactly one predicate ${\tt tape}_{\gamma}$ and at most one predicate ${\tt head}_q$. This is ensured by a conjunction of formulas such as:
\[
{\sf G}\left(\$\vee \bigvee_{a\in\Atoms}{\tt atom}_a\right) \quad\text{and}\quad\bigwedge_{a\neq b\in\Atoms}{{\sf G}({\tt atom}_a\to \neg {\tt atom}_b)}
\]
and so on. We also ensure that the initial state of the model satisfies $\$$.

Furthermore, we ensure that as far as predicates $\$$ and ${\tt atom}_a$ are concerned, the model can be presented as an infinite word:
\[
	\$w\$w\$w\cdots
\]
where $w$ is a finite sequence of predicate symbols of the form ${\tt atom_a}$ such that no single predicate appears in $w$ more than once. This is achieved by a conjunction of formulas such as:
\begin{itemize}
\item $\bigwedge_{a,b\in\Atoms}{\sf G}({\tt atom}_a\land {\sf X}{\tt atom}_b \to {\sf G}({\tt atom}_a\to {\sf X}{\tt atom}_b))$
\item $\bigwedge_{a\in\Atoms}{\sf G}({\tt atom}_a\to (\neg {\tt atom}_a {\sf U} \$))$.
\end{itemize}

Portions of the model between two consecutive occurrences of $\$$ will store configurations of $\mathcal{M}$. To this end, we ensure that each portion contains exactly one state that satisfies some ${\tt head}$ predicate:
\[
{\sf G}(\$\to{\sf X}((\neg \psi\land\neg\$) {\sf U} (\psi \land {\sf X}(\neg \psi {\sf U} \$))))
\]
where $\psi = \bigvee_{q\in Q}{\tt head}_q$.

We then ensure that every two consecutive configurations encode a legal step of $\mathcal{M}$. Predicates ${\tt atom}$ are useful for this, as they trace single tape cells in subsequent configurations. To ensure that the letters on the tape do not change unless the machine head is directly over them, we state for each $\gamma\in\Gamma$:
\[
	\bigwedge_{a\in\Atoms} {\sf G}({\tt atom}_a\land{\tt tape}_{\gamma}\land\neg\psi\to {\sf X}(\neg{\tt atom}_a {\sf U} ({\tt atom}_a\land {\tt tape}_{\gamma})))
\]
where $\psi$ is as before.

Furthermore, for every ``head to the right'' rule
\[
	\langle q,\gamma,q',\gamma',\Rightarrow\rangle \in \delta
\]
we add a formula
\[
	\bigwedge_{a\in\Atoms}{\sf G}({\tt atom}_a\land {\tt head}_q\land{\tt tape}_{\gamma}\to {\sf X}(\neg {\tt atom}_a {\sf U} ({\tt atom}_a\land {\tt tape}_{\gamma'} \land {\sf X}{\tt head}_{q'})))
\]
and similarly for ``head to the left'' transition rules.

Finally we ensure the correct form of the initial configuration and that an accepting state is reached, by adding formulas ${\sf X}{\tt head_{q_{init}}}$, ${\sf X}({\tt tape}_{\flat}{\sf U}\$)$ and ${\sf F}{\tt head}_{q_{acc}}$.

Models of the conjunction of all these formulas correspond to accepting runs of $\mathcal{M}$ on the empty input word. As a result, it is undecidable whether an LTL formula has a model.
\end{proof}

Properties used in the above proof can be formulated in atomic $\mu$-calculus as well. 
As an immediate result, it is undecidable whether a given atomic $\mu$-calculus formula has a {\em deterministic} model. Some more care is needed to prove that general satisfiability is undecidable, but:

\begin{thm}\label{thm:mu-sat-undec}
It is undecidable whether a given formula of the atomic $\mu$-calculus is satisfiable.
\end{thm}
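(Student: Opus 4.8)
The plan is to reduce from the halting problem by reusing the $\LTL$ encoding built in the proof of Theorem~\ref{thm:ltl-sat-undec} and translating it into $\aform$. The one genuine difficulty is that the $\LTL$ formula there is read over \emph{deterministic} models (infinite words), whereas satisfiability of $\aform$ ranges over arbitrary, possibly branching models; and since the $\mu$-calculus is bisimulation-invariant, one cannot simply add a conjunct forcing the model to be deterministic. The whole content of the ``more care'' is to translate $\LTL$ into $\aform$ in a way that is immune to spurious branching.

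First I would recall the standard encodings of the temporal operators on deterministic models, all of which carry over verbatim to the atomic setting, because the encoding of Theorem~\ref{thm:ltl-sat-undec} uses only orbit-finite boolean connectives, which are already part of $\aform$. Writing $\Phi$ for the $\LTL$ formula produced there and replacing each temporal connective according to
\[
{\sf X}\psi \mapsto \diam\psi,\quad {\sf G}\psi\mapsto \nu X.(\psi\land\Box X),\quad {\sf F}\psi\mapsto \mu X.(\psi\lor\diam X),\quad \psi_1{\sf U}\psi_2\mapsto\mu X.(\psi_2\lor(\psi_1\land\diam X)),
\]
produces an $\aform$ formula whose deterministic models are exactly the deterministic models of $\Phi$. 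This already yields that it is undecidable whether a given $\aform$ formula has a deterministic model.

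To obtain undecidability of plain satisfiability, I would instead use the \emph{universal-branching} translation, replacing each ``next''-style step by a universal step guarded by $\diam\top$:
\[
{\sf X}\psi\mapsto \diam\top\land\Box\psi,\quad {\sf G}\psi\mapsto\nu X.(\psi\land\Box X),\quad {\sf F}\psi\mapsto\mu X.(\psi\lor(\diam\top\land\Box X)),\quad\psi_1{\sf U}\psi_2\mapsto\mu X.(\psi_2\lor(\psi_1\land\diam\top\land\Box X)),
\]
and call the resulting formula $\phi^\forall\in\aform$. For \emph{completeness}, the single-path ($\omega$-word) model exhibited in the proof of Theorem~\ref{thm:ltl-sat-undec} has exactly one successor at each state, so on it $\diam$ and $\Box$ coincide, $\diam\top$ holds everywhere, and $\phi^\forall$ agrees with $\Phi$; hence if $\mathcal{M}$ accepts the empty word then $\phi^\forall$ is satisfiable. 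For \emph{soundness}, suppose $\phi^\forall$ holds at a state $s_0$ of an arbitrary model $\K$. The $\diam\top$ conjuncts, present under the global constraints, force every reachable state to have a successor, so every maximal path from $s_0$ is infinite; and because every structural constraint was translated with $\Box$ (an ``on all successors'' reading), \emph{every} such path satisfies all the constraints of $\Phi$ and, through the translation of ${\sf F}{\tt head}_{q_{acc}}$, eventually reaches an accepting head state. Thus any such path encodes a genuine accepting run of $\mathcal{M}$, so $\mathcal{M}$ accepts. Consequently $\phi^\forall$ is satisfiable if and only if $\mathcal{M}$ accepts the empty word, which is undecidable.

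The main obstacle is precisely this soundness direction: one must verify that reading every ``next''/``until''/``eventually'' universally really prevents a branching model from cheating, i.e.\ that no resolution of the nondeterminism can satisfy $\phi^\forall$ without every path being a faithful run. Concretely this amounts to checking that the least- and greatest-fixpoint translations of ${\sf U}$, ${\sf F}$ and ${\sf G}$ still express, on an arbitrary model, ``on all paths \ldots'' together with ``a path exists'', and that the segment-and-loop structure of the word $\$w\$w\cdots$ from Theorem~\ref{thm:ltl-sat-undec} survives this reading. The atoms contribute no additional complication, since all the infinitary connectives involved are the same orbit-finite disjunctions and conjunctions already used in the $\LTL$ proof, and the argument is insensitive to whether the atoms carry an order.
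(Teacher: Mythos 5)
Your proposal is correct, and its core idea coincides with the paper's: translate the \LTL{} encoding of Theorem~\ref{thm:ltl-sat-undec} into $\aform$ by reading each temporal step universally, via $\Box$, so that no resolution of nondeterminism can produce an unfaithful path, and then make sure at least one infinite path exists. The difference lies in how that last point is secured, and it changes the shape of the reduction. The paper keeps the translation minimal ($M({\sf X}\phi)=\Box M(\phi)$, $M(\phi\,{\sf U}\,\psi)=\mu Y.(M(\psi)\lor(M(\phi)\land\Box Y))$, $M(\phi\,{\sf R}\,\psi)=\nu Y.(M(\psi)\land(M(\phi)\lor\Box Y))$), proves two generic facts --- on word models $\phi$ implies $M(\phi)$, and on arbitrary models $M(\phi)$ forces every infinite path to satisfy $\phi$ --- and then buys path existence with the single extra conjunct $\nu X.\diam X$. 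This yields a reduction that is uniform in the \LTL{} formula: \LTL{} satisfiability itself reduces to $\aform$ satisfiability, with no inspection of the particular formula $\Phi$. You instead use the deadlock-free universal embedding, threading $\diam\top$ guards through ${\sf X}$, ${\sf F}$ and ${\sf U}$, and derive path existence from $\Phi$'s \emph{specific} constraints: every reachable state satisfies $\$$ or some ${\tt atom}_a$ by the typing constraints, and in each case some ${\sf X}$- or ${\sf U}$-constraint plants a $\diam\top$ at that state, so every maximal path is infinite. That argument is sound, but it is exactly the $\Phi$-specific bookkeeping you flag as the main obstacle, and it has to be checked constraint by constraint (including how negated ${\sf X}$ is put into negation normal form); the paper's conjunct $\nu X.\diam X$ makes this bookkeeping unnecessary and gives the more reusable statement. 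Apart from this, the two arguments agree, down to your intermediate observation that having a \emph{deterministic} model is undecidable, which appears verbatim in the paper.
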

\begin{proof}
For any LTL formula $\phi$ in the negation normal form (i.e., one where negations occur only in front of basic predicates, and where conjunction, disjunction, ${\sf X}$, ${\sf U}$ and ${\sf R}$ modal operators are used), construct an atomic $\mu$-calculus formula $M(\phi)$ by induction as follows:
\begin{align*}
 M(\top) &= \top, \qquad M(\bot)=\bot \\
 M(p) &= p, \qquad M(\neg p) = \neg p \\
 M({\sf X}\phi) &=  \Box M(\phi) \\
 M(\phi\lor\psi) &= M(\phi)\lor M(\psi) \\
 M(\phi\land\psi) &= M(\phi)\land M(\psi) \\
 M(\phi {\sf U}\psi) &= \mu Y.(M(\psi)\lor(M(\phi)\land\Box Y)) \\
 M(\phi {\sf R}\psi) &= \nu Y.(M(\psi)\land(M(\phi)\lor\Box Y))
\end{align*}
This translation does not have all properties that may be desired (see e.g.~\cite{CGR11}) but it is sufficient for our purposes:
\begin{enumerate}[label=(\roman*)]
\item In every word model, if a state satisfies $\phi$ then it satisfies $M(\phi)$, 
\item In every Kripke model $\K$, if a state $x$ satisfies $M(\phi)$ then every infinite path in $\K$ that starts from $x$, considered as a word model, satisfies $\phi$.
\end{enumerate}
Note that the converse to the implication in (ii) does not hold in general (consider e.g. $\phi={\sf X}p \lor {\sf X}q$ and its translation $M(\phi)=\Box p\lor \Box q$). Both properties (i) and (ii) are proved by induction, for example for (ii):
\begin{itemize}
\item Assume $x\models M(\phi\lor\psi)$. Without loss of generality, assume $x\models M(\phi)$. By the inductive assumption, every path starting from $x$ satisfies $\phi$, so it satisfies $\phi\lor\psi$.
\item Assume $x\models M(\phi {\sf U}\psi)$. By definition of $M$, on every path $\pi$ starting from $x$ the formula $M(\phi)$ holds in every state $y$ until at some point $M(\psi)$ holds. Now, for each state $y$ the path $\pi$ has a sub-path that starts at $y$, and by the inductive assumption $\phi$ holds for all these subpaths until at some point $\psi$ holds. As a result, $\phi {\sf U}\psi$ holds for the path $\pi$.
\end{itemize}

Properties (i) and (ii) immediately imply that $\phi$ is satisfiable if and only if $M(\phi)$ is satisfiable in a state where some infinite path begins. Putting $\psi = \nu X.\diam X$, a formula which holds exactly in those states where an infinite path begins, we get that $\phi$ is satisfiable if and only if $M(\phi) \land \psi$ is satisfiable. By Theorem~\ref{thm:ltl-sat-undec}, satisfiability of atomic $\mu$-calculus formulas is undecidable.
\end{proof}

Finally, as a corollary of Theorem~\ref{thm:ltl-sat-undec} we obtain:

\begin{thm}\label{thm:ctl-mc-undec}
The model checking problem for atomic CTL$^*$ is undecidable.
\end{thm}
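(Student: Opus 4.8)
The plan is to reduce the satisfiability problem for atomic LTL, shown undecidable in Theorem~\ref{thm:ltl-sat-undec}, to the model-checking problem for atomic $\CTL^*$, exploiting the existential path quantifier $\exists$. Given a Turing machine $\mathcal{M}$, let $\phi$ be the atomic LTL formula produced in the proof of Theorem~\ref{thm:ltl-sat-undec}, so that $\phi$ is satisfiable if and only if $\mathcal{M}$ accepts the empty word. I would build a single fixed \emph{universal} orbit-finite Kripke model $\K$ whose paths realize exactly the word models relevant to $\phi$, and then decide satisfiability of $\phi$ by model-checking the $\CTL^*$ state formula $\exists\phi$ at a designated state of $\K$.

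The conjuncts of $\phi$ force every position of any word model to satisfy one of the letters $\{\$\}$, $\{{\tt atom}_a,{\tt tape}_\gamma\}$, or $\{{\tt atom}_a,{\tt tape}_\gamma,{\tt head}_q\}$, with $a\in\Atoms$, $\gamma\in\Gamma$, $q\in Q$. Since $\Gamma$ and $Q$ are finite and $a$ ranges over the single orbit $\Atoms$, the set $L$ of such letters is orbit-finite. I would take $\K$ to have state set $L$, the complete transition relation $L\times L$, and satisfaction given by $\ell\models p$ iff $p\in\ell$. Then $\K$ is equivariant and orbit-finite, and the infinite paths of $\K$ starting at the state $\{\$\}$ are precisely the infinite words over $L$ beginning with $\$$, each viewed as a deterministic pointed model in the LTL sense.

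The reduction outputs, for each $\mathcal{M}$, the model $\K$, the state $\{\$\}$, and the formula $\exists\phi$; all three are computable from $\mathcal{M}$ using the presentations of orbit-finite sets from Remark~\ref{rem:computability}. Because every model of $\phi$ uses only letters from $L$ and begins with $\$$ (both enforced by $\phi$), the word models of $\phi$ coincide with the paths of $\K$ from $\{\$\}$ that satisfy $\phi$. Hence $\{\$\}\models_{\CTL^*}\exists\phi$ if and only if $\phi$ is satisfiable, i.e.\ if and only if $\mathcal{M}$ accepts the empty word, and undecidability of $\CTL^*$ model checking follows.

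The step that needs care, and the one I expect to be the main obstacle, is the design decision flagged after Definition~\ref{def:amu-sem} about whether the witnessing path $\pi$ in the semantic clause for $\exists$ must be finitely supported. I would resolve this by observing that every model of $\phi$ uses only finitely many atoms: the conjuncts of $\phi$ force each block between consecutive occurrences of $\$$ to be a repetition-free sequence of atoms and force all blocks to carry the same atoms in the same order, so an accepting run on a finite tape uses one fixed finite set of atoms throughout. Consequently any path of $\K$ satisfying $\phi$ is automatically finitely supported, and the equivalence $\{\$\}\models\exists\phi$ iff $\phi$ satisfiable holds under either reading of the semantics of $\exists$, exactly as the text anticipates.
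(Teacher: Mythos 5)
Your proposal is correct and takes essentially the same approach as the paper: the paper's proof reduces atomic LTL satisfiability to CTL$^*$ model checking using exactly this universal model --- its state set $(\Atoms\times\Gamma\times(Q\cup\{{\tt nohead}\}))\cup\{\$\}$ with the complete transition relation is your letter set $L$ written in tuple form --- and checks $\exists\phi$ at the state $\$$. Your handling of the finitely-supported-path subtlety is also the paper's: the formula $\phi$ from Theorem~\ref{thm:ltl-sat-undec} forces its models to be finitely supported, so the equivalence holds under either reading of the semantics of $\exists$.
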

\begin{proof}
We reduce the satisfiability problem for atomic LTL, with some insight into the proof of Theorem~\ref{thm:ltl-sat-undec}. Given any Turing machine $\mathcal{M}$ as considered there, consider a Kripke model with the set of states:
\[
	(\Atoms\times\Gamma\times(Q\cup\{\tt{nohead}\}))\cup\{\$\}
\]
with basic predicates defined in the obvious way, and with transitions going both ways between every two states. This model is orbit-finite and equivariant.

Now, for the atomic LTL formula $\phi$ obtained from $\mathcal{M}$ as in the proof of Theorem~\ref{thm:ltl-sat-undec}, the formula $\exists\phi$ holds in this model in the state $\$$ if and only if $\phi$ is satisfiable. 

This works regardless of whether we interpret CTL* path formulas over arbitrary paths or over finitely supported ones, because the formula $\phi$ in the proof of Theorem~\ref{thm:ltl-sat-undec} forces its model to be finitely supported. 
\end{proof}

As a corollary we conclude that, unlike in the classical, atom-less setting, atomic CTL$^*$ is {\em not} a fragment of the atomic $\mu$-calculus.


\section{Separation results}\label{sect_separation}
In previous sections we introduced four atomic $\mu$-calculi altogether: the scalar $\eqaform$ and the vectorial $\veqaform$ calculus for equality atoms, and their corresponding versions $\ordaform$ and $\vordaform$ for ordered atoms. It is clear that the vectorial versions are at least as expressive as the scalar ones, and that ordered-atoms calculi can express all properties that equality-atoms calculi can. In this section we show that all four calculi are in fact different, and we show three properties of Kripke models, called \infsucc{}, \chain{} and \evensucc{}, that separate the four calculi as shown:
\[\xymatrix{
& \vordaform \\
\ordaform\ar@{.}[ur]^{\evensucc} & & \veqaform\ar@{.}[ul]_{\evensucc} \\
& \eqaform\ar@{.}[ul]^{\infsucc}\ar@{.}[ur]_{\chain}
}\]
Later, in Section~\ref{sect_limitations}, we shall show a further property that is not definable even in $\vordaform$.

\subsection{The \infsucc{} property}\label{sec:infsucc}{\ }

It is obvious that atomic $\mu$-calculus over ordered atoms is able to express many properties that are not definable over equality atoms simply because they refer to the ordering of atoms. However, there are properties that do not rely on that order {\em per se}, but are still definable in $\ordaform$ and not even in the vectorial calculus $\veqaform$.

Consider a vocabulary of basic predicates that coincides with the set $\Atoms$ of atoms. Let \infsucc{} denote the property ``there are infinitely many atoms $a$ such that some immediate successor of the current state satisfies $a$''.

Atom ordering is not mentioned anywhere in the definition on $\infsucc$, and indeed it is an equivariant property of (states in) Kripke models over equality atoms. However, it is not definable in $\veqaform$. To see this, we rely on Theorem~\ref{thm:k-bisim}. 

For any number $k$, we construct a model $\K$ over equality atoms and two states $p,q\in K$ that are $k$-bisimilar in $k$, but such that \infsucc{} holds for $p$ and fails for $q$. To this end, fix some finite set $S\subseteq\Atoms$ with $|S|=2k$. The model will be supported by $S$ (see below how to achieve a similar result with an equivariant model). Let the set of states of $\K$ be:
\[
	K = \{p,q\} + \{r_a\mid a\in\Atoms\}
\]
with the obvious action of $\aut(\Atoms)$ that makes each of $p$ and $q$ a singleton orbit. The satisfaction relation of $\K$ is such that
\[
	\pred(p)=\pred(q)=\emptyset \qquad \text{and} \qquad \pred(r_a)=\{a\} \quad\text{for } a\in\Atoms,
\]
and the transition relation is defined by
\[
p\step r_a \text{ for }  a\not\in S \qquad \text{and} \qquad q\step r_a \text{ for } a\in S.
\]
It is clear that the property \infsucc{} holds for $p$ but fails for $q$. However, $\pair{p}{\epsilon}$ and $\pair{q}{\epsilon}$ are $k$-bisimilar according to Definition~\ref{def:k-bisim}. To see this we describe a winning strategy for Duplicator in the corresponding $k$-bisimulation game.

Starting from the initial position $\langle p,\epsilon,q,\epsilon\rangle$, Spoiler may begin by making a series of atom replacement moves. Duplicator responds to each of these, ensuring that every position reached:
\[
	\langle p,\vec{a},q,\vec{b}\rangle
\]	
is legal, i.e. that $\vec{a}\sim\vec{b}$ (hence in particular $|\vec{a}|=|\vec{b}|$), and that the condition
\[
	a_i\not\in S \iff b_i\in S
\]
holds for each $i=1,\ldots,|\vec{a}|$. It is easy to see that this is possible thanks to the assumption that $|\vec{a}|\leq k$ and $|S|=2k$.

Afterwords, when Spoiler decides to make a model move from (say) $p$ to some $r_a$ such that $a=a_i$ for some $i\leq|\vec{a}|$, Duplicator responds by moving from $q$ to $b_i\in\vec{b}$. Thanks to Duplicator's previous effort, this is always a legal move. Other model moves from Spoiler are dealt with similarly. Afterwords, Spoiler is reduced to atom replacement moves that Duplicator can parry easily.

Our model $\K$ is supported by $S$, but it is easy to achieve a similar results in an equivariant model that includes a copy of our $\K$ for each $S\subseteq\Atoms$ with $|S|=2n$, with two additional equivariant initial states that can make transitions to the states $p$ and $q$, respectively, in each of the copies.

As a result, by Theorem~\ref{thm:k-bisim}, \infsucc{} is not definable in $\veqaform$.

Over ordered atoms, however, the situation is quite different. Notice that in a Kripke model with atoms, the set of successors of any particular state (and therefore the set of basic predicates that hold in those successors) is finitely supported. Recall that over ordered atoms, finitely supported subsets of $\Atoms$ are exactly those that are finite unions of open intervals and single points. As a result, a finitely supported set of atoms is infinite if and only if it contains an open interval. Using this observation, it is easy to see that \infsucc{} is definable in $\ordaform$ by a formula:
\[
	\bigvee_{a<b\in\Atoms}\bigwedge_{c\in(a;b)}\diam c.
\]

\subsection{The \chain{} property}{\ }

As before, consider a vocabulary of basic propositions that coincides with the set of $\Atoms$ of atoms (in this section we consider equality atoms only). Given a Kripke model $\K$, an infinite path:
\[
	x_0 \step x_1 \step x_2 \step \cdots
\]
in $\K$ is called a {\em chain} if 
\[
	|\pred(x_i)|=1 \qquad \text{and} \qquad \pred(x_{2i}) = \pred(x_{2i+3})
\]
for each $i\in\mathbb{N}$. The following diagram, where required equality of basic propositions satisfied in states is marked with dashed lines, justifies the name {\em chain}:

\hspace{2ex}

\[
\xymatrix{
x_0\ar[r]\ar@/^2pc/@{--}[rrr] & x_1\ar[r] & x_2\ar[r]\ar@/_2pc/@{--}[rrr] & x_3\ar[r] & x_4\ar[r]\ar@/^2pc/@{--}[rrr] & x_5\ar[r] & x_6\ar[r]\ar@/_2pc/@{--}[rrr]  & x_7\ar[r] & x_8\ar[r] & \cdots \\
{ }
}
\]

Let \chain{} denote the property ``there exists a chain path starting from the current state''. To show that it is not definable in $\eqaform$, we rely on Theorem~\ref{thm:k-stack-bisim}. We  construct, for every number $k$, an orbit-finite Kripke model ${\mathcal{K}}$ over $\Atoms$ as the language of basic predicate symbols, and two states $p,q\in K$ such that $\pair{p}{\epsilon}$ and $\pair{q}{\epsilon}$ are $k$-stack-bisimilar and \chain{} holds for $p$ but fails for $q$. 

Given a positive number $k$, pick some large $n$ ($n> 2^k$ will be enough) and fix $2n$ distinct atoms 
\[
	S = \{a_0,a_1,\ldots,a_{n-1},a_n\}\cup\{b_1,\ldots,b_{n-1}\}.
\] 
We shall build a (finite) model $\K$ supported by $S$. Its set $K$ of states will contain:
\begin{itemize}
\item $p_i$ and $q_i$ for $i=1,\ldots,2n$,
\item $r_i$ and $s_i$ for $i=1,\ldots,n-1$,
\item two special states $\top$ and $\bot$.
\end{itemize}
The satisfaction relation of $\K$ is defined so that:
\begin{itemize}
\item $\pred(p_i)=\pred(q_i)=\{a_{\frac{i}{2}-1}\}$ if $i$ is even,
\item $\pred(p_i)=\pred(q_i)=\{a_{\frac{i+1}{2}}\}$ if $i$ is odd,
\item $\pred(r_i)=\pred(s_i)=\{b_i\}$,
\item $\pred(\top)=a_n$ and $\pred(\bot)=\emptyset$.
\end{itemize}
The transition relation of $\K$ contains the following:
\begin{itemize}
\item $p_i\step p_{i+1}$ and $q_i\step q_{i+1}$ for $1\leq i<2n$,
\item $p_{2n}\step\top$ and $q_{2n}\step\bot$,
\item $p_{2i+1}\step r_i$ and $q_{2i+1}\step s_i$ for $1\leq i< n$,
\item $r_i\step q_{2i+3}$ and $s_i\step p_{2i+3}$ for $1\leq i<n-1$,
\item $r_{n-1}\step\bot$ and $s_{n-1}\step\top$,
\item $\top\step\top$.
\end{itemize}
These definitions look rather depressing, but the following diagram for $n=5$ should make the idea clear. Here, atomic predicates satisfied in a state are marked over its outgoing transitions:
\[\xymatrix{
	p_1\ar[r]^(.3){a_1} & p_2\ar[r]^(.3){a_0} & p_3\ar[r]^(.3){a_2}\ar@/_/[ddr] & p_4\ar[r]^(.3){a_1} & p_5\ar[r]^(.3){a_3}\ar@/_/[ddr] & p_6\ar[r]^(.3){a_2} & p_7\ar[r]^(.3){a_4}\ar@/_/[ddr] & p_8\ar[r]^(.3){a_3} & p_9\ar[r]^(.3){a_5}\ar@/_/[ddr] & p_{10}\ar[r]^(.3){a_4} & \top\ar@(dr,ur)[]^{a_5} \\
	& & & s_1\ar[ur]_(.2){b_1} & & s_2\ar[ur]_(.2){b_2} & & s_3\ar[ur]_(.2){b_3} & & s_4\ar[ur]_(.2){b_4} \\
	& & & r_1\ar[dr]^(.2){b_1} & & r_2\ar[dr]^(.2){b_2} & & r_3\ar[dr]^(.2){b_3} & & r_4\ar[dr]^(.2){b_4} \\
	q_1\ar[r]_(.3){a_1} & q_2\ar[r]_(.3){a_0} & q_3\ar[r]_(.3){a_2}\ar@/^/[uur] & q_4\ar[r]_(.3){a_1} & q_5\ar[r]_(.3){a_3}\ar@/^/[uur] & q_6\ar[r]_(.3){a_2} & q_7\ar[r]_(.3){a_4}\ar@/^/[uur] & q_8\ar[r]_(.3){a_3} & q_9\ar[r]_(.3){a_5}\ar@/^/[uur] & q_{10}\ar[r]_(.3){a_4} & \bot \\
\save "1,6"+<-15pt,15pt>."4,7"*+<.5cm>[F--]\frm{} \restore
\save "4,6"+<25pt,-30pt>*\txt{\scriptsize{stage $3$}}\restore
}\]
(see below for an explanation of what {\em stage $3$} means).

It is important to note that the top-most path starting in $p_1$ and eventually entering an infinite loop at state $\top$, is a chain path. The state $p_1$ therefore satisfies the property \chain. On the other hand, the state $q_1$ does not satisfy the property: to build an infinite path from $q_1$ one has to visit a state $s_i$ at some point, and when that happens the chain property is immediately invalidated, since the basic predicate $b_i$ satisfied in $s_i$ does not match the predicate $a_i$ that held three steps earlier. 

We shall show that $p_1$ and $q_1$ are $k$-stack-bisimilar. We begin by showing a strategy for Duplicator in the $k$-stack-bisimulation game that, although not successful, will be the basis for constructing a better, winning strategy.

Starting from the initial position $\langle p_1,\epsilon,q_1,\epsilon\rangle$, Duplicator may try to copy Spoiler's moves verbatim, that is:
\begin{itemize}
\item respond to a model move $p_i\step p_{i+1}$ with $q_i\step q_{i+1}$, to a model move $p_{2i+1}\step r_i$ with $q_{2i+1}\step s_i$, etc.,
\item respond to a push move with an atom $c$ with a push move with the same atom $c$. 
\end{itemize}
(Spoiler's pop moves do not leave any choice for Duplicator, so there is no need to specify Duplicator's response to them.)

This strategy is losing: Spoiler can make a sequence of model moves from $p_1$ to $p_2,\ldots,p_{2n}$ and finally to $\top$ and then loop there, to which Duplicator responds by moving from $q_1$ all the way to $\bot$ and then she is stuck.

However, along the way, Duplicator may look for an opportunity to ``cheat''. To describe such opportunities, let us divide states in our model into {\em stages}, where stage $i$ (for $i=1,\ldots,n-1$) consists of states:
\[
	p_{2i}, q_{2i}, p_{2i+1}, q_{2i+1}, r_{i-1} \text{ and } s_{i-1}
\]
(the latter two are included only if $i>1$).
For example, stage $3$ is marked in the diagram above. Moreover, we say that an atom is {\em remembered} in a position if it is present in a sequence of atoms $\vec{a}$ that is a component of that position. Note that, according to Duplicator's strategy so far, the two sequences of atoms in every reachable position are the same, so at most $k$ atoms are remembered in any given position.

Since model moves in the game must be matched by a model moves, the two states in a game position always belong to the same stage; we can therefore meaningfully say that a position belongs to a stage. An opportunity for Duplicator to cheat occurs when the play reaches a position in a stage $i$, such that neither the atom $a_i$ nor $b_i$ is remembered.

If such a position is reached, Duplicator starts to respond to Spoiler's moves in a way that arises from applying the swap permutation $(a_i\ b_i)$, leaving all other atoms intact. More specifically,
\begin{itemize}
\item if Spoiler makes a push move with $a_i$ (or $b_i$), Duplicator responds with a push move with $b_i$ (resp.~$a_i$). This results in a legal position, since neither $a_i$ nor $b_i$ is satisfied in any state that belongs to stage $i$;
\item if Spoiler makes a model move $p_{2i+1}\step p_{2i+2}$, Duplicator responds by $q_{2i+1}\step s_i$. The resulting position
\[
	\langle p_{2i+2},\vec{a},s_i,\vec{a}\cdot(a_i\ b_i)\rangle
\] 
is legal. Moreover, after the next round of model moves (possibly preceded by a sequence of push and pop moves) the game will necessarily reach the position
\[
	\langle p_{2i+3},\vec{c},p_{2i+3},\vec{c}\cdot(a_i\ b_i)\rangle
\] 
for some sequence $\vec{c}$. This position is winning for Duplicator, since neither $a_i$ nor $b_i$ are relevant for the remainder of the game.

Other model moves for Spoiler are dealt with in a similar way.
\end{itemize}

It remains to be shown that an opportunity to cheat must inevitably happen for Duplicator. Assume to the contrary that during a play, either $a_i$ or $b_i$ is remembered throughout the stage $i$, for each $i=1,\ldots,n-1$. Since model moves do not change the set of atoms remembered, it follows that $a_i$ or $b_i$ is remembered also in the last position before entering stage $i$, and in the first position after leaving stage $i$. As a result, positions in the play are covered by overlapping regions throughout which particular atoms must be remembered, as depicted here:
\[\xymatrix@=10pt{
\circ & \circ & \circ & \circ & \circ & \circ & \circ & \circ &\circ & \circ & \circ & \cdots &\circ & \circ & \circ & \circ &\circ & \circ & \circ 
\save "1,1"+<-12pt,0pt>."1,4"*+<.2cm>[o][F.]\frm{} \restore
\save "1,4"+<-12pt,0pt>."1,9"*+<.2cm>[o][F.]\frm{} \restore
\save "1,9"+<-12pt,0pt>."1,11"*+<.2cm>[o][F.]\frm{} \restore
\save "1,11"+<-12pt,0pt>."1,13"*+<.2cm>[o][F.]\frm{} \restore
\save "1,13"+<-12pt,0pt>."1,16"*+<.2cm>[o][F.]\frm{} \restore
\save "1,16"+<-12pt,0pt>."1,19"*+<.2cm>[o][F.]\frm{} \restore
\save "1,1"+<30pt,-15pt>*\txt{\scriptsize{$a_1$ or $b_1$ rem'd}}\restore
\save "1,4"+<50pt,-15pt>*\txt{\scriptsize{$a_2$ or $b_2$ rem'd}}\restore
\save "1,9"+<20pt,-15pt>*\txt{\scriptsize{$a_3$ or $b_3$ rem'd}}\restore
\save "1,13"+<30pt,-15pt>*\txt{\scriptsize{$a_{n-2}$ or $b_{n-2}$ rem'd}}\restore
\save "1,16"+<50pt,-15pt>*\txt{\scriptsize{$a_{n-1}$ or $b_{n-1}$ rem'd}}\restore
}\]
(The regions correspond to stages of the game, but note that the regions need not be of the same size, since the players may make different numbers of push-or-pop moves in each stage.) Since only $k$ atoms may be remembered at any given time, and since atoms are remembered in a stack-like regime, this is impossible if $n> 2^k$. To prove this, proceed by induction on $k$:
\begin{itemize}
\item for $k=1$ we have $n-1\geq 2$ so at some point two atoms must be remembered but there is no space for it,
\item for $k>1$, notice that throughout the overlapping regions, the stack that remembers atoms may never get empty, even for a moment. This means that there is some atom $a_i$ (or $b_i$) that sits at the bottom of the stack (formally, remains the first element of the atom sequence that is a component of game positions) throughout this part of the play. One of the intervals
\[
1,\ldots, i-1 \qquad \text{or} \qquad i+1,\ldots, n-1
\]
is of the size at least half of the interval $1,\ldots,n-1$. In the part of the play defined by the longer of the two intervals of stages, the atom $a_i$ (or $b_i$) remains useless, which effectively leaves enough space for remembering $k-1$ atoms. It is now enough to invoke the inductive assumption.
\end{itemize}

A winning strategy for Duplicator is to first copy Spoiler's moves, then use the first opportunity to cheat to force the game into a winning position as described above. This shows that Duplicator wins from the starting position $\langle p_1,\epsilon,q_1,\epsilon\rangle$ as required.

This means that $\pair{p_1}{\epsilon}$ and $\pair{q_1}{\epsilon}$ are $k$-stack-bisimilar, hence by Theorem~\ref{thm:k-stack-bisim} they satisfy the same globally $k$-supported formulas. Since every formula of the atomic $\mu$-calculus is globally $k$-supported for some $k$, this implies that no formula in $\eqaform$ defines \chain.

However, it is not hard to define \chain{} in the vectorial calculus $\veqaform$. Indeed, for any atom $a\in\Atoms$, the property {\em there exists a chain}
\[
	x_0\step x_1\step x_2 \step x_3 \step \cdots
\]
{\em starting in the current state such that $\pred(x_1)=\{a\}$}, is defined by a formula:
\[
	\phi_a = \nu X_a.\left\{X_b.\bigvee_{c\in\Atoms}\left(\theta_c\land\diam\left(\theta_b\land\diam X_c\right)\right)\right\}_{b\in\Atoms},
\]
where for any $b\in\Atoms$ the formula
\[
    \theta_b = b \land \bigwedge_{d\neq b}\neg d
\]
ensures that $b$, and no other basic proposition, holds in the current state.
Then \chain{} is defined by the alternative $\bigvee_{a\in\Atoms}\phi_a$.

\subsection{The \evensucc{} property}\label{sec:evensucc}{\ }

As before, consider a vocabulary of basic propositions that coincides with the set of $\Atoms$ of atoms. Let \evensucc{} denote the property ``the number of atoms $a$ such that some immediate successor of the current state satisfies $a$, is finite and even''. We shall show that this property is not definable in $\veqaform$ or $\ordaform$, but it is definable in $\vordaform$.

The argument for $\veqaform$ relies on the same model $\K$ as considered in Section~\ref{sec:infsucc}. Notice that $|S|$ there is even, hence the state $q\in K$ satisfies \evensucc{}, but the state $p\in K$ does not, and they are $k$-bisimilar over equality atoms as shown before.

For $\ordaform$, we rely on Theorem~\ref{thm:k-stack-bisim}. For any number $k$, we construct a model $\K$ over ordered atoms and two states $p,q\in K$ that are $k$-stack-bisimilar in $k$, but such that \evensucc{} holds for $p$ and fails for $q$. To this end, put $n=2^{k+1}$ and fix some finite sets $S,T\subseteq\Atoms$ with $|S|=n$, $|T|=2n+1$ and such that every element of $T$ is greater than every element of $S$. Let the set of states of $\K$ be:
\[
	K = \{p,q\} + \{r_a\mid a\in\Atoms\}
\]
with the obvious action of $\aut(\Atoms)$ that makes each of $p$ and $q$ a singleton orbit. The satisfaction relation of $\K$ is such that
\[
	\pred(p)=\pred(q)=\emptyset \qquad \text{and} \qquad \pred(r_a)=\{a\} \quad\text{for } a\in\Atoms,
\]
and the transition relation is defined by
\[
p\step r_a \text{ for }  a\in S \qquad \text{and} \qquad q\step r_a \text{ for } a\in T.
\]
Obviously the property \evensucc{} holds for $p$ but fails for $q$. However, $\pair{p}{\epsilon}$ and $\pair{q}{\epsilon}$ are $k$-stack-bisimilar according to Definition~\ref{def:k-bisim}. To see this we describe a winning strategy for Duplicator in the corresponding $k$-stack-bisimulation game.

Starting from the initial position $\langle p,\epsilon,q,\epsilon\rangle$, Spoiler may begin by making a series of push moves. An important observation is that at this stage, before any model move has been made, pop moves make no sense for Spoiler: they just bring the game to a position that has already been visited. This means that this first phase of the game takes at most $k$ moves. In this phase, Duplicator responds to Spoiler's moves much as she would in an $k$-round Ehrenfeucht-Fraiss\'e game on two large total orders, one of them of even size, the other of odd size. As in that classical scenario, Duplicator is able to survive $k$ rounds of the game, and ensure that any model move by the Spoiler afterwards can be met with a legal response.

As a result, by Theorem~\ref{thm:k-stack-bisim}, \evensucc{} is not definable in $\ordaform$. 

On the other hand, the property is definable in $\vordaform$. To see this, for fixed atoms $a<b$, consider the property ``$b$ is the least atom greater than $a$ such that some successor of the current state satisfies $b$'', defined by:
\[
	\phi^1_{a,b} = \diam b \land \bigwedge_{c\in(a;b)}\neg\diam c.
\]
Then define the property ``$b$ is the {\em second} least atom greater than $a$ such that some successor of the current state satisfies $b$'' by:
\[
	\phi^2_{a,b} = \bigvee_{c\in(a;b)}(\phi^1_{a,c}\land\phi^1_{c,b}).
\]
Then consider the property ``the number of atoms $c>a$ such that some immediate successor of the current state satisfies $c$, is finite and even'', defined by:
\[
	\psi_a = \mu X_a.\left\{X_b. \left(\bigwedge_{c>b}\neg\diam c\right)\lor\left(\bigvee_{c>b}(\phi^2_{b,c}\land X_c)\right)\right\}_{b\geq a}.
\]
Finally, \evensucc{} is defined by:
\[
	\bigvee_{a\in\Atoms}\bigwedge_{b<a}\psi_b.
\]


\section{Expresiveness limitations}\label{sect_limitations}
As before, consider Kripke models over a vocabulary of basic predicates that coincides with the set $\Atoms$ of atoms. Denote the property 
``there exists an infinite path where no $a$ holds more than once'', by \freshpath{}.
Such properties of states in Kripke models have potentially significant practical importance. For example, one may imagine a system equipped with a token (e.g. password) generator where one needs to verify that, on every path where no token is generated more than once, the security of the system is never breached. 

In a previous version of this paper~\cite{csl17}, we proved that \freshpath{} is not definable in the $\eqaform$, the scalar $\mu$-calculus over equality atoms. This is in contrast to the similar but definable property {\bf P2} from Section~\ref{sect_finite_model_property}. Our proof there did not work for ordered atoms, and in~\cite{csl17} we left is as an open problem whether the property is definable in the calculus that we here call $\ordaform$. We shall now show that, in fact, \freshpath{} is not definable even in the vectorial calculus $\vordaform$ over ordered atoms.

But first notice that there would be no hope of defining \freshpath{} in some calculus with decidable model checking if the property itself was undecidable. So we begin by showing that the property is indeed decidable on Kripke models over equality atoms, which makes its undefinability in our $\mu$-calculi all the more disappointing.

\begin{thm}\label{thm:freshpath-decid}
\freshpath{} is decidable on orbit-finite Kripke models over equality atoms.
\end{thm}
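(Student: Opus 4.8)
The plan is to reduce the question to detecting an infinite path in a finite graph obtained by a suitable orbit-based abstraction of $\K$. Fix a finite support $S$ of $\K$ and let $m$ bound the sizes of the least supports of all states (such a bound exists since $\K$ is orbit-finite). The key observation is that, when searching for a fresh path, the only previously used atoms that genuinely constrain the future are those lying in $S\cup\supp(x)$ for the current state $x$: an atom used in the past but outside this finite set can always be ``renamed away''. Precisely, I would first prove an \emph{avoidance lemma}: if $x\step x'$ and $a\in\supp(x')\setminus(S\cup\supp(x))$, then for every finite $F\subseteq\Atoms$ there is a transition $x\step x''$ with $x''$ in the same $(S\cup\supp(x))$-orbit as $x'$ but with $\supp(x'')\setminus(S\cup\supp(x))$ disjoint from $F$. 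This holds because an atom automorphism fixing the finite set $S\cup\supp(x)$ (which exists for equality atoms, as any finite partial bijection extends) fixes $x$, preserves $\step$, and may send the ``fresh'' atoms of $x'$ anywhere outside $S\cup\supp(x)$.

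Next I would define the abstract graph $G$. Its vertices are the $\aut_S(\Atoms)$-equivalence classes of pairs $(x,B)$ with $x\in K$ and $B\subseteq S\cup\supp(x)$ a set of ``blocked'' atoms, where $(x,B)\sim(x\cdot\pi,B\cdot\pi)$ for $\pi\in\aut_S(\Atoms)$. Because $\K$ has finitely many $S$-orbits and every $\supp(x)$ has at most $m$ elements, there are only finitely many such classes, and they (together with the edges below) are computable using the operations of Remark~\ref{rem:computability}. I put an edge from $(x,B)$ to $(x',B')$ whenever $x\step x'$, $\pred(x')\cap B=\emptyset$, and $B'=(B\cup\pred(x'))\cap(S\cup\supp(x'))$; the start vertex is $(x_0,\pred(x_0))$. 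An infinite fresh path is then to be characterised as an infinite path in $G$ from the start vertex.

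The two directions require separate arguments. \emph{Completeness} is routine: given a concrete fresh path $x_0\step x_1\step\cdots$, set $U_i=\bigcup_{j\le i}\pred(x_j)$ and $B_i=U_i\cap(S\cup\supp(x_i))$; freshness gives $\pred(x_{i+1})\cap U_i=\emptyset$, hence $\pred(x_{i+1})\cap B_i=\emptyset$, and one checks the update equation for $B'$ holds, so $(x_i,B_i)$ traces an infinite path in $G$. The main obstacle is \emph{soundness}: realising an infinite path in $G$ as a genuine fresh path in $\K$. I would build it by forward induction, maintaining the invariant that the concrete used-set $\widetilde U_n$ satisfies $\widetilde U_n\cap(S\cup\supp(\widetilde x_n))=B_n$ and that $\widetilde U_n$ avoids the atoms freshly introduced at step $n$. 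At each step the abstract edge guarantees that the atoms of $\pred(x')$ lying in $S\cup\supp(x)$ avoid the blocked set $B_n$, hence avoid all of $\widetilde U_n$; the remaining atoms of $\pred(x')$ are fresh and, by the avoidance lemma applied to the finite set $\widetilde U_n$, can be chosen disjoint from $\widetilde U_n$. This keeps the whole path fresh while reproducing the abstract transition, the crucial point being precisely that atoms dropped from $S\cup\supp$ never need to be avoided again.

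Finally, since $G$ is finite and effectively computable, deciding whether it contains an infinite path from the start vertex amounts to deciding reachability of a cycle, which is decidable; by the two directions this answers \freshpath{} for $x_0$. I expect the delicate part to be the soundness argument and, in particular, verifying that forgetting atoms outside $S\cup\supp$ is sound, which is exactly where the freedom to permute equality atoms is used and where the proof would be expected to fail for ordered atoms.
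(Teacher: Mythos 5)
Your core construction is essentially the one the paper uses: the paper also tracks, along a path, only the used atoms that lie in the support of the current state (its auxiliary model $\hat{\mathcal{K}}$ has states $\pair{x}{S}$ with $S\subseteq\supp(x)\setminus\pred(x)$ and transitions requiring $(S\cup\pred(x))\cap\supp(y)\subseteq T$), proves completeness exactly as you do, and proves soundness by the same inductive renaming of locally fresh atoms away from everything used so far. The only packaging difference is at the end: you quotient by $\aut_S(\Atoms)$ to obtain a literal finite graph and decide reachability of a cycle, whereas the paper keeps $\hat{\mathcal{K}}$ as an orbit-finite atomic Kripke model and decides existence of an infinite path by model checking the formula $\nu X.\diam X$ via Theorem~\ref{thm:mc-decid}; these two endgames are interchangeable.

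There is, however, a genuine gap: you implicitly assume that $\pred(x)$ is finite (indeed that $\pred(x)\subseteq S\cup\supp(x)$) for every state, and this is false in general. In an orbit-finite model the satisfaction relation is only required to be finitely supported, so $\pred(x)$ can be a \emph{co-finite} subset of $\Atoms$; for instance, a single equivariant state $x$ with a self-loop and $\pred(x)=\Atoms$ is a legal orbit-finite model. On such models your ``key observation'' fails: two co-finite predicate sets always intersect, so the fact that a co-finite set was used in the past constrains the future forever, yet your blocked set $B\subseteq S\cup\supp(x)$ cannot record this (and your start vertex $(x_0,\pred(x_0))$ is not even well-formed, since $\pred(x_0)\not\subseteq S\cup\supp(x_0)$). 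Concretely, in the one-state example your graph $G$ has the edge $(x,\emptyset)\to(x,\emptyset)$ and hence an infinite path, but $\K$ has no fresh path, so soundness fails. The paper deals with this by a preliminary step: it observes that at most one state with co-finite $\pred$ can occur on a fresh path (after which almost all states on the path must satisfy no predicate at all), decides that degenerate case directly, deletes all such states, and only then runs the abstraction --- which is exactly your argument --- under the now-valid assumption $\pred(x)\subseteq\supp(x)$. Adding the same preprocessing step would repair your proof.
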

\begin{proof}
For simplicity, assume that a given orbit-finite Kripke model $\mathcal{K}$ is equivariant; a generalization to finitely supported models is straightforward.

Notice that for every state $x\in K$, the set $\pred(x)$ is either finite (and contained in $\supp(x)$) or co-finite. Moreover, a single orbit of $K$ only contains states of one of these two kinds. It is not difficult to decide the existence of a desired path where at least one state is of the second kind. Indeed, two such states cannot occur on the path at all, and even if exactly one of them occurs, almost all other states on the path must satisfy none of the basic predicates. The existence of such a path from a given state $x$ is straightforward to decide.

Once the existence of such paths is excluded, all (orbits of) states $x$ with co-finite $\pred(x)$ may be safely deleted from the model. From now on, assume that $\pred(x)\subseteq\supp(x)$ for each $x\in K$.

Derived from $\mathcal{K}$, construct a new orbit-finite Kripke model $\hat{\mathcal{K}}$, over the empty set of basic predicates, with the set of states defined by:
\[
	\hat{K} = \{\pair{x}{S} \mid x\in K,\  S\subseteq\supp(x)\setminus\pred(x)\} 
\]
and the transition relation by:
\[
	\pair{x}{S}\step^{\hat{\mathcal{K}}}\pair{y}{T} \text{ if and only if }  x\step^{\mathcal{K}}y \text{ and } (S\cup\pred(x))\cap\supp(y)\subseteq T
\]
(since there are no basic predicates, the satisfaction relation in $\hat{\mathcal{K}}$ is trivial).
The intuition is that in a state $\pair{x}{S}$, atoms in $S$ are marked as having had occurred previously on a path, and are forbidden from occurring in the future. Note that this marking is restricted to atoms from the support of the current state $x$ only.

Note that $\hat{\mathcal{K}}$ is indeed equivariant and orbit-finite: every orbit of $K$ gives rise to at most $2^k$ orbits in $\hat{K}$, where $k$ is the size of the least support of any (equivalently, every) element in the orbit.
Moreover, (a representation of) $\hat{\mathcal{K}}$ is computable from $\mathcal{K}$: for each orbit in $K$ one can enumerate all corresponding orbits in $\hat{K}$, and orbits of transitions are also easy to enumerate.

We shall now prove that a state $x\in K$ admits an infinite path where no $a$ holds more than once, if and only if $\pair{x}{\emptyset}\in\hat{K}$ admits any infinite path. (Note that the theorem follows from this, since the latter property is decidable, indeed, it is definable in the atomic $\mu$-calculus $\eqaform$, whose model-checking problem is decidable by Theorem~\ref{thm:mc-decid}).

For the left-to-right implication, assume an infinite path in $\mathcal{K}$, i.e., a sequence 
\[
x=x_0 \step^{\mathcal{K}} x_1 \step^{\mathcal{K}} x_2 \step^{\mathcal{K}} x_3 \step^{\mathcal{K}} \cdots
\]
such that $\pred(x_i)\cap\pred(x_j) = \emptyset$
for each $i\neq j\in\mathbb{N}$. Define
\[
	y_i = \pair{x_i}{S_i}, \quad \text{where} \quad
    S_i  = \supp(x_i)\cap\bigcup_{j=0}^{i-1}\pred(x_j).
\]
In particular, $y_0=\pair{x}{\emptyset}$. Then
\begin{equation}\label{eq:edge-Khat}
	\pair{x_i}{S_i}\step^{\hat{\mathcal{K}}}\pair{x_{i+1}}{S_{i+1}}
\end{equation}
for each $i\in\mathbb{N}$. Indeed, calculate
\begin{multline*}
(S_i\cup\pred(x_i))\cap\supp(x_{i+1})
= \left(\!\!\left(\supp(x_i)\cap\bigcup_{j=0}^{i-1}\pred(x_j)\right)\cup\pred(x_i)\right)\cap\supp(x_{i+1}) \\
\subseteq \left(\bigcup_{j=0}^{i}\pred(x_j)\right)\cap\supp(x_{i+1}) = S_{i+1}.
\end{multline*}
As a result
\begin{equation}\label{eq:seqn-Khat}
	\pair{x}{\emptyset}=\pair{x_0}{S_0}\step^{\hat{\mathcal{K}}}\pair{x_1}{S_1}\step^{\hat{\mathcal{K}}}\pair{x_2}{S_2}\step^{\hat{\mathcal{K}}}\cdots
\end{equation}
form an infinite path in $\hat{\mathcal{K}}$.

For the right-to-left implication, assume any infinite sequence as in~\eqref{eq:seqn-Khat}, for some $x_i$ and $S_i$ such that the condition~\eqref{eq:edge-Khat} holds for every $i\in\mathbb{N}$. We construct sequences
\[
y_0,y_1,\ldots \in K \quad\,
T_0,T_1,\ldots \subseteq\Atoms \quad\,
\pi_1,\pi_2,\ldots \in\aut(\Atoms)
\]
by simultaneous induction as follows: 
\begin{itemize}
\item $y_0=x_0$ and $T_0=S_0$,
\item $\pi_{i+1}$ is an atom automorphism such that: 
\begin{itemize}
\item $\pi_{i+1}(a)=a$ for $a\in\supp(y_i)$, and
\item $\pi_{i+1}(a)\not\in\bigcup_{j=0}^{i}\supp(y_j)$  
for $a\in\supp(x_{i+1}\cdot \pi_1\pi_2\cdots\pi_i)\setminus\supp(y_i)$,
\end{itemize}
and acting in an arbitrary way on the remaining atoms (such an automorphism always exists since the union of supports in the second clause is finite, and therefore some atoms exists outside of it),
\item $y_{i+1}=x_{i+1}\cdot \pi_1\pi_2\cdots\pi_i\pi_{i+1}$,
\item $T_{i+1}=S_{i+1}\cdot \pi_1\pi_2\cdots\pi_i\pi_{i+1}$.
\end{itemize}
Notice that, since $\pair{x_{i+1}}{S_{i+1}}$ is a legal state in $\hat{K}$, by equivariance so is $\pair{y_{i+1}}{T_{i+1}}$. Moreover,
\begin{equation*}
	\pair{y_i}{T_i}\step^{\hat{\mathcal{K}}}\pair{y_{i+1}}{T_{i+1}}
\end{equation*}
for each $i\in\mathbb{N}$, therefore the sequence
\begin{equation*}
	\pair{y_0}{T_0} \step^{\hat{\mathcal{K}}} \pair{y_1}{T_1} \step^{\hat{\mathcal{K}}} \pair{y_2}{T_2} \step^{\hat{\mathcal{K}}}\cdots
\end{equation*}
is an infinite path in $\hat{K}$. To see this, note that 
\[
	\pair{y_i\cdot\pi_{i+1}}{T_i\cdot\pi_{i+1}}\step^{\hat{\mathcal{K}}}\pair{y_{i+1}}{T_{i+1}},
\]
(by equivariance of the transition relation $\step^{\hat{\mathcal{K}}}$, applying the automorphism $\pi_1\cdots\pi_{i+1}$ to the transition 
$\pair{x_i}{S_i})\step^{\hat{\mathcal{K}}}\pair{x_{i+1}}{S_{i+1}}$), and
\[
	y_i\cdot\pi_{i+1}=y_i \quad\text{and}\quad T_i\cdot\pi_{i+1}=T_i
\]
since $\pi_{i+1}$ by definition fixes $\supp(y_i)$ and $T_i\subseteq\supp(y_i)$.

As a consequence, the sequence
\[
y_0\step^{\mathcal{K}} y_1 \step^{\mathcal{K}} y_2 \step^{\mathcal{K}} \cdots
\]
forms a path in $\mathcal{K}$. A useful property of this path, easy to infer from the definition of $y_i$, is that:
\begin{equation}\label{eq:loc-glob-fresh}	(\supp(y_{i+1})\setminus\supp(y_i))\cap\bigcup_{j=0}^{i}\supp(y_j)=\emptyset.
\end{equation}
In words, whenever a locally fresh atom appears in some $y_i$, then it does not appear anywhere earlier in the path.

We shall show that no basic predicate holds on this path more than once. Assume towards a contradiction that
\[	a\in \pred(y_i)\cap\pred(y_j)
\]
for some $a\in\Atoms$ and $i<j$. Then obviously $a\in\supp(y_i)$ and $a\in\supp(y_j)$, and by induction on the difference $j-i$, using~\eqref{eq:loc-glob-fresh}, we get that $a\in\supp(y_k)$ for all $k$ between $i$ and $j$. Again by induction, and by definition of the transition relation in $\hat{\mathcal{K}}$, $a$ belongs to all sets $T_{i+1},T_{i+2},\ldots,T_j$. But this means that $a\in T_j\cap\pred(y_j)$, which contradicts the fact that $\pair{y_j}{T_j}$ is a legal state in $\hat{K}$.
This completes the proof of the right-to-left implication, and of the entire theorem.
\end{proof}

We shall now show that, in spite of its decidability and intuitive simplicity, \freshpath{} is not definable in $\vordaform$, the vectorial atomic $\mu$-calculus over ordered atoms. To this end, we  rely on Theorem~\ref{thm:k-bisim}. We construct, for every number $k$, an orbit-finite Kripke model ${\mathcal{K}}$ over $\Atoms$ as the language of basic predicate symbols, and two states $x,y\in K$ such that $\pair{x}{\epsilon}$ and $\pair{y}{\epsilon}$ are $k$-bisimilar and \freshpath{} fails for $x$ but holds for $y$.

Given $k>0$, fix some $n>k$ and fix any set $S$ of $2n$ atoms:
\[
	a_1<b_1<a_2<b_2<\cdots<a_n<b_n.
\]
We shall build a Kripke model $\mathcal{K}$ and its sub-model $\mathcal{\check K}$, both supported by $S$.
Both models operate in three phases:
\begin{itemize}
\item Phase I, both in $\mathcal{K}$ and in $\mathcal{\check K}$, consists of $n$ states $p_1,\ldots,p_n\in K$ such that $\pred(p_i)=\{a_i\}$,
with transitions:
\[
	p_i\step p_{i+1} \qquad \text{for } 1\leq i<n.
\]
Note that the transition relation is entirely deterministic in this phase.
\item Phase II in $\mathcal{K}$ consists of $n^2+n$ states 
\[
	q^0_1,q^0_2,\ldots,q^0_n,q^1_1,q^1_2,\ldots,q^1_n,\ldots,q^n_1,q^n_2,\ldots,q^n_n \in K
\]
with
\[
	\pred(p^i_j) = \left\{
		\begin{array}{ll}
			\{a_j\} & \text{if } i=j \\
			\{b_j\} & \text{otherwise.}
		\end{array}\right.
\]
A transition from Phase I to Phase II is realised by adding $n+1$ transitions from $p_n$:
\[
	p_n \step q^i_1 \qquad \text{for } 0\leq i\leq n.
\]
Note that this introduces nondeterminism to the model. Further transitions in Phase II are:
\[
	q^i_j \step q^i_{j+1} \qquad \text{for } 0\leq i\leq n \text{ and } 1\leq j<n.
\]
These transitions, again, are deterministic.

In $\mathcal{\check K}$, Phase II is the same but with the states $q^0_1,q^0_2,\ldots,q^0_n$ excluded (so only $n^2$ states are present).

\item Phase III, both in $\mathcal{K}$ and in $\mathcal{\check K}$, is an infinite (but single-orbit) clique of states:
\[
	 \{r_c\mid c\in\Atoms\}\subseteq K
\]
with $\pred(r_c)=\{c\}$ and the full relation as the transition relation. Additionally, a transition from Phase II to Phase III is realised by adding a transition from each state $p^i_n$ (for $0\leq i\leq n$) to each state $r_c$. This, obviously, is highly nondeterministic.

Note that every state in $K$ satisfies exactly one basic predicate. The following diagram shows the general shape of $\mathcal{K}$, with basic predicates satisfied in a state marked over its outgoing transitions:
\[\xymatrix{
& & & & q^0_1\ar[r]^(.3){b_1} & q^0_2\ar[r]^(.3){b_2} &\cdots\ar[r]^(.3){b_{n-1}} & q^0_n\ar@(dr,u)[rrdd]\ar[rrdd]^(.12){b_n}\ar@(dr,ur)[rrdd] & \\
& & & & q^1_1\ar[r]^(.3){a_1} & q^1_2\ar[r]^(.3){b_2} & \cdots\ar[r]^(.3){b_{n-1}} & q^1_n\ar@(r,u)[rrd]\ar[rrd]^(.2){b_n}\ar@(dr,l)[rrd] \\
p_1\ar[r]\ar[r]^(.3){a_1} & p_2\ar[r]^(.3){a_2} & \cdots\ar[r]^(.3){a_{n-1}} & p_n\ar@{}@<-1ex>[r]^-{\tiny\vdots}\ar@/^/[ruu]^(.3){a_n}\ar[ru]_(.2){a_n}\ar[rd]_(.3){a_n} & \vdots & \vdots &\ddots &\vdots\ar@{}@<-1ex>[rr]^-{\tiny\vdots} & & 
*[o]+<1cm>[F.]{\scriptsize\text{Phase III}}\\
& & & & q^n_1\ar[r]^(.3){b_1} & q^n_2\ar[r]^(.3){b_2} & \cdots\ar[r]^(.3){b_{n-1}} & q^n_n\ar@(ru,l)[rru]\ar[rru]_(.2){a_n}\ar@(r,d)[rru]
\save "1,9"+<2cm,0cm>*\txt{\scriptsize{states absent in $\mathcal{\check K}$}}="tekst" \restore
\save "1,5"."1,8"*+<.5cm>[F.]\frm{}\ar@{.}"tekst" \restore
}\]
\end{itemize}

Both $\mathcal{K}$ and $\mathcal{\check K}$ are orbit-finite. Indeed, there are only finitely many states in Phases I and II, each of them forming a singleton orbit, and the single infinite orbit of states is formed by Phase III. The models are not equivariant, but with a small additional effort they can be made so: keeping $n$ fixed, consider a disjoint union of the models $\mathcal{K}$ taken for all sets of atoms $S=\{a_1,b_1,\cdots,a_n,b_n\}$. Then add a single equivariant state $\star$ where no basic predicate holds, with transitions from $\star$ to states $p_1$ in all the disjoint components. The resulting models remain orbit-finite, they are equivariant, and the following reasoning applies to them with little change.

It is easy to see that the state $p_1$ (and, indeed, every state) in $\mathcal{K}$ satisfies \freshpath: in the finite path
\[
	p_1,\ldots,p_n,q^0_1,\ldots,q^0_n
\]
no basic predicate holds more than once, and in Phase III it is easy to extend it to an infinite path so that this remains true. On the other hand, in the smaller model $\mathcal{\check K}$ the state $p_1$ does not satisfy \freshpath, since in Phase II one of the $a_i$ from Phase I is necessarily repeated there.

We shall show that $\pair{p_1}{\epsilon}$ in $\mathcal{K}$ is $k$-bisimilar to $\pair{p_1}{\epsilon}$ in $\mathcal{\check K}$ by providing a winning strategy for Duplicator in the corresponding $k$-bisimulation game. Formally, since we defined both $k$-bisimulations and $k$-bisimulation games as operating on single Kripke models, we need to consider the bisimulation game on a disjoint union of $\mathcal{K}$ and $\mathcal{\check K}$. To avoid confusion we will denote the states in the $\mathcal{\check K}$-component of this union by $\check{p}_i$, $\check{q}^i_j$ etc., and will describe a winning strategy for Duplicator from the position
\[
	\langle {p_1,\epsilon,\check{p}_1,\epsilon} \rangle
\]
in the $k$-bisimulation game.

In Phase I, Duplicator copies Spoiler's moves verbatim. That is, whenever Spoiler makes a model move from $p_i$ to $p_{i+1}$ (or from $\check{p}_i$ to $\check{p}_{i+1}$), Duplicator responds by moving from $\check{p}_i$ to $\check{p}_{i+1}$ (respectively, from $p_i$ to $p_{i+1}$; Duplicator has no choice here anyway, since in Phase I the transition relation is deterministic in both models). Moreover, whenever Spoiler makes an atom replacement move by choosing some new tuple of atoms, Duplicator responds by picking the same tuple of atoms. In this way, Duplicator ensures that in Phase I the game visits only positions of the form $\langle p_i,\vec{a},\check{p}_i,\vec{a} \rangle$ for $\vec{a}\in\Atoms^{(\leq k)}$.

Duplicator fares well this way until at some point Spoiler decides to make a model move from a configuration
\[
	\langle p_n,\vec{a},\check{p}_n,\vec{a} \rangle.
\]
Note that at this point Duplicator has no control over how atoms in $\vec{a}$ relate to the atoms $a_1,b_1,\cdots,a_n,b_n$.

If Spoiler makes a model move from $\check{p}_n$ to some $\check{q}^i_1$, Duplicator responds by moving from $p_n$ to $q^i_1$. The game is then essentially over: $\check{q}^i_1$ and $q^i_1$ have the same futures in $\mathcal{\check K}$ and $\mathcal{K}$, so Duplicator can proceed by forever copying Spoiler's moves. Similarly, if Spoiler makes a move from $p_n$ to some $q^i_1$ for $i>0$, Duplicator responds by moving from $\check{p}_n$ to $\check{q}^i_1$ and wins.

The only remaining case is when Spoiler moves from $p_n$ to $q^0_1$; the only move that does not have an obvious counterpart in $\mathcal{\check K}$. Since $n>k$ and $|\vec{a}|\leq k$, there exists an $i\in\{1,\ldots,n\}$ such that $\vec{a}$ contains no atom from the closed interval $[a_i,b_i]$. Then there exists an atom automorphism $\pi$ (i.e., a monotone bijection on the total order of rational numbers) such that:
\begin{itemize}
\item $\vec{a}\cdot\pi =\vec{a}$, that is, $\pi(c)=c$ for each $c\in\vec{a}$,
\item $\pi(b_j)=b_j$ for $j\neq i$, and
\item $\pi(b_i)=a_i$.
\end{itemize}
Having fixed such a $\pi$, Duplicator responds to Spoiler's move by moving from $\check{p}_n$ to $\check{q}^i_1$. Note that the target position:
\[
	\langle q^0_1,\vec{a},\check{q}^i_1,\vec{a} \rangle
\]
is legal. For this we need to check that:
\[
	\pair{\pred(q^0_1)}{\vec{a}}\sim\pair{\pred(\check{q}^i_1)}{\vec{a}}.
\]
By definition, $\pred(q^0_1)=\{b_1\}$. If $i\neq 1$ then $\pred(\check{q}^i_1)=\{b_1\}$ and the condition holds trivially. If $i=1$ then $\pred(\check{q}^1_1)=\{a_1\}$ and the condition holds since $\vec{a}$ does not contain any atom from the interval $[a_1,b_1]$.

Afterwards, Duplicator copies Spoiler's moves translating them along the automorphism $\pi$. More specifically, throughout Phase II:
\begin{itemize}
\item if Spoiler makes a model move from $q^0_j$ to $q^0_{j+1}$, Duplicator responds by moving from $\check{q}^i_j$ to $\check{q}^i_{j+1}$ (again, no real choice here) and {\em vice versa},
\item if Spoiler makes an atom replacement move by choosing a new tuple of atoms $\vec{c}$ in the $\mathcal{K}$-component of the game position, Duplicator responds by choosing $\vec{c}\cdot\pi$. If, on the other hand, Spoiler picks a new tuple of atoms $\vec{d}$ in the $\mathcal{\check K}$-component, Duplicator responds by choosing $\vec{d}\cdot\pi^{-1}$.
\end{itemize}
In Phase III the strategy remains similar, except that a Spoiler's model move to a state $r_c$ is matched by Duplicator's move to $\check{r}_{\pi(c)}$, and a Spoiler's move to $\check{r}_c$ is matched by a move to $r_{\pi^{-1}(c)}$. This strategy ensures that only legal positions in the game are visited, and since every move by Spoiler is met with a response, the strategy is winning for Duplicator.

This means that $\pair{p_0}{\epsilon}$ and $\pair{\check{p}_0}{\epsilon}$ are $k$-bisimilar, hence by Theorem~\ref{thm:k-bisim} they satisfy the same globally $k$-supported formulas. Since every formula of the atomic $\mu$-calculus is globally $k$-supported for some $k$, this implies that no formula defines \freshpath.


\section{Future work}\label{sect_conclusions}
We list some interesting aspects of the atomic $\mu$-calculus that we leave for future work.

\smallskip
\noindent
{\bf Complexity issues.} For the decidability results we presented, in particular for the model checking problem over orbit-finite structures, one immediately asks about the complexity of the algorithms proposed. The answer depends on the way one measures the size of input structures. One obvious option is to consider the length of their representation with set-builder expressions and first-order formulas. With this view, most basic operations listed in Remark~\ref{rem:computability} become {\sc Pspace}-hard, because the first-order theory of pure equality is {\sc Pspace}-complete. As a result, the complexity of basic operations dwarfs the distinction between the two algorithmic approaches to model checking based on direct fixpoint computation and on parity games. 

Another approach is to measure orbit-finite structures by the number of their orbits, and the (hereditary) size of their least support. Note that the number of orbits of a set can be exponentially bigger than the size of its logical representation; for example, the number of orbits of the set $\Atoms^n$, whose representation has size linear in $n$, is equal to the $n$-th Bell number. In this view the difference between the two approaches to model checking becomes more prominent.

We defer precise complexity analyses until we have a better general understanding of various time and space complexity models on atomic structures.

\smallskip
\noindent
{\bf Other atoms.} 
To simplify the presentation, in this paper we focus on equality atoms and ordered atoms only. However, the definition of atomic $\mu$-calculus and its basic properties could be transported without much difficulty to other relational structures of atoms subject to some model-theoretic assumptions such as oligomorphicity (i.e. the assumption that $\Atoms^n$ is orbit-finite for every $n$), homogeneity (i.e. the assumption that every isomorphism between two finite substructures of  $\Atoms$ extends to an automorphism of $\Atoms$), and decidability of the first-order theory of $\Atoms$.

This, however, is with some exceptions. Some of our proofs in this paper, in particular the one for undefinability of \freshpath{}, rely on particular atom structures and it is not clear how far they generalize. As a concrete open problem, we leave the question whether \freshpath{} is perhaps definable over atoms from
the universal undirected graph, also known as the {\em random graph} or {\em Rado graph}, where vertices are natural numbers, and an undirected edge $\{n,m\}$ is present if and only if the $n$-th bit in the binary representation of $m$ is $1$ (for $n < m$). 

\smallskip
\noindent
{\bf Defining \freshpath{}.} The fact that \freshpath{} is not definable in atomic $\mu$-calculi is disappointing, since it looks like a property of potential practical importance in system verification. 
Since we know that \freshpath{} is decidable, it is desirable to extend atomic $\mu$-calculus in some well-structured and syntactically economic way that would allow one to define such properties  while preserving the decidability of model checking. The property of ``global freshness'' has been studied in the context of automata with atoms~\cite{tzevelekos}, and one may look for inspiration there. Our proof of Theorem~\ref{thm:freshpath-decid} also suggests some promising options. We leave this for future work.

\bibliographystyle{plain}
\bibliography{main}

\end{document}